\documentclass[11pt]{article}
\usepackage[utf8]{inputenc}

\usepackage{hdb_macros}
\usepackage{fullpage}
\usepackage{mathtools}
\usepackage{newtxtext}
\usepackage{microtype}
\usepackage[normalem]{ulem}

\usepackage[multiuser,inline,nomargin]{fixme}
\FXRegisterAuthor{y}{ey}{\color{teal}Yi}
\FXRegisterAuthor{c}{ec}{\color{blue}Chris}
\FXRegisterAuthor{h}{eh}{\color{purple}Huck}
\fxusetheme{color}

\newcommand{\mat}[1]{#1} %

\newcommand{\alphaeth}{\alpha^\ddagger}
\newcommand{\alphaseth}{\alpha^\dagger}
\newcommand{\kissnum}{c_\mathsf{kn}}
\newcommand{\alphakissnum}{\alpha_\mathsf{kn}}
\newcommand{\betafun}[3][p]{\beta_{#1,#2}(#3)}
\newcommand{\betainv}[3][p]{\beta_{#1,#2}^{-1}(#3)}

\newcommand{\sm}[1][p]{\lambda^{(#1)}}
\newcommand{\binGapCVP}{\problem{CVP}'}
\newcommand{\ratioGA}{400 \alpha}

\newcommand{\latdag}{\lat^{\dagger}}
\newcommand{\Bdag}{B^{\dagger}}
\newcommand{\tdag}{\vec{t}^{\dagger}}

\newcommand*\samethanks[1][\value{footnote}]{\footnotemark[#1]}

\renewcommand{\GapSVP}{\problem{SVP}}
\renewcommand{\GapCVP}{\problem{CVP}}

\title{Improved Hardness of BDD and SVP Under Gap-(S)ETH}
\author{Huck Bennett\thanks{University of Colorado Boulder. Email: \texttt{huck.bennett@colorado.edu}. The majority of this work was completed while the author was at the University of Michigan, and some was completed while the author was at Oregon State University.}\; \samethanks[3]%
    \and Chris Peikert\thanks{University of Michigan. Emails: \texttt{\{cpeikert,yit\}@umich.edu}.}\; \thanks{This material is based upon work supported by the National Science Foundation under Award CCF-2006857. The views expressed are those of the authors and do not necessarily reflect the official policy or position of the National Science Foundation.}
    \and Yi Tang\samethanks[2]\; \samethanks[3]
    }

\date{\today}

\begin{document}

\maketitle

\listoffixmes

\begin{abstract}
We show improved fine-grained hardness of two key lattice problems in the~$\ell_p$ norm: Bounded Distance Decoding to within an $\alpha$ factor of the minimum distance ($\BDD_{p, \alpha}$) and the (decisional) $\gamma$-approximate Shortest Vector Problem ($\GapSVP_{p,\gamma}$),
assuming variants of the Gap (Strong) Exponential Time Hypothesis (Gap-(S)ETH). Specifically, we show:
\begin{enumerate}
    \item \label{en:abs-improved-eth-kn-bdd}
    For all $p \in [1, \infty)$, there is no $2^{o(n)}$-time algorithm for $\BDD_{p, \alpha}$ for any constant $\alpha > \alphakissnum$, where $\alphakissnum = 2^{-\kissnum}$ and $\kissnum$ is the $\ell_2$ kissing-number constant, assuming $\kissnum > 0$ and that non-uniform Gap-ETH holds.
    
    \item \label{en:abs-improved-eth-Zn-bdd}
    For all $p \in [1, \infty)$, there is no $2^{o(n)}$-time algorithm for $\BDD_{p, \alpha}$ for any constant $\alpha > \alphaeth_p$, 
    where $\alphaeth_p$ is explicit and satisfies $\alphaeth_p = 1$ for $1 \leq p \leq 2$, $\alphaeth_p < 1$ for all $p > 2$, and $\alphaeth_p \to 1/2$ as $p \to \infty$, assuming that randomized Gap-ETH holds.
    
    \item \label{en:abs-improved-seth-bdd}
    For all $p \in [1, \infty) \setminus 2 \Z$ and all $C > 1$, there is no $2^{n/C}$-time algorithm for $\BDD_{p, \alpha}$ for any constant $\alpha > \alphaseth_{p, C}$, where $\alphaseth_{p, C}$ is explicit and satisfies $\alphaseth_{p, C} \to 1$ as $C \to \infty$ for any fixed $p \in [1, \infty)$, assuming $\kissnum > 0$ and that non-uniform Gap-SETH holds.
    
    \item \label{en:abs-improved-seth-svp}
    For all $p > p_0 \approx 2.1397$, $p \notin 2\Z$, and all $C > C_p$, there is no $2^{n/C}$-time algorithm for $\GapSVP_{p, \gamma}$ for some constant $\gamma > 1$, where $C_p > 1$ is explicit and satisfies $C_p \to 1$ as $p \to \infty$, assuming that randomized Gap-SETH holds.
\end{enumerate}

An earlier version of this work stated \cref{en:abs-improved-eth-kn-bdd,en:abs-improved-seth-bdd} without the assumption $\kissnum > 0$, which asserts that there exists a family of exponential kissing number lattices, since that claim had been made in prior published work. Unfortunately, that work was later shown to contain a bug, and so \cref{en:abs-improved-eth-kn-bdd,en:abs-improved-seth-bdd} are now conditional.

Our results for $\BDD_{p, \alpha}$ improve and extend work by Aggarwal and Stephens-Davidowitz (STOC, 2018) and Bennett and Peikert (CCC, 2020).
Specifically, the quantities~$\alphakissnum$ and~$\alphaeth_p$ (respectively, $\alphaseth_{p,C}$) (conditionally) improve upon the corresponding quantity~$\alpha_p^*$ (respectively,~$\alpha_{p,C}^*$) of Bennett and Peikert for small~$p$ (but arise from somewhat stronger assumptions).
Finally, \cref{en:abs-improved-seth-svp} answers a natural question left open by Aggarwal, Bennett, Golovnev, and Stephens-Davidowitz (SODA, 2021), which showed an analogous result for the Closest Vector Problem.

\end{abstract}

\newpage

\section{Introduction}

Lattices are geometric objects that look like regular orderings of points in real space. More formally, a lattice~$\lat$ is the set of all integer linear combinations of some linearly independent vectors $\vec{b}_1, \ldots, \vec{b}_n \in \R^m$. 
The matrix $B = (\vec{b}_1, \ldots, \vec{b}_n)$ whose columns are these vectors is called a \emph{basis} of~$\lat$, and we denote the lattice it generates by $\lat(B)$, i.e., $\lat = \lat(B) := \set{\sum_{i=1}^n a_i \vec{b}_i : a_1, \ldots, a_n \in \Z}$. The number of vectors~$n$ in a basis is an invariant of~$\lat$, and is called its \emph{rank}.

In recent years, lattices have played a central role in both cryptanalysis and the design of secure cryptosystems. One very attractive quality of many lattice-based cryptosystems (e.g.,~\cite{conf/stoc/Ajtai98,conf/stoc/AjtaiD97,journals/siamcomp/MicciancioR07,journals/jacm/Regev09,conf/stoc/GentryPV08}) is that they are secure assuming that certain key lattice problems are sufficiently hard to approximate in the \emph{worst case}.
Motivated by this and myriad other applications of lattices in computer science, many works have studied the $\NP$-hardness of both exact and approximate lattice problems (e.g.,~\cite{emde81:_anoth_np,journals/jcss/AroraBSS97,conf/stoc/Ajtai98,journals/siamcomp/Micciancio00,journals/tit/Micciancio01,journals/jcss/Khot06,journals/jacm/Khot05,conf/approx/LiuLM06,journals/toc/HavivR12,journals/toc/Micciancio12}).
More recently, motivated especially by the need to understand the \emph{concrete} security of lattice-based cryptosystems, a number of  works~\cite{conf/focs/BennettGS17,conf/stoc/AggarwalS18,conf/coco/BennettP20,journals/ipl/AggarwalC21,conf/soda/AggarwalBGS21} have studied the \emph{fine-grained} complexity of lattice problems. That is, for a meaningful real-world security bound, it is not enough to say merely that there is no \emph{polynomial-time} algorithm for a suitable lattice problem. Rather, a key goal is to show $2^{\Omega(n)}$-hardness, or even $2^{Cn}$-hardness for some explicit $C > 0$, of a problem under general-purpose complexity-theoretic assumptions, like variants of the (Strong) Exponential Time Hypothesis.

In this work, we extend the latter line of research by showing improved fine-grained complexity results for two key lattice problems,
the Bounded Distance Decoding Problem ($\BDD$) and the Shortest Vector Problem ($\GapSVP$). To define these problems, we first recall some notation.
Let $\lambda_1(\lat) := \min_{\vec{v} \in \lat \setminus \set{\vec{0}}} \norm{\vec{v}}$ denote the \emph{minimum distance} of $\lat$, i.e., the length of a shortest non-zero vector in $\lat$, and let $\dist(\vec{t}, \lat) := \min_{\vec{v} \in \lat} \norm{\vec{t} - \vec{v}}$ denote the distance between a target vector~$\vec{t}$ and~$\lat$. When using the~$\ell_p$ norm, we denote these quantities by $\sm_1(\lat)$ and $\dist_p(\vec{t}, \lat)$, respectively.

\paragraph{BDD and SVP.}
The Bounded Distance Decoding Problem in the $\ell_p$ norm for relative distance~$\alpha$, denoted $\BDD_{p, \alpha}$, is the search promise problem defined as follows: given a basis~$B$ of a lattice $\lat = \lat(B)$ and a target vector $\vec{t}$ satisfying $\dist_p(\vec{t}, \lat) \leq \alpha \cdot \sm_1(\lat)$ as input, the goal is to find a closest lattice vector $\vec{v} \in \lat$ to the target vector $\vec{t}$ such that $\norm{\vec{t} - \vec{v}}_p = \dist_p(\vec{t}, \lat)$. (We note that~$\vec{v}$ is guaranteed to be unique when $\alpha < 1/2$, but that $\BDD_{p, \alpha}$ is well-defined for any $\alpha = \alpha(n) > 0$.)
The $\gamma$-approximate Shortest Vector Problem in the $\ell_p$ norm, denoted $\GapSVP_{p, \gamma}$, is the decision promise problem defined as follows: given a basis~$B$ of a lattice~$\lat = \lat(B)$ and a distance threshold $r > 0$ as input, the goal is to decide whether $\sm_1(\lat) \leq r$ (i.e., the input is a YES instance) or $\sm_1(\lat) > \gamma r$ (i.e., the input is a NO instance), with the promise that one of the two cases holds.\footnote{In other literature, this decision problem is often called $\problem{GapSVP}_{p,\gamma}$, whereas $\problem{SVP}_{p,\gamma}$ usually denotes the corresponding \emph{search} problem (of finding a nonzero lattice vector $\vec{v} \in \lat$ for which $\norm{\vec{v}} \leq \gamma \cdot \sm_1(\lat)$, given an arbitrary lattice~$\lat$.) There is a trivial reduction from the decision problem to the search problem, so any hardness of the former implies identical hardness of the latter.}

Although it seems far out of reach using known techniques, proving that $\GapSVP_{p, \gamma}$ is hard for a sufficiently large polynomial approximation factor $\gamma = \gamma(n)$, or that $\BDD_{p, \alpha}$ is hard for sufficiently small inverse-polynomial relative distance $\alpha = \alpha(n)$, would imply the provable security of lattice-based cryptography.\footnote{We note that the relative distance~$\alpha$ in $\BDD_{p, \alpha}$ is not an approximation factor \emph{per se}, but it is analogous to one in a precise sense. Namely, for $p = 2$ there is a rank-preserving reduction from $\BDD_{2, \alpha}$ to $\GapSVP_{2, \gamma}$ with $\gamma = O(1/\alpha)$~\cite{conf/crypto/LyubashevskyM09,conf/icalp/BaiSW16}, so sufficiently strong (fine-grained) hardness of the former problem translates to corresponding hardness for the latter problem. A similar reduction holds in reverse, but with a larger loss: $\alpha = \Omega(\sqrt{n/\log n}/\gamma)$.}
On the other hand, most concrete security estimates for lattice-based cryptosystems
are based on the runtimes of the fastest known (possibly heuristic) algorithms for exact or near-exact $\GapSVP$. %
So, apart from its inherent theoretical interest, understanding the fine-grained complexity of (near-)exact $\GapSVP$ and $\BDD$ sheds light on questions of great practical importance.

\paragraph{An addendum on exponential kissing number lattices.}
An earlier version of this work claimed that \cref{en:abs-improved-eth-kn-bdd,en:abs-improved-seth-bdd} in the abstract, corresponding to \cref{thm:gapeth-bdd-kn-informal,thm:gapseth-bdd-informal} below, held unconditionally. These claims were made based on the prior published works~\cite{Vladut19,Vladut-lpkissnum-2021}, which asserted that exponential kissing number lattices in the $\ell_2$ norm (and other $\ell_p$ norms) existed. I.e., using our notation,~\cite{Vladut19} asserted that $\kissnum > 0$.

Unfortunately,~\cite{bennett2024difficulties} showed that both of the works~\cite{Vladut19,Vladut-lpkissnum-2021} contained serious bugs, and both of those works were retracted. To the best of our knowledge, the core results in this work are (still) completely correct.
However, now the hardness results in \cref{thm:gapeth-bdd-kn-informal,thm:gapseth-bdd-informal} are conditional since they require that $\kissnum > 0$.
Additionally, some of the quantitative hardness results we derived as corollaries from the claimed explicit lower bound on $\kissnum$ in~\cite{Vladut19} are no longer known to hold.
(We emphasize that although the \emph{proofs} in~\cite{Vladut19,Vladut-lpkissnum-2021} were wrong, the \emph{claims made}, which are necessary to invoke our \cref{thm:gapeth-bdd-kn-informal,thm:gapseth-bdd-informal}, are plausibly---and perhaps even likely---true.)

\subsection{Our Results}

\begin{figure}[t]
    \centering
    \includegraphics[height=1.9in]{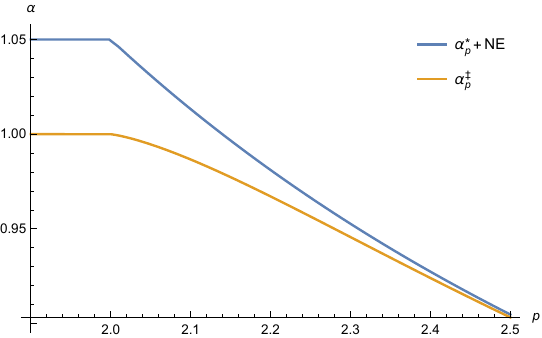}
    \caption{A plot showing the smallest relative distance~$\alpha$ for which $2^{\Omega(n)}$-hardness of $\BDD_{p, \alpha}$ is known under variants of ETH. $\alphaeth_p$ is from this work, and ``$\alpha^*_p$ + NE''---i.e., $\alpha_p^*$ with norm embeddings---is from~\cite{conf/coco/BennettP20}. A proof that $\kissnum > 0$ would immediately show additional improved hardness results for $\BDD_{p, \alpha}$ under variants of ETH.}
    \label{fig:alpha-plots}
\end{figure}

In this work, we show improved fine-grained hardness of $\BDD_{p, \alpha}$ and $\GapSVP_{p, \gamma}$, with an emphasis on results for smaller relative distance~$\alpha$ and larger approximation factor~$\gamma$, and on analyzing the complexity of the problems as the underlying~$\ell_p$ norm varies.
(We note that $\BDD_{p, \alpha'}$ trivially reduces to $\BDD_{p, \alpha}$ when $\alpha' < \alpha$, and so showing hardness results for $\BDD_{p, \alpha}$ for smaller $\alpha$ is showing something stronger.)

At a conceptual level, our work gives very general reductions to $\BDD$ (presented in \cref{thm:generic-bdd}), which reduce the task of showing hardness for $\BDD$ to analyzing properties of certain gadget lattices (described in \cref{subsubsec:intro-locally-dense}). The few known hardness results for $\BDD$ (essentially just \cite{conf/approx/LiuLM06,conf/coco/BennettP20} and this work) are all shown using this gadget lattice framework, but the previous works required separate reductions. The reductions in this work give a unified way to show hardness results using this framework.

At a technical level, our improved results for $\BDD$ follow from improved analysis of the techniques used in~\cite{conf/stoc/AggarwalS18} and~\cite{conf/coco/BennettP20} together with our new framework. 
Aggarwal and Stephens-Davidowitz~\cite{conf/stoc/AggarwalS18} presented three main results on the fine-grained hardness of $\GapSVP$, summarized in Items~1 to 3 in its abstract.
Bennett and Peikert~\cite{conf/coco/BennettP20} showed new hardness results for $\BDD$ by refining and adapting the analysis used to show~\cite{conf/stoc/AggarwalS18}, Item~1. Analogously, in this work we obtain two of our hardness results for $\BDD$ by refining and adapting the analysis used to show~\cite{conf/stoc/AggarwalS18}, Items~2 and 3. Specifically, \cref{thm:gapeth-bdd-kn-informal} corresponds to~\cite{conf/stoc/AggarwalS18}, Item~3 and \cref{thm:gapeth-bdd-int-informal} to~\cite{conf/stoc/AggarwalS18}, Item~2.
Our third hardness result for $\BDD$, presented in \cref{thm:gapseth-bdd-informal}, uses ideas from the other reductions together with our new framework.
Finally, our improved result for $\GapSVP$, presented in \cref{thm:gapseth-svp-informal}, answers a natural question left open by~\cite{conf/stoc/AggarwalS18,conf/soda/AggarwalBGS21}.

Our results assume (randomized or non-uniform) ``gap'' variants of the Exponential Time Hypothesis (ETH) and Strong Exponential Time Hypothesis (SETH). Recall that ``plain'' ETH asserts that solving the $3$-SAT problem on~$n$ variables requires $2^{\Omega(n)}$ time, and ``plain'' SETH asserts that for every $\eps > 0$ there exists $k \in \Z^+$ such that solving the $k$-SAT problem on $n$ variables requires $2^{(1-\eps)n}$-time.
The gap variants of these assumptions hypothesize that similar runtime lower bounds hold even for approximating the number of satisfiable clauses in a $k$-SAT formula to within some small constant approximation factor; see \cref{subsec:fine-grained-assumptions} for the precise definitions. We sometimes informally use the terminology ``(Gap-)ETH-hardness'' to denote $2^{\Omega(n)}$-hardness of a problem assuming a variant of (Gap-)ETH, and ``(Gap-)SETH-hardness'' to denote $2^{c n}$-hardness of a problem assuming a variant of (Gap-)SETH for some explicit constant $c > 0$.

\subsubsection{Hardness for BDD}

Our first result shows conditional improved exponential hardness of $\BDD_{p, \alpha}$ for all sufficiently small values of $p$, including the important Euclidean case of $p = 2$, assuming $\kissnum > 0$ and a variant of Gap-ETH. Indeed, assuming that $\kissnum > 0$, it improves the smallest value of $\alpha$ for which exponential hardness of $\BDD_{2, \alpha}$ is known under a general-purpose complexity-theoretic assumption to $\alpha < 2^{-\kissnum}$, showing such hardness for an
explicit\footnote{Each of the quantities $\alphakissnum$, $\alphaeth_p$, $\alphaseth_{p, C}$, $\alpha_{p, C}^*$, $\alpha_p^*$, and $C_p$ described in this section is ``explicit'' in the sense that it is expressible
via some (not necessarily closed-form) expression. These expressions are easily computed to high accuracy in practice as shown, e.g., in \cref{fig:alpha-plots}. Additionally, we emphasize that these quantities are constants in that they do not depend on the rank $n$ of the lattice in the corresponding problem.
}
constant less than the $\alpha = 1$ threshold for the first time.\footnote{Using the ideas in this paper and~\cite{conf/stoc/AggarwalS18}, showing exponential hardness of $\SVP$ essentially corresponds to showing exponential hardness of $\BDD$ with $\alpha = 1 - \eps$ for some constant $\eps > 0$. 
Additionally, using the $\BDD$-hardness framework in this paper, it would be relatively straightforward to show exponential hardness of $\BDD$ with $\alpha = 1 + \eps$ for any constant $\eps > 0$. So, the $\alpha = 1$ threshold is qualitatively quite natural, and trying to show hardness for an explicit constant $\alpha < 1$ is a natural goal.}

\begin{restatable}[Gap-ETH-hardness of BDD, first bound] {theorem}{gapethbddkn}\label{thm:gapeth-bdd-kn-informal}
    For all $p \in [1, \infty)$, there is no $2^{o(n)}$-time algorithm for $\BDD_{p, \alpha}$ for any constant $\alpha > \alphakissnum := 2^{-\kissnum}$, assuming that the $\ell_2$ kissing-number constant $\kissnum$ (defined in \cref{sec:gapeth-bdd-kn}) satisfies $\kissnum > 0$ and that non-uniform Gap-ETH holds.
\end{restatable}

Our second result shows improved exponential hardness of $\BDD_{p, \alpha}$ in a different regime and under a somewhat weaker assumption.

\begin{restatable}[Gap-ETH-hardness of BDD, second bound]{theorem}{gapethbddint} \label{thm:gapeth-bdd-int-informal}
    For all $p \in [1, \infty)$, there is no $2^{o(n)}$-time algorithm for $\BDD_{p, \alpha}$ for any constant $\alpha > \alphaeth_p$, assuming that randomized Gap-ETH holds. Here $\alphaeth_p$ is an explicit constant, defined in \cref{eq:alpha-eth}, which satisfies $\alphaeth_p = 1$ for $1 \leq p \leq 2$, $\alphaeth_p < 1$ for all $p > 2$, and $\alphaeth_p \to 1/2$ as $p \to \infty$.
\end{restatable}

In~\cite{conf/stoc/AggarwalS18}, Aggarwal and Stephens-Davidowitz showed SETH-hardness of $\GapSVP_{p, 1}$ for all $p > p_0 \approx 2.1397$.
To partially overcome the ``$p_0$ barrier,'' they generalized their proof techniques to show Gap-ETH-hardness of $\GapSVP_{p, \gamma}$ for all $p > 2$.
The results in \cite{conf/coco/BennettP20} adapted the former techniques of~\cite{conf/stoc/AggarwalS18} to show SETH-hardness of $\BDD$, and similarly got stuck at the $p_0$ barrier in the sense that they could not prove hardness of $\BDD_{p, \alpha}$ with $p < p_0$ and $\alpha < 1$ simultaneously.
The proof of \cref{thm:gapeth-bdd-int-informal} can be thought of as adapting the latter, generalized techniques of~\cite{conf/stoc/AggarwalS18} to $\BDD$, and analogously allows us to prove Gap-ETH-hardness of $\BDD_{p, \alpha}$ with $\alpha < 1$ for all $p > 2$.

Compared to related quantities, $\alphaeth_p$ is: at most ``$\alpha_p^*$ with norm embeddings'' for all $p \in [1, \infty)$; strictly less than $\alpha_p^*$ for all sufficiently small $p$; strictly less than $\alphakissnum$ for all sufficiently large $p$; and strictly less than the minimum of $\alpha_p^*$ and $\alphakissnum$ for intermediate values of $p$.
That is, $\alphaeth_p$ improves on both~$\alpha_p^*$ (even with norm embeddings) and~$\alphakissnum$ for intermediate values of~$p$; again, see the left plot in \cref{fig:alpha-plots}.
(Recall that~\cite{conf/coco/BennettP20} shows exponential hardness of $\BDD_{p, \alpha}$ for $\alpha > \alpha_p^*$ assuming randomized ETH, and \cref{thm:gapeth-bdd-kn-informal} above shows such hardness for $\alpha > \alphakissnum$ assuming non-uniform Gap-ETH.)
However, \cref{thm:gapeth-bdd-int-informal} relies on a somewhat stronger hardness assumption than the one used in~\cite{conf/coco/BennettP20}, and a somewhat weaker hardness assumption than \cref{thm:gapeth-bdd-kn-informal}, so the prior and new results are formally incomparable.

Our third result shows conditional $2^{n/C}$-hardness of $\BDD_{p, \alpha}$ for any $C > 1$ and $\alpha > \alphaseth_{p, C}$, where $\alphaseth_{p, C}$ is an explicit constant. As with \cref{thm:gapeth-bdd-kn-informal}, this result requires that $\kissnum > 0$.

\begin{restatable}[Gap-SETH-hardness of BDD]{theorem}{gapsethbdd}
    \label{thm:gapseth-bdd-informal}
    For all $p \in [1, \infty) \setminus 2 \Z$ and all $C > 1$, there is no $2^{n/C}$-time algorithm for $\BDD_{p, \alpha}$ for any constant $\alpha > \alphaseth_{p, C}$, assuming $\kissnum > 0$ and that non-uniform Gap-SETH holds. Here $\alphaseth_{p, C}$ is an explicit constant, defined in \cref{eq:alpha-seth} in terms of $\kissnum$, which satisfies $\alphaseth_{p, C} \to 1$ as $C \to \infty$ for any fixed $p \in [1, \infty)$.
\end{restatable}

We again note that \cref{thm:gapseth-bdd-informal,thm:gapeth-bdd-kn-informal} were claimed unconditionally in a prior version of this work based on the prior published work~\cite{Vladut19}, which was subsequently retracted. We also remark that assuming non-uniform variants of Gap-(S)ETH in \cref{thm:gapseth-bdd-informal,thm:gapeth-bdd-kn-informal} is essentially only necessary because it is not clear that (bases of) exponential kissing number lattices can be computed efficiently, even if they exist. If (bases of) such lattices not only existed but were in fact efficiently computable, then it would likely be possible to reprove our theorem statements under weaker \emph{randomized} Gap-(S)ETH assumptions (as opposed to \emph{non-uniform} Gap-(S)ETH assumptions).

\subsubsection{Hardness for SVP}

Our final result shows the same strong runtime lower bounds for $\GapSVP_{p, \gamma}$ with some constant $\gamma > 1$ under (randomized) Gap-SETH as~\cite{conf/stoc/AggarwalS18} showed for $\GapSVP_{p, 1}$ under (randomized) SETH.
 This answers a natural question left open by~\cite{conf/soda/AggarwalBGS21}, which analogously showed the same runtime lower bounds for $\GapCVP_{p, \gamma}$ with some constant $\gamma > 1$ under Gap-SETH as~\cite{conf/focs/BennettGS17} showed for the Closest Vector Problem ($\GapCVP$) under SETH.

\begin{restatable}[Gap-SETH-hardness of SVP]{theorem}{gapsethsvp} \label{thm:gapseth-svp-informal}
   For all $p > p_0 \approx 2.1397$, $p \notin 2\Z$, and all $C > C_p$, there is no $2^{n/C}$-time algorithm for $\GapSVP_{p, \gamma}$ for some constant $\gamma > 1$, assuming that randomized Gap-SETH holds. Here $C_p > 1$ is an explicit constant, defined in \cref{eq:Cp-def}, which satisfies $C_p \to 1$ as $p \to \infty$.
\end{restatable}

The reduction used to prove \cref{thm:gapseth-svp-informal} is itself a natural modification of the reduction used in~\cite{conf/stoc/AggarwalS18} to prove SETH-hardness of exact $\GapSVP$, but its analysis is more nuanced. We emphasize that simply plugging an instance of $\binGapCVP_{p, \gamma}$ with $\gamma > 1$ rather than $\gamma = 1$ into the reduction of~\cite{conf/stoc/AggarwalS18} does not yield corresponding hardness of approximation for $\GapSVP_{p, \gamma'}$ with $\gamma' > 1$; a modified reduction is necessary. Finally, we remark that the somewhat odd-looking $p \notin 2\Z$ requirement in \cref{thm:gapseth-bdd-informal,thm:gapseth-svp-informal} is an artifact of the ``upstream'' hardness results we employ for $\binGapCVP_{p, \gamma}$; see \cref{thm:gapseth-cvp}.

\subsection{Our Techniques}
\subsubsection{Locally Dense Lattices}
\label{subsubsec:intro-locally-dense}

As in nearly all prior work on the complexity of $\BDD$ and $\GapSVP$ (e.g.,~\cite{journals/tit/Micciancio01,journals/jacm/Khot05,conf/approx/LiuLM06,conf/stoc/AggarwalS18,conf/coco/BennettP20}), a key component of our results is the construction of a family of ``locally dense'' lattices, which are specified by a lattice $\lat^{\dagger}$ and corresponding target vector $\vec{t}^{\dagger}$. For our purposes, a locally dense lattice $\lat^{\dagger}$ is one with few ``short'' vectors, many vectors ``close'' to $\vec{t}^{\dagger}$, but few vectors ``too close'' to $\vec{t}^{\dagger}$.
(Other works such as~\cite{conf/coco/Micciancio14} define locally dense lattices in a closely related but different way, e.g., without the requirement of few ``too close'' vectors.)

For a discrete set $S$, which we will take to be a lattice or a subset of a lattice, define 
\begin{align*}
    N_p(S, r, \vec{t}) &:= \abs{\set{\vec{x} \in S : \norm{\vec{t} - \vec{x}}_p \le r}}
    \ \text, \\
    N_p^o(S, r, \vec{t}) &:= \abs{\set{\vec{x} \in S : \norm{\vec{t} - \vec{x}}_p < r}}
    \ \text.
\end{align*}
Somewhat more formally, we define a locally dense lattice~$\latdag$, $\tdag$ with relative distance~$\alpha_G$ in the~$\ell_p$ norm to be one for which
\begin{equation}
    N_p(\latdag, \alpha_G, \tdag) \geq \nu^n \cdot N_p^o(\latdag, 1, \vec{0})
    \label{eq:close-vs-short-intro}
\end{equation} for some $\nu > 1$. That is, $\latdag$, $\tdag$ is such that the number $G$ of ``close vectors'' (within distance $\alpha_G$ of $\tdag$) is an exponential factor larger than the number of short vectors (of norm at most one). Similarly, we require there to be an exponential factor more close vectors than ``too close'' vectors, along with some other technical conditions. We defer discussing these issues until the main body of the paper, and for the remainder of the introduction focus on the constraint in \cref{eq:close-vs-short-intro}.

A crux in showing hardness of 
$\BDD_{p, \alpha}$ and $\GapSVP_{p, \gamma}$ is constructing good locally dense lattices, and their parameters govern the precise hardness results that we can obtain. A family of locally dense lattices with smaller relative distance~$\alpha_G$
and larger $\nu$ leads to stronger hardness results.
To obtain ETH-type hardness results, we simply need $\nu$ to be a constant greater than $1$, and then we can show $2^{\Omega(n)}$-hardness of $\BDD_{p, \alpha}$ for any constant $\alpha > \alpha_G$.
For SETH-type hardness results, we get $2^{n/C}$-hardness of $\BDD_{p, \alpha}$ whenever our reduction to $\BDD$ has a multiplicative rank-increase factor of~$C$.
The value of $C$ depends on the gap factor $\nu$ in \cref{eq:close-vs-short-intro}, so to show such hardness for explicit $C > 0$ we need an explicit lower bound on $\nu$.
Our reductions also give a tradeoff between~$C$ and~$\alpha$, as shown in the right plot in \cref{fig:alpha-plots}.
The full situation is actually a bit more complicated when taking ``too close'' vectors into account, but we again defer discussing this for now.
The situation for $\GapSVP$ is similar to the situation for $\BDD$.

\subsubsection{Sparsification}

An important technique in our work is randomized lattice sparsification, an efficient algorithm that essentially does the following.
Given (a basis of) a lattice $\lat$ and an index $q \in \Z^+$ as input, the algorithm randomly samples a sublattice $\lat' \subseteq \lat$ such that for any fixed, finite set of lattice points $S \subseteq \lat$ satisfying some mild conditions, $\card{S \cap \lat'} \approx \card{S}/q$ with probability near $1$.
A variant of this algorithm, additionally given $\vec{t} \in \lspan(\lat)$ as input, randomly samples $\lat' \subseteq \lat$ and~$\vec{t}'$ such that for any fixed, finite set of points $S \subseteq \lat - \vec{t}$ satisfying some mild conditions, $\card{S \cap (\lat' - \vec{t}')} \approx \card{S}/q$ with probability near $1$.

Intuitively, some mild caveats aside, sparsification says that a lattice with few short vectors (and few ``too close'' vectors) is just as good as a lattice with \emph{no} short non-zero vectors (and no ``too close'' vectors), since the latter can be efficiently obtained from the former.
Indeed, sparsifying with index $q \approx N_p^o(\lat, r, \vec{0})$ allows us to turn a lattice~$\lat$ and target~$\vec{t}$ satisfying, say, $N_p(\lat, \alpha r, \vec{t}) \geq 100 \cdot N_p^o(\lat, r, \vec{0})$ into a lattice~$\lat'$ and target~$\vec{t}'$ with $N_p(\lat', \alpha r, \vec{t}') \geq 1$ and $N_p^o(\lat' \setminus \set{\vec{0}}, r, \vec{0}) = 0$, so $\dist_p(\vec{t}', \lat) \leq \alpha r$ and $\sm_1(\lat') \geq r$.
That is, the output $\lat'$, $\vec{t}'$ satisfies the $\BDD$ promise $\dist_p(\vec{t}', \lat) \leq \alpha \cdot \sm_1(\lat')$.
See \cref{subsec:sparsification} for a formal description of sparsification.

\subsubsection{A Transformation Using Locally Dense Lattices}

Define 
$\binGapCVP_{p, \gamma}$ to be the following variant of the decision version of the $\gamma$-approximate Closest Vector Problem: given a basis~$B$ of a rank-$n$ lattice~$\lat$ and a target vector~$\vec{t}$ as input, decide whether there exists $\vec{x} \in \bit^n$ such that $\norm{B\vec{x} - \vec{t}}_p \leq 1$ (i.e., the input is a YES instance), or whether $\dist_p(\vec{t}, \lat) > \gamma$ (i.e., the input is a NO instance), under the promise that one of the two cases holds. In other words, $\binGapCVP$ is the variant of $\GapCVP$ that asks whether there is a \emph{binary} combination of basis vectors close to the target. Much is known about the (fine-grained) complexity of $\binGapCVP$, 
which will be useful for us (see \cref{thm:gapeth-cvp,thm:gapseth-cvp}).

Our reductions from $\binGapCVP$ to $\BDD$ and to $\GapSVP$ have the same basic form. Given a rank-$n'$ instance $B',\vec{t}'$ of $\binGapCVP_{p, \gamma}$ for some $\gamma > 1$, we apply the following transformation with some scaling factors $s, \ell > 0$ and some locally dense lattice $\latdag = \lat(\Bdag)$ of rank $n - n'$ with target $\tdag$ satisfying \cref{eq:close-vs-short-intro}:
\begin{equation}
    \mat{B} := \begin{pmatrix} s \mat{B}' & \mat{0} \\ \mat{I}_{n'} & \mat{0} \\ \mat{0} & \ell \mat{B}^\dagger \end{pmatrix}
    \ \text, \qquad
    \vec{t} := \begin{pmatrix} s \vec{t}' \\ \tfrac{1}{2} \vec{1}_{n'} \\ \ell \vec{t}^\dagger \end{pmatrix}
    \ \text.
\label{eq:transformation-intro}
\end{equation}
Essentially the same transformation appears in both~\cite{conf/stoc/AggarwalS18} and~\cite{conf/coco/BennettP20}, and similar ideas appear in a number of works before that. Our work differs in its constructions of locally dense lattice gadgets $(\latdag$, $\tdag)$, its more general reductions, and its improved analysis.

Here we give a rough analysis of the transformation using two observations.
First, we observe that appending $I_{n'}$ to $B'$ allows us to upper bound the number of short lattice vectors
in $\lat(B)$ by
\begin{equation}
    N_p^o(\lat(B), r', \vec{0}) \leq N_p^o(\Z^{n'} \oplus \lat(\ell B^{\dagger}), r', \vec{0})
    \label{eq:intro-short-vec-ub}
\end{equation}
for any $r' > 0$.
Second, suppose that $B'$, $\vec{t}'$ is a YES instance of $\binGapCVP$.
Then there exists $\vec{x} \in \bit^{n'}$ such that $\norm{B' \vec{x} - \vec{t}'} \leq 1$, and hence for each $\vec{y} \in \Z^{n-n'}$ with $\norm{B^{\dagger}\vec{y} - \vec{t}^{\dagger}}_p \leq \alpha_G$ we get that $\norm{B(\vec{x}, \vec{y}) - \vec{t}}_p \leq r$, where $r := (s^p + n'/2^p + (\alpha_G \cdot \ell)^p)^{1/p}$.
So,
\begin{equation}
    N_p(\lat(B), r, \vec{t}) \geq N_p(\lat(B^{\dagger}), \alpha_G, \vec{t}^{\dagger}) \ \text.
    \label{eq:intro-close-vec-lb}
\end{equation}

To transform a YES instance of $\binGapCVP_{p, \gamma}$ to a valid instance of $\BDD_{p, \alpha}$ for some $\alpha > 0$, it essentially suffices to set the parameters $r, s, \ell$ and use suitable $\Bdag$, $\tdag$ so that, say,
\begin{equation}
N_p(\lat(B^{\dagger}), \alpha_G, \vec{t}^{\dagger}) \geq 100 \cdot N_p^o(\Z^{n'} \oplus \lat(\ell B^{\dagger}), r/\alpha, \vec{0}) \ \text.
\label{eq:lld-goal-intro}
\end{equation}
Indeed, if \cref{eq:lld-goal-intro} holds, then by \cref{eq:intro-short-vec-ub,eq:intro-close-vec-lb}, $N_p(\lat(B), r, \vec{t}) \geq 100 \cdot N_p^o(\lat(B), r/\alpha, \vec{0})$.
We can then sparsify $\lat(B)$ to obtain a lattice with no non-zero vectors of norm less than~$r/\alpha$, and at least one vector within distance $r$ of~$\vec{t}$, as needed.

We recall that by assumption, $N_p(\lat(B^{\dagger}), \alpha_G, \vec{t}^{\dagger}) \geq \nu^{n-n'}$, which is important for satisfying \cref{eq:lld-goal-intro} since $N_p^o(\Z^{n'} \oplus \lat(\ell B^{\dagger}), r/\alpha, \vec{0})$ will typically be exponentially large in $n'$.
We also need that if the input $\binGapCVP$ instance is a NO instance, then there will be few vectors in~$\lat(B)$ that are close to~$\vec{t}$, which depends on~$\latdag$ having few vectors ``too close'' to $\tdag$, but again we defer discussing this. See \cref{lem:reduction-gadget} for a precise description of the useful properties of the transformation given in \cref{eq:transformation-intro}.

When reducing to $\GapSVP_{p, \gamma'}$ instead of $\BDD_{p, \alpha}$ we apply a further transformation to $B, \vec{t}$ before sparsifying. Namely, we apply Kannan's embedding, which appends the vector $(\vec{t}, u)$, for some value $u > 0$, to~$B$ to obtain a new basis:
\[
B, \vec{t}, u \mapsto \begin{pmatrix} B & \vec{t} \\ 0 & u \end{pmatrix} \ \text.
\]
The analysis in this case is a bit more subtle as well---we need to upper bound quantities of the form $N_p(\lat(B), (r^p - (w u)^p)^{1/p}, w \cdot (\vec{t}, u))$ not just for $w = 0, 1$ (corresponding to short and ``too close'' vectors in the $\BDD$ case, respectively) but for all integers $w \geq 2$ too---but the idea is similar. 
In fact, we use a result from~\cite{conf/stoc/AggarwalS18} (presented in \cref{thm:agcvp-to-svp}) that analyzes the combination of Kannan's embedding and sparsification already, essentially reducing our task to bounding the quantities $N_p(\lat(B), (r^p - (w u)^p)^{1/p}, w \cdot (\vec{t}, u))$.

\subsubsection{Specific Locally Dense Lattices}
We conclude this summary of techniques by describing the specific locally dense lattices $\lat^{\dagger}, \vec{t}^\dagger$ 
that we use to instantiate \cref{eq:transformation-intro}.
We use two main families of locally dense lattices for our results. 

\paragraph{Exponential kissing number lattices.}
The first family of locally dense lattices is derived from a family of ``exponential kissing number'' lattices $\set{\lat_n}_{n=1}^{\infty}$, assuming such a family of lattices exists. More precisely, we call $\set{\lat_n}_{n=1}^{\infty}$ a family of exponential kissing number lattices if for every $n \in \Z^+$, $\lat_n$ is of rank~$n$ and has exponential Euclidean kissing number, i.e., $N_2(\lat_n, \lambda_1(\lat_n), \vec{0}) = 2^{\kissnum n - o(n)}$ for some constant $\kissnum > 0$.

We recall that it is not currently known whether $\kissnum > 0$ (this was claimed in~\cite{Vladut19}, however that work contained a bug).
Previously,~\cite{conf/stoc/AggarwalS18} showed how to use the existence of such a family to prove $2^{\Omega(n)}$-hardness of $\GapSVP_{p, \gamma}$ for all $p \geq 1$ and some $\gamma > 1$ (and in particular, for $1 \leq p \leq 2$, for which the result was not already known from other techniques), assuming non-uniform Gap-ETH.

The proofs of \cref{thm:gapeth-bdd-kn-informal,thm:gapseth-bdd-informal} both use a family of exponential kissing number lattices (assuming that such a family exists) to construct locally dense lattices, but in different ways.
The proof of \cref{thm:gapeth-bdd-kn-informal} constructs a locally dense lattice $\latdag$, $\tdag$ with relative distance $\alpha \approx 2^{-\kissnum}$, but with a non-explicit lower bound on the gap factor $\nu$ in \cref{eq:close-vs-short-intro}. The proof of \cref{thm:gapseth-bdd-informal} constructs a locally dense lattice $\latdag$, $\tdag$ with relative distance $\alpha \approx 1$ but with an explicit lower bound on $\nu$---essentially $\nu \geq 2^{\kissnum}$. The values $\alphaseth_{p, C}$ in \cref{thm:gapseth-bdd-informal} are defined to be a certain quantity relating the maximum possible kissing number in a lattice of rank $(C - 1)n'$, roughly $2^{\kissnum \cdot (C - 1)n'}$, and the number of vectors in $\Z^{n'}$ of norm at most $r$ for some $r > 0$.

The proof of \cref{thm:gapseth-bdd-informal} is actually somewhat simpler than that of \cref{thm:gapeth-bdd-kn-informal}, so we first give a bit more detail on it. We note that taking $\lat^{\dagger} := \lat_{n}$ and $\vec{t}^{\dagger} := \vec{0}$ for a exponential kissing number lattice $\lat_{n}$ \emph{almost} yields a locally dense lattice family with $\nu^{n-o(n)}$ many close vectors for $\nu \geq 2^{\kissnum}$ and relative distance $\alpha = 1$, but there are two issues:
(1) We assume the existence of exponential kissing number lattices with respect to the~$\ell_2$ norm rather than general~$\ell_p$ norms, and
(2) the origin is itself a ``too close'' vector.

We handle issue~(1) by applying norm embeddings~\cite{conf/stoc/RegevR06} with distortion $(1 + \eps)$ to $\set{\lat_n}_{n=1}^{\infty}$ to obtain a family of lattices $\set{\lat'_n}_{n=1}^{\infty}$ with exponentially high ``handshake number'' in the $\ell_p$ norm: \[ N_p(\lat'_n, (1 + \eps) \cdot \sm_1(\lat'_n), \vec{0}) \geq 2^{\kissnum n - o(n)} \ \text, \] where applying the norm embedding is efficient for any $\eps \geq 1/\poly(n)$. 
We handle issue~(2) by ``sparsifying away the origin.'' 
Specifically, for all sufficiently large $n$ we show how to sample a
sublattice $\lat_n'' \subseteq \lat_n'$ and $\vec{t}'' \in \lat' \setminus \lat''$ satisfying \[ N_p(\lat''_n, (1 + \eps) \cdot \sm_1(\lat''_n), \vec{t}'') \geq N_p(\lat'_n, (1 + \eps) \cdot \sm_1(\lat''_n), \vec{0})/4 \geq 2^{\kissnum n - o(n)} \] with positive probability. In particular, this shows that such lattices exist.

For the proof of \cref{thm:gapeth-bdd-kn-informal}, we take $\lat^{\dagger}$ to be $\lat_{n}$ scaled so that $\lambda_1(\lat_n) = 1$, and take $\vec{t}^{\dagger}$ to be a uniformly random vector of norm $\delta$ for some appropriately chosen constant $0 < \delta < 1$. We then analyze $N_2(\latdag, (1 - \eps) \cdot \lambda_1(\latdag), \tdag)$ for some appropriately chosen constant $0 < \eps < \delta$. Intuitively, there is a tradeoff between choosing smaller $\delta$, which makes $N_2(\latdag, (1 - \eps) \cdot \lambda_1(\latdag), \tdag)$ larger but requires $\eps$ to be smaller, and larger $\delta$, which makes $N_2(\latdag, (1 - \eps) \cdot \lambda_1(\latdag), \tdag)$ smaller but allows for $\eps$ to be larger. The relative distance $\alpha$ that our reduction achieves is essentially the smallest $\alpha = 1 - \eps$ for which we can ensure that $N_2(\latdag, (1 - \eps) \cdot \lambda_1(\latdag), \tdag) \geq 2^{\Omega(n)}$.
To translate these results to general $\ell_p$ norms, we again use norm embeddings. We also use additional techniques for dealing with ``too close'' vectors.

We note that the construction of locally dense lattices from lattices with exponential kissing number in~\cite{conf/stoc/AggarwalS18} does not give explicit bounds on the relative distance $\alpha$ or gap factor $\nu$ achieved; the reduction there essentially just needs $\nu > 1$ and $\alpha = 1 - \eps$ for non-explicit $\eps > 0$.
On the other hand, the construction in \cref{thm:gapeth-bdd-kn-informal} gives an explicit bound on $\alpha$ but not on $\nu$, and the construction in \cref{thm:gapseth-bdd-informal} gives an explicit bound on $\nu$ but with $\alpha > 1$. Therefore, \cref{thm:gapeth-bdd-kn-informal,thm:gapseth-bdd-informal} can be seen as different refinements of the corresponding analysis in~\cite{conf/stoc/AggarwalS18}.
A very interesting question is whether it's possible to get a construction that simultaneously achieves explicit $\nu$ and relative distance $\alpha = 1 - \eps$; our current techniques do not seem to be able to achieve this. Such a construction would lead to new (Gap-)SETH-hardness results for $\GapSVP$.

\paragraph{The integer lattice $\Z^n$.}
The second family of locally dense lattices that we consider, used to prove \cref{thm:gapeth-bdd-int-informal,thm:gapseth-svp-informal}, simply takes $\lat^{\dagger}$ to be the integer lattice $\Z^n$, and $\vec{t}^{\dagger}$ to be the all-$t$s vector for some constant~$t$ (without loss of generality, $t \in [0, 1/2]$):
\[
 \latdag := \Z^n \ \text, \qquad \tdag := t \cdot \vec{1}_n \ \text.
\]
This family was also used in~\cite{conf/stoc/AggarwalS18} and~\cite{conf/coco/BennettP20}.

We will be especially interested in the case where $t = 1/2$. In this case $N_p(\Z^n, \dist_p(1/2 \cdot \vec{1}_n, \Z^n), 1/2 \cdot \vec{1}_n) = 2^n$, where $\dist_p(1/2 \cdot \vec{1}_n, \Z^n) = n^{1/p}/2$. So, for our analysis it essentially suffices to upper bound $N_p(\Z^n, r, \vec{0})$ for some $r = n^{1/p}/(2 \alpha)$.
We have good techniques for doing this; see \cref{subsec:point-counting-Zn}.
(We note that the ``$p_0$ barrier'' mentioned earlier comes from $p_0$ being the smallest value of $p$ satisfying $N_p(\Z^n, n^{1/p}/2, \vec{0}) \leq 2^n$.) %

The SETH-hardness result for $\GapSVP_{1, p}$ for $p > p_0$ in~\cite{conf/stoc/AggarwalS18}, the hardness results for $\BDD$ in~\cite{conf/coco/BennettP20}, and the Gap-SETH-hardness result for $\GapSVP_{\gamma, p}$ in \cref{thm:gapseth-svp-informal} all take $\tdag = 1/2 \cdot \vec{1}$, and analyze $N_p(\Z^n, n^{1/p}/2, 1/2 \cdot \vec{1})/N_p(\Z^n, n^{1/p}/(2\alpha), \vec{0}) = 2^n/N_p(\Z^n, n^{1/p}/(2\alpha), \vec{0})$ for some $\alpha > 0$.
That is, they essentially just need to upper bound $N_p(\Z^n, n^{1/p}/(2\alpha), \vec{0})$ for some $\alpha$ (with $\alpha = 1$ for the $\GapSVP$ hardness results).
As alluded to in the discussion after the statement of \cref{thm:gapeth-bdd-int-informal}, the Gap-ETH-hardness result in~\cite{conf/stoc/AggarwalS18} for $\GapSVP_{p, \gamma}$ essentially works by proving that for every $p > 2$ there exist $t \in (0, 1/2]$ and $r > 0$ (not necessarily $t = 1/2$ or $r = n^{1/p}/2$) such that
\begin{equation}
\frac{N_p(\Z^n, r, t \cdot \vec{1})}{N_p(\Z^n, r/\alpha, \vec{0})} \geq 2^{\Omega(n)}
\label{eq:intro-gapeth-opt}
\end{equation} 
for some non-explicit $\alpha < 1$.

We do something similar, but study the more refined question of what the \emph{minimum} value of $\alpha$ is such that \cref{eq:intro-gapeth-opt} holds for some $t$ and $r$.
This minimum value of $\alpha$ is essentially how we define the quantities $\alphaeth_p$ used in \cref{thm:gapeth-bdd-int-informal}; see \cref{eq:alpha-eth} for a precise definition.
We note that, interestingly, in experiments this minimum value of $\alpha$ is always achieved by simply taking $t = 1/2$. That is, empirically it seems that we do not lose anything by fixing $t = 1/2$ and only varying $r$.\footnote{This is especially interesting since~\cite{elkies91:_packing_densities} notes that $N_p(\Z^n, r, t \cdot \vec{1})$ is \emph{not} maximized by $t = 1/2$ for some $p > 2$, including $p = 3$, and some \emph{fixed} $r > 0$. For $1 \leq p \leq 2$,
\cite{mazo90:_lattice_points} and~\cite{elkies91:_packing_densities} note that for any fixed $r > 0$, $N_p(\Z^n, r, t \cdot \vec{1})$ is in fact \emph{minimized} (up to a subexponential error term) by taking $t = 1/2$.}
We leave proving this as an interesting open question, but note that the strength of our results does not depend on its resolution either way.

\subsection{Open Questions}
One of the most interesting aspects of this and other work on the complexity of lattice problems is the interplay between geometric objects---here, lattices with exponential kissing number and locally dense lattices generally---and hardness results. 
Proving a better lower bound on~$\kissnum$ would immediately translate into an improved bound on the values of $\alphakissnum$ and $\alphaseth_{p, C}$, and more generally proving the existence of some family of gadgets with smaller relative distance~$\alpha$ and at least $2^{\Omega(n)}$ close vectors would translate into a hardness result improving on both \cref{thm:gapeth-bdd-kn-informal,thm:gapeth-bdd-int-informal}.

There is also the question of \emph{constructing} locally dense lattices.
The difference between existence and efficient randomized construction of locally dense lattices roughly corresponds to needing ``non-uniform'' versus ``randomized'' hardness assumptions. %
It is also an interesting question whether randomness is needed at all for showing hardness of $\BDD$ or $\GapSVP$. In this work we crucially use randomness for sparsification in addition to using it to construct locally dense lattices. Indeed, derandomizing hardness reductions for 
$\GapSVP$ (and similarly, $\BDD$) is a notorious, decades-old open problem.

\subsection{Acknowledgments}
We thank Noah Stephens-Davidowitz for helpful comments.

\section{Preliminaries}

We denote column vectors by boldface, lowercase letters (e.g., $\vec{u}, \vec{v}, \vec{w}$). We occasionally abuse notation and write things like $\vec{w} = (\vec{u}, \vec{v})$ instead of $\vec{w} = (\vec{u}^\top, \vec{v}^\top)^\top$.
We use $\vec{0}_n$ and $\vec{1}_n$ to denote the all-$0$s and all-$1$s vectors of dimension~$n$, respectively. We sometimes omit the subscript $n$ when it is clear from context.
Finally, we occasionally abuse notation by mixing elements of a finite field~$\F_q$ (for prime~$q$) and integers when performing arithmetic. In this case, we are really associating elements of $\F_q$ with some distinguished set of integer representatives, e.g., $\set{0, 1, \ldots, q - 1}$.

\subsection{Lattices and Point Counting}

A \emph{lattice} $\lat$ is a discrete additive subgroup of $\R^m$;
concretely, a lattice is the set of all integer linear combinations of a set of linearly independent vectors in $\R^m$.
This set of vectors is called a \emph{basis} of the lattice. Its cardinality $n$ is defined to be the \emph{rank} of the lattice (which turns out to be independent of the choice of basis), and the dimension $m$ of the basis vectors is defined to be the \emph{dimension} of the lattice.
A basis is often represented by a matrix $\mat{B}$ whose columns are the vectors in the basis, and the lattice generated by $\mat{B}$ is denoted $\lat(\mat{B})$.
Using this representation, a rank-$n$ lattice $\lat \subset \R^m$ with basis $\mat{B} \in \R^{m \times n}$ can be written as $\lat = \lat(B) = \mat{B} \cdot \Z^n$.
We denote the Moore-Penrose pseudo-inverse of a matrix $B \in \R^{m \times n}$ by $B^+ := (B^\top B)^{-1} B^\top$. When $B$ is a basis and $\vec{v} \in \lat(B)$ is a lattice vector, $B^+ \vec{v} = \vec{a} \in \Z^n$ is its coefficient vector.

We recall that for $p \in [1, \infty)$ the $\ell_p$ norm of a vector $\vec{x} \in \R^m$ is defined as $\norm{\vec{x}}_p := \bigl( \sum_{i = 1}^m \abs{x_i}^p \bigr)^{1/p}$,
and for $p = \infty$ it is defined as $\norm{\vec{x}}_\infty := \max_{i = 1}^m \abs{x_i}$.
The \emph{minimum distance} $\sm_1(\lat)$ of a lattice $\lat$ in the $\ell_p$ norm is defined to be the minimum length of nonzero vectors in the lattice, i.e.,
\begin{equation*}
    \sm_1(\lat) := \min_{\vec{v} \in \lat \setminus \set{\vec{0}}} \norm{\vec{v}}_p
    \ \text.
\end{equation*}
Equivalently, as the name suggests, it is the minimum distance between any two distinct vectors in the lattice.
For $p = 2$, we simply write $\lambda_1$ for $\sm[2]_1$.
We denote the distance between a vector $\vec{t}$ and a lattice $\lat$ in the $\ell_p$ norm by
\[
    \dist_p(\vec{t}, \lat) := \min_{\vec{x} \in \lat} \norm{\vec{t} - \vec{x}}_p
    \ \text.
\]

For $p \in [1, \infty]$, a discrete set $S \subset \R^m$, a target $\vec{t} \in \R^m$, and a distance $r \ge 0$, we define the following two point-counting functions:
\begin{align*}
    N_p(S, r, \vec{t}) &:= \abs{\set{\vec{x} \in S : \norm{\vec{t} - \vec{x}}_p \le r}}
    \ \text, \\
    N_p^o(S, r, \vec{t}) &:= \abs{\set{\vec{x} \in S : \norm{\vec{t} - \vec{x}}_p < r}}
    \ \text.
\end{align*}
We will typically use a lattice or a subset of a lattice as the discrete set $S$.

In the the following claim we observe two simple bounds on point counts in lattices.

\begin{claim} \label{clm:count-properties}
For any lattice $\lat$, target $\vec{t} \in \lspan(\lat)$, and $r \ge 0$:
\begin{enumerate}
    \item \label{en:count-short-sm}
    $N_p^o(\lat, r, \vec{0}) \ge 2 r / \sm_1(\lat) - 1$;
    \item \label{en:count-close-triangle}
    $N_p(\lat, r, \vec{t}) \le N_p(\lat, 2 r, \vec{0})$.
\end{enumerate}
\end{claim}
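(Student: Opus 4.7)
The plan is to handle the two items separately, as they are essentially independent observations and each admits a short, elementary proof.

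For item~\ref{en:count-short-sm}, the plan is to produce an explicit family of short lattice vectors by taking integer multiples of a single shortest nonzero vector. Concretely, I would fix $\vec{v} \in \lat$ with $\norm{\vec{v}}_p = \sm_1(\lat)$ and observe that for $k \in \Z$, the vector $k\vec{v} \in \lat$ satisfies $\norm{k\vec{v}}_p = \abs{k} \cdot \sm_1(\lat)$, so $k\vec{v}$ is counted by $N_p^o(\lat, r, \vec{0})$ precisely when $\abs{k} < r/\sm_1(\lat)$. Setting $m := r/\sm_1(\lat)$, the set $\set{k \in \Z : \abs{k} < m}$ has $2\lfloor m \rfloor + 1$ elements when $m \notin \Z$ and $2m - 1$ elements when $m \in \Z$; in either case this is at least $2m - 1 = 2r/\sm_1(\lat) - 1$, giving the claimed bound since distinct $k$ yield distinct lattice vectors.

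For item~\ref{en:count-close-triangle}, the plan is a standard triangle-inequality translation argument. If $N_p(\lat, r, \vec{t}) = 0$ the inequality is trivial, so otherwise I would fix an arbitrary $\vec{u} \in \lat$ with $\norm{\vec{u} - \vec{t}}_p \le r$ and consider the map $\vec{v} \mapsto \vec{v} - \vec{u}$. This map is an injection from $\set{\vec{v} \in \lat : \norm{\vec{v} - \vec{t}}_p \le r}$ into $\lat$, and by the triangle inequality $\norm{\vec{v} - \vec{u}}_p \le \norm{\vec{v} - \vec{t}}_p + \norm{\vec{t} - \vec{u}}_p \le 2r$, so the image lies in $\set{\vec{w} \in \lat : \norm{\vec{w}}_p \le 2r\,}$. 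Comparing cardinalities gives the inequality.

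Neither step looks to present any real obstacle: item~\ref{en:count-short-sm} is a one-dimensional integer-counting bookkeeping exercise applied to the line through a shortest vector, and item~\ref{en:count-close-triangle} is the classical trick of translating a closest-vectors set to the origin. The only mild subtlety is taking care with the floor function and the strict inequality in $N_p^o$ so that the $-1$ term in $2r/\sm_1(\lat) - 1$ comes out correctly in both the integer and non-integer cases of $m$.
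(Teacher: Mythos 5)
Your proof is correct and takes essentially the same approach as the paper: item~\ref{en:count-short-sm} counts multiples of a shortest vector on a one-dimensional sublattice, and item~\ref{en:count-close-triangle} uses the translation-by-a-close-vector injection with the triangle inequality. The only nit is that your count ``$2m-1$ elements when $m \in \Z$'' is off at $m=0$ (the set is empty, with $0$ elements, not $-1$), but this does not affect the final bound since $0 \ge -1$; the paper sidesteps this by treating $r=0$ as a trivial base case.
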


\begin{proof}
For \cref{en:count-short-sm}, the inequality is trivial when $r = 0$. For $r > 0$, we have
\begin{equation*}
    N_p^o(\lat, r, \vec{0})
    \ge N_p^o(\lat(\vec{v}), r, \vec{0})
    = 1 + 2 \cdot \Bigl\lceil \frac{r}{\sm_1(\lat)} - 1 \Bigr\rceil
    \ge \frac{2 r}{\sm_1(\lat)} - 1
    \ \text,
\end{equation*}
where $\vec{v}$ is an arbitrary shortest vector in $\lat$ with $\norm{\vec{v}}_p = \sm_1(\lat)$.

For \cref{en:count-close-triangle}, if $N_p(\lat, r, \vec{t}) = 0$ then the inequality is trivial. Otherwise fix $\vec{v} \in \lat$ to be such that $\norm{\vec{t} - \vec{v}}_p \le r$. Then for every $\vec{u} \in \lat$ satisfying $\norm{\vec{t} - \vec{u}}_p \le r$, we have $\vec{u} - \vec{v} \in \lat$ and $\norm{\vec{u} - \vec{v}}_p \le 2 r$ by the triangle inequality.
Moreover, for $\vec{u}_1 \neq \vec{u}_2$, $\vec{u}_1 - \vec{v} \neq \vec{u}_2 - \vec{v}$.
Hence we know $N_p(\lat, r, \vec{t}) \le N_p(\lat, 2 r, \vec{0})$, as desired.
\end{proof}

Given lattice-target pairs $(\lat_1, \vec{t}_1)$ and $(\lat_2, \vec{t}_2)$, the following claim gives an upper bound on the number of close vectors in the ``direct sum lattice'' $\lat_1 \oplus \lat_2$ to $(\vec{t}_1, \vec{t}_2)$ in terms of the product of the numbers of close vectors in $\lat_1$ to $\vec{t}_1$ and in $\lat_2$ to $\vec{t}_2$.

\begin{claim}
\label{clm:count-direct-sum}
For lattices $\lat_1, \lat_2$, targets $\vec{t}_1 \in \lspan(\lat_1), \vec{t}_2 \in \lspan(\lat_2)$, and $r \ge \max \set{\dist_p(\vec{t}_1, \lat_1), \dist_p(\vec{t}_2, \lat_2)}$,
\begin{align*}
    N_p(\lat_1 \oplus \lat_2, r, (\vec{t}_1, \vec{t}_2)) &\le N_p(\lat_1, (r^p - \dist_p(\vec{t}_2, \lat_2)^p)^{1/p}, \vec{t}_1) \cdot N_p(\lat_2, (r^p - \dist_p(\vec{t}_1, \lat_1)^p)^{1/p}, \vec{t}_2)
    \ \text.
\end{align*}
Additionally, if $r < \dist_p(\vec{t}_1, \lat_1)$ or $r < \dist_p(\vec{t}_2, \lat_2)$, $N_p(\lat_1 \oplus \lat_2, r, (\vec{t}_1, \vec{t}_2)) = 0$.
Moreover, these results hold with $N^o_p$ in place of $N_p$.
\end{claim}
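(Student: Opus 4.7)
The plan is to exploit the fact that the $\ell_p$ norm is separable on direct sums: for any $(\vec{x}_1, \vec{x}_2) \in \lat_1 \oplus \lat_2$,
\[
\norm{(\vec{t}_1, \vec{t}_2) - (\vec{x}_1, \vec{x}_2)}_p^p = \norm{\vec{t}_1 - \vec{x}_1}_p^p + \norm{\vec{t}_2 - \vec{x}_2}_p^p.
\]
This identity reduces the whole statement to a two-line rearrangement plus an injection argument.

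First I would take an arbitrary $(\vec{x}_1, \vec{x}_2)$ counted by the left-hand side, so the sum of $p$-th powers above is at most $r^p$. Using the trivial lower bound $\norm{\vec{t}_j - \vec{x}_j}_p \ge \dist_p(\vec{t}_j, \lat_j)$ on one summand and isolating the other, I obtain
\[
\norm{\vec{t}_1 - \vec{x}_1}_p \le (r^p - \dist_p(\vec{t}_2, \lat_2)^p)^{1/p}
\]
and symmetrically for $\vec{x}_2$. The hypothesis $r \ge \max\{\dist_p(\vec{t}_1, \lat_1), \dist_p(\vec{t}_2, \lat_2)\}$ is exactly what guarantees that both radii are real and nonnegative, so these bounds make sense. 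Thus the map $(\vec{x}_1, \vec{x}_2) \mapsto (\vec{x}_1, \vec{x}_2)$ is an injection from the set counted on the left into the product of the two sets counted on the right, yielding the desired product bound.

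For the "additionally" clause, if $r < \dist_p(\vec{t}_1, \lat_1)$, then every $\vec{x}_1 \in \lat_1$ satisfies $\norm{\vec{t}_1 - \vec{x}_1}_p > r$, so a fortiori $\norm{(\vec{t}_1, \vec{t}_2) - (\vec{x}_1, \vec{x}_2)}_p > r$ for every $(\vec{x}_1, \vec{x}_2) \in \lat_1 \oplus \lat_2$, giving count zero; the symmetric case is identical. For the "moreover" statement about $N_p^o$, I would rerun the same argument with strict inequalities: the strict bound $\norm{\vec{t}_1 - \vec{x}_1}_p^p + \norm{\vec{t}_2 - \vec{x}_2}_p^p < r^p$ combined with $\norm{\vec{t}_2 - \vec{x}_2}_p^p \ge \dist_p(\vec{t}_2, \lat_2)^p$ yields a strict bound on $\norm{\vec{t}_1 - \vec{x}_1}_p$, so the injection lands in the product of the strict-ball sets.

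There is no real obstacle here; the only subtlety is bookkeeping around the edge case where one of the radii $(r^p - \dist_p(\vec{t}_j, \lat_j)^p)^{1/p}$ equals $0$, in which case the corresponding factor on the right counts only the nearest-vector attainers (or is zero for $N_p^o$), which is consistent with what the separability identity forces. Everything else is a routine calculation from the $p$-th power additivity of the $\ell_p$ norm across the direct sum.
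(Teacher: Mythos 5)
Your proof is correct and follows the paper's argument exactly: both exploit the additivity $\norm{(\vec{t}_1,\vec{t}_2)-(\vec{x}_1,\vec{x}_2)}_p^p = \norm{\vec{t}_1-\vec{x}_1}_p^p + \norm{\vec{t}_2-\vec{x}_2}_p^p$, lower-bound one summand by the corresponding $\dist_p$, and conclude via the resulting injection into a product of two counted sets. Your write-up is if anything slightly more explicit than the paper's about the edge cases and the $N_p^o$ variant, but there is no substantive difference in approach.
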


\begin{proof}
For every lattice vector $(\vec{v}_1, \vec{v}_2) \in \lat_1 \oplus \lat_2$, $\dist_p((\vec{t}_1, \vec{t}_2), (\vec{v}_1, \vec{v}_2)) \le r$ (respectively $< r$) holds only if $\norm{\vec{t}_1 - \vec{v}_1}_p^p + \dist_p(\vec{t}_2, \lat_2)^p \le r^p$ (resp.\ $< r$) and $\norm{\vec{t}_2 - \vec{v}_2}_p^p + \dist_p(\vec{t}_1, \lat_1)^p \le r^p$ (resp.\ $< r$). Hence the desired inequalities follow.
\end{proof}

\subsection{Sparsification}
\label{subsec:sparsification}

The following lemma shows that the number of vectors in a collection of $m$ pairwise linearly independent (respectively, distinct) vectors in $\F_q^n$ that satisfy a random linear constraint (respectively, random affine constraint) is relatively concentrated around its expectation $m/q$.
Similar results appear in~\cite{journals/jacm/Khot05} and~\cite{conf/soda/Stephens-Davidowitz16}.

\begin{lemma}
\label{lem:sparsification-lin-eqs}
Let $\delta > 0$, let $n \in \Z^+$, let $q$ be a prime, and let $\vec{a}_1, \ldots, \vec{a}_m \in \F_q^n$.
\begin{enumerate}
    \item
    \label{en:sparsification-linind}
    If $\vec{a}_i \neq s \vec{a}_j$ for all $i \neq j$ and $s \in \F_q$, then
    \begin{equation*}
        \Pr_{\vec{x} \sim \F_q^n}[\card{\set{i \in [m] : \iprod{\vec{a}_i, \vec{x}} = 0}} \leq (1 - \delta) m/q] \leq \frac{q}{\delta^2 m}
        \ \text.
    \end{equation*}
    
    \item 
    \label{en:sparsification-distinct}
    If $\vec{a}_i \neq \vec{a}_j$ for all $i \neq j$, then
    \begin{equation*}
        \Pr_{\substack{\vec{x} \sim \F_q^n, \\ y \sim \F_q}}[\card{\set{i \in [m] : \iprod{\vec{a}_i, \vec{x}} = y}} \leq (1 - \delta) m/q] \leq \frac{q}{\delta^2 m}
        \ \text.
    \end{equation*}
\end{enumerate}
In particular, for $\delta = 1$, the probabilities are upper bounded by $q / m$.
\end{lemma}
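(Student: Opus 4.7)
The plan is to apply Chebyshev's inequality to $X := \sum_{i=1}^m X_i$, where $X_i$ is the indicator of $\iprod{\vec{a}_i, \vec{x}} = 0$ in \cref{en:sparsification-linind} and of $\iprod{\vec{a}_i, \vec{x}} = y$ in \cref{en:sparsification-distinct}. The shape of the bound is telling: with $\mathbb{E}[X] = m/q$, establishing $\mathrm{Var}[X] \le m/q$ immediately yields $\Pr[X \le (1-\delta)m/q] \le \mathrm{Var}[X]/(\delta m/q)^2 \le q/(\delta^2 m)$, so the entire proof reduces to first- and second-moment computations for the $X_i$.

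For \cref{en:sparsification-linind}, I first observe that the hypothesis $\vec{a}_i \ne s \vec{a}_j$, applied with $s = 0$, forces every $\vec{a}_i$ to be nonzero (assuming $m \ge 2$; otherwise the bound is trivial). Each $\vec{a}_i$ therefore defines a codimension-$1$ hyperplane of $\F_q^n$, giving $\mathbb{E}[X_i] = 1/q$. The crucial step is pairwise uncorrelatedness: for any $i \ne j$, the vectors $\vec{a}_i$ and $\vec{a}_j$ are linearly independent over $\F_q$, so the joint system $\iprod{\vec{a}_i, \vec{x}} = \iprod{\vec{a}_j, \vec{x}} = 0$ cuts out a codimension-$2$ subspace and $\mathbb{E}[X_i X_j] = 1/q^2 = \mathbb{E}[X_i]\mathbb{E}[X_j]$. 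Consequently $\mathrm{Var}[X] = \sum_i \mathrm{Var}[X_i] \le m \cdot (1/q)(1 - 1/q) \le m/q$, and Chebyshev closes out the argument.

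For \cref{en:sparsification-distinct}, the additional randomness over $y$ absorbs the affine shift and permits the weaker ``distinct'' hypothesis. For every $i$---including the case $\vec{a}_i = \vec{0}$---one has $\mathbb{E}[X_i] = \Pr_{\vec{x}, y}[\iprod{\vec{a}_i, \vec{x}} = y] = 1/q$. For $i \ne j$ with $\vec{a}_i \ne \vec{a}_j$, the joint condition $\iprod{\vec{a}_i, \vec{x}} = \iprod{\vec{a}_j, \vec{x}} = y$ implies $\iprod{\vec{a}_i - \vec{a}_j, \vec{x}} = 0$, which has probability $1/q$ over $\vec{x}$ since $\vec{a}_i - \vec{a}_j \ne \vec{0}$; conditional on that, $y$ equals the common inner-product value with probability $1/q$ over $y$. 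Hence $\mathbb{E}[X_i X_j] = 1/q^2$ again, pairwise uncorrelatedness holds, and Chebyshev yields the identical bound.

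The argument is largely routine; the only delicate point is the pairwise-independence calculation, and it is precisely there that the stronger ``no scalar multiple'' hypothesis in \cref{en:sparsification-linind} is essential. If one merely had $\vec{a}_i \ne \vec{a}_j$ in the pure-linear setting, then a pair with $\vec{a}_j = c \vec{a}_i$ for some $c \in \F_q^\times$ would give $\{X_i = 1\} = \{X_j = 1\}$, producing perfect correlation and destroying the variance bound. The affine shift by $y$ in \cref{en:sparsification-distinct} is exactly what lets the weaker hypothesis suffice, since $\vec{a}_j = c \vec{a}_i$ no longer forces the two affine events to coincide.
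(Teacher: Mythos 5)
Your proposal is correct and follows essentially the same approach as the paper: for each part, establish pairwise uncorrelatedness of the indicator variables $X_i$ via the relevant non-degeneracy hypothesis, bound $\mathrm{Var}[X]\le m/q$, and apply Chebyshev. The one place you diverge is \cref{en:sparsification-distinct}: the paper deduces it from \cref{en:sparsification-linind} by the deterministic lift $\vec{a}_i \mapsto \vec{a}_i' := (\vec{a}_i, -1)$ and $\vec{x} \mapsto (\vec{x}, y)$, observing that $\vec{a}_i \neq \vec{a}_j$ forces $\vec{a}_i' \neq s\vec{a}_j'$ for all $s$ and that $\iprod{\vec{a}_i,\vec{x}}=y \iff \iprod{\vec{a}_i',(\vec{x},y)}=0$; you instead re-run the second-moment calculation directly in the affine setting. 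Both are fine, and your closing remark about why the ``no scalar multiple'' hypothesis is needed in the linear case while mere distinctness suffices in the affine case correctly identifies the role of the random shift $y$.
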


\begin{proof}
We start by proving \cref{en:sparsification-linind}.
Let $X_i$ be an indicator random variable for the event $\iprod{\vec{a}_i, \vec{x}} = 0$, where $\vec{x} \sim \F_q^n$.
It holds that $\E[X_i] = 1/q$, that $\Var[X_i] = 1/q \cdot (1 - 1/q) \leq 1/q$, and that $X_1, \ldots, X_m$ are pairwise independent. 
For the last point, because $\vec{a}_i$ and $\vec{a}_j$ for $i \neq j$ are linearly independent, the linear map $\vec{x} \mapsto (\iprod{\vec{a}_i, \vec{x}}, \iprod{\vec{a}_j, \vec{x}})$ has a kernel of dimension $n - 2$. It follows that $\iprod{\vec{a}_i, \vec{x}} = \iprod{\vec{a}_j, \vec{x}} = 0$ for a $q^{n-2}/q^n = 1/q^2$ fraction of vectors $\vec{x} \in \F_q^n$, and hence that $X_i, X_j$ are pairwise independent.
The result then follows by applying Chebyshev's inequality to $\sum_{i=1}^m X_i$.

We next show that \cref{en:sparsification-linind} implies \cref{en:sparsification-distinct}. 
For each $i \in [m]$, let $\vec{a}_i' := (\vec{a}_i, -1)$, and note that if $\vec{a}_i \neq \vec{a}_j$ then $\vec{a}_i' \neq s \vec{a}_j'$ for all $s \in \F_q$. 
Moreover, for any $\vec{x} \in \F_q^n, y \in \F_q$, $\iprod{\vec{a}_i, \vec{x}} = y$ if and only if $\iprod{\vec{a}_i', \vec{x}'} = 0$, where $\vec{x}' = (\vec{x}, y)$.
Therefore by \cref{en:sparsification-linind},
\begin{equation*}
    \Pr_{\substack{\vec{x} \sim \F_q^n, \\ y \sim \F_q}}[\card{\set{i : \iprod{\vec{a}_i, \vec{x}} = y}} \leq (1 - \delta) m/q]
    = \Pr_{\substack{\vec{x}' \sim \F_q^{n+1}}}[ \lvert \set{i : \iprod{\vec{a}_i', \vec{x}'} = 0 } \rvert \leq (1 - \delta) m/q] %
    \leq \frac{q}{\delta^2 m}
    \ \text.
    \qedhere
\end{equation*}
\end{proof}

The following lemma shows that when $q \gg m$, it is probable that no vector in the collection satisfies a random affine constraint.
This is similar to \cite[Lemma 2.8]{conf/coco/BennettP20}.

\begin{lemma}
\label{lem:sparsification-upper-bound}
Let $n \in \Z^+$, let $q$ be a prime, and let $\vec{a}_1, \ldots, \vec{a}_m \in \F_q^n$. Then:
\begin{enumerate}
    \item
    \label{en:sparsification-ub-random-y}
    It holds that
    \[
    \Pr_{\substack{\vec{x} \sim \F_{q}^{n} \\ y \sim \F_q}}[\exists\ i \in [m] \text{ such that } \iprod{\vec{a}_i, \vec{x}} = y] \leq \frac{m}{q}
    \ \text.
    \]
    \item
    \label{en:sparsification-ub-fixed-y}
    If $\vec{a}_1, \ldots, \vec{a}_m \ne \vec{0}$, then for fixed $y \in \F_q$,
    \[
    \Pr_{\vec{x} \sim \F_{q}^{n}}[\exists\ i \in [m] \text{ such that } \iprod{\vec{a}_i, \vec{x}} = y] \leq \frac{m}{q}
    \ \text.
    \]
\end{enumerate}
\end{lemma}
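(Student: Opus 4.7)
The plan is to prove both parts by a direct union bound, reducing everything to the elementary fact that a nonzero linear functional $\F_q^N \to \F_q$ attains each value in $\F_q$ with probability exactly $1/q$ over a uniformly random input. Once each individual event $\{\iprod{\vec{a}_i, \vec{x}} = y\}$ is shown to have probability exactly $1/q$, summing over $i \in [m]$ gives the claimed $m/q$ bound, with no concentration or algebraic cleverness required.

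For Part \ref{en:sparsification-ub-random-y}, the issue is that some $\vec{a}_i$ could a priori be zero, in which case $\iprod{\vec{a}_i, \vec{x}} = y$ would be determined purely by $y$. To handle this uniformly, I will use a small lifting trick analogous to the one already employed in the proof of \cref{lem:sparsification-lin-eqs}: for each $i$, set $\vec{a}_i' := (\vec{a}_i, -1) \in \F_q^{n+1}$ (nonzero thanks to its last coordinate) and $\vec{x}' := (\vec{x}, y) \in \F_q^{n+1}$ (uniform on $\F_q^{n+1}$ because $\vec{x}$ and $y$ are independent and uniform). Then $\iprod{\vec{a}_i, \vec{x}} = y$ is equivalent to $\iprod{\vec{a}_i', \vec{x}'} = 0$, and since $\vec{a}_i' \ne \vec{0}$ the linear functional $\iprod{\vec{a}_i', \cdot}$ is surjective with uniform fibers, so this occurs with probability exactly $1/q$. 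A union bound over $i \in [m]$ yields the stated inequality.

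For Part \ref{en:sparsification-ub-fixed-y}, the reasoning is even more direct: the hypothesis $\vec{a}_i \ne \vec{0}$ means the linear functional $\vec{x} \mapsto \iprod{\vec{a}_i, \vec{x}}$ is nontrivial, hence surjective onto $\F_q$ with uniform preimages, so $\Pr_{\vec{x} \sim \F_q^n}[\iprod{\vec{a}_i, \vec{x}} = y] = 1/q$ for the fixed $y$. A union bound over $i \in [m]$ again closes the argument. I do not expect any real obstacle; the only thing to watch is ensuring in each part that the relevant linear functional is genuinely nonzero, which is automatic in Part~\ref{en:sparsification-ub-fixed-y} by assumption and is engineered via the $(\vec{a}_i, -1)$ construction in Part~\ref{en:sparsification-ub-random-y}.
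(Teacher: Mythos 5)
Your proof is correct and follows essentially the same route as the paper: reduce to the claim that each individual event $\{\iprod{\vec{a}_i,\vec{x}}=y\}$ has probability exactly $1/q$, then union bound over $i\in[m]$. The only minor stylistic difference is in Part~\ref{en:sparsification-ub-random-y}: the paper just observes directly that, since $y$ is uniform and independent of $\vec{x}$, $\Pr_{\vec{x},y}[\iprod{\vec{a}_i,\vec{x}}=y]=1/q$ for every $\vec{a}_i$ (zero or not), whereas you obtain the same conclusion via the lifting to $\vec{a}_i'=(\vec{a}_i,-1)$; your lifting argument is correct but slightly more elaborate than necessary.
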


\begin{proof}
We have that $\Pr_{\vec{x} \sim \F_{q}^{n}, y \sim \F_q}[\iprod{\vec{a}_i, \vec{x}} = y] = 1/q$ for each $\vec{a}_i$.
Moreover, if $\vec{a}_i \ne 0$ then $\Pr_{\vec{x} \sim \F_{q}^{n}}[\iprod{\vec{a}_i, \vec{x}} = y] = 1/q$ for any fixed $y$.
The claims then follow by union bound.
\end{proof}

We next show how to sparsify a lattice and argue about vector counts in the resulting sparsified lattice.
In particular, we use \cref{lem:sparsification-lin-eqs,lem:sparsification-upper-bound} to lower bound the minimum distance, lower bound the number of close vectors (and hence also upper bound the close distance), and lower bound the too-close distance in the resulting sparsified lattice.

\begin{proposition}
\label{prop:sparsification-lattice-sets}
Let $p \in [1, \infty)$, let $\lat$ be a lattice of rank $n$ with basis $B$, let $\vec{t} \in \lspan(\lat)$, let $q$ be a prime, and let $r \ge 0$. Let 
$\vec{x}, \vec{z} \sim \F_q^n$ be sampled uniformly at random,
and define
\[
\lat' := \set{\vec{v} \in \lat : \iprod{B^+ \vec{v}, \vec{x}} \equiv 0 \Mod{q}} \ \text, \quad 
\vec{t}' := \vec{t} - B \vec{z} \ \text.
\]
Then the following hold:
\begin{enumerate}
\item
\label{en:sparsification-short}
(Minimum distance.)
If $r \le q \cdot \sm[p]_1(\lat)$, then
$\Pr[\lambda_1(\lat') < r] \leq N_p^o(\lat \setminus \set{\vec{0}}, r, \vec{0})/q$.

\item
\label{en:sparsification-close}
(Close vector count and distance.)
If $r < q \cdot \sm[p]_1(\lat) / 2$, then
for any $\delta > 0$,
\begin{equation*}
    \Pr[N_p(\lat', r, \vec{t}') \le (1 - \delta) N_p(\lat, r, \vec{t}) / q] \le \frac{q}{\delta^2 \cdot N_p(\lat, r, \vec{t})} + \frac{1}{q^n}
    \ \text.
\end{equation*}
In particular, for $\delta = 1$, $\Pr[\dist_p(\vec{t}', \lat') > r] \leq q/N_p(\lat, r, \vec{t}) + 1/q^n$.

\item
\label{en:sparsification-too-close}
(Too-close distance.)
$\Pr[\dist_p(\vec{t}', \lat') < r] \leq N_p^o(\lat, r, \vec{t})/q + 1/q^n$.
\end{enumerate}
\end{proposition}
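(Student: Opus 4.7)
The plan is to reduce each of the three claims to a direct application of \cref{lem:sparsification-lin-eqs} or \cref{lem:sparsification-upper-bound}, invoked on the coefficient vectors (modulo $q$) of an appropriate finite collection of lattice points. The central observation is that $\vec{v} \in \lat'$ iff $\iprod{B^+ \vec{v}, \vec{x}} \equiv 0 \pmod q$, and that the substitution $\vec{u} := \vec{v} + B \vec{z}$ converts a condition ``$\vec{v} \in \lat'$ with $\norm{\vec{v} - \vec{t}'}_p \le r$ (or $< r$)'' into ``$\vec{u} \in \lat$ with $\norm{\vec{u} - \vec{t}}_p \le r$ (or $< r$) and $\iprod{B^+ \vec{u}, \vec{x}} \equiv y \pmod q$,'' where $y := \iprod{\vec{z}, \vec{x}}$.

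For \cref{en:sparsification-short} I would take the collection of all nonzero $\vec{v} \in \lat$ with $\norm{\vec{v}}_p < r$. The hypothesis $r \le q \cdot \sm_1(\lat)$ ensures no such $\vec{v}$ lies in $q \lat$ (otherwise $\norm{\vec{v}}_p \ge q \cdot \sm_1(\lat) \ge r$), so every coefficient vector $B^+ \vec{v}$ is nonzero modulo $q$. Applying \cref{lem:sparsification-upper-bound}, \cref{en:sparsification-ub-fixed-y} with $y = 0$ then yields the bound $N^o_p(\lat \setminus \set{\vec{0}}, r, \vec{0}) / q$.

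For \cref{en:sparsification-close,en:sparsification-too-close} the distribution of the pair $(\vec{x}, y) = (\vec{x}, \iprod{\vec{z}, \vec{x}})$ is key. When $\vec{x} \ne \vec{0}$, $y$ is uniform over $\F_q$ independent of $\vec{x}$, so $(\vec{x}, y)$ conditional on $\vec{x} \ne \vec{0}$ has the same law as a uniformly random pair in $\F_q^n \times \F_q$ conditioned on the same event. Splitting on whether $\vec{x} = \vec{0}$, which happens with probability $1/q^n$, accounts for the additive $1/q^n$ term in both bounds, and on the complementary event I would apply the sparsification lemmas essentially verbatim. For \cref{en:sparsification-close} the collection is $\set{\vec{u} \in \lat : \norm{\vec{u} - \vec{t}}_p \le r}$; the hypothesis $r < q \cdot \sm_1(\lat) / 2$ forces any two distinct $\vec{u}_1, \vec{u}_2$ in this set to satisfy $\norm{\vec{u}_1 - \vec{u}_2}_p < q \cdot \sm_1(\lat)$, hence $\vec{u}_1 - \vec{u}_2 \notin q \lat$ and $B^+ \vec{u}_1 \not\equiv B^+ \vec{u}_2 \pmod q$, so \cref{lem:sparsification-lin-eqs}, \cref{en:sparsification-distinct} applies. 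For \cref{en:sparsification-too-close} the collection is the set of strictly too-close vectors, and \cref{lem:sparsification-upper-bound}, \cref{en:sparsification-ub-random-y} applies directly (no distinctness or nonzero-ness is needed for that item).

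The main technical subtlety, and the step I expect to require the most care, is the mild nonuniformity of the joint distribution of $(\vec{x}, y)$: the sparsification lemmas are stated for jointly uniform inputs, whereas $y$ is degenerate when $\vec{x} = \vec{0}$. Handling this by splitting on $\vec{x} = \vec{0}$ and crudely bounding $\Pr[\vec{x} = \vec{0}] = 1/q^n$ absorbs the error cleanly into the stated additive $1/q^n$ term with no further multiplicative loss. The rest of the argument is bookkeeping to translate each event about $\lat'$ and $\vec{t}'$ into the counting form consumed by the sparsification lemmas.
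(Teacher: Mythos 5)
Your proposal matches the paper's proof step for step: item (1) via \cref{lem:sparsification-upper-bound} with $y=0$ and the observation that $r \le q\lambda_1$ rules out coefficient vectors divisible by $q$; items (2) and (3) via the coset substitution $\vec{u} = \vec{v} + B\vec{z}$ (which is exactly the paper's Eq.~\eqref{eq:coset-counts}), distinctness of coefficient vectors mod $q$ from $2r < q\lambda_1$, and \cref{lem:sparsification-lin-eqs} / \cref{lem:sparsification-upper-bound} respectively. The only cosmetic difference is that the paper packages the $(\vec{x}, \iprod{\vec{z},\vec{x}})$ vs.\ $(\vec{x},y)$ comparison as a statistical-distance bound of $(q-1)/q^{n+1} < 1/q^n$, whereas you split explicitly on the event $\vec{x} = \vec{0}$; these are the same argument, and both yield the additive $1/q^n$ term.
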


\begin{proof}
For \cref{en:sparsification-short}, let $m := N_p^o(\lat \setminus \set{\vec{0}}, r, \vec{0})$, let $\vec{v}_1, \ldots, \vec{v}_m \in \lat$ be the $m$ distinct non-zero lattice vectors satisfying $\norm{\vec{v}_i}_p < r$, and let $\vec{a}_i := B^+ \vec{v}_i$ for $i \in [m]$.
Because $r \le q \cdot \sm[p]_1(\lat)$, we know that $\vec{v}_i \notin q \lat$ and thus $\vec{a}_i \not\equiv \vec{0} \Mod{q}$.
Therefore by \cref{lem:sparsification-upper-bound}, \cref{en:sparsification-ub-fixed-y} with $y = 0$,
\begin{align*}
    \Pr[\lambda_1(\lat') < r]
    &= \Pr[\exists \ i \in [m] \text{ such that } \vec{v}_i \in \lat'] \\
    &= \Pr[\exists \ i \in [m] \text{ such that } \iprod{\vec{a}_i, \vec{x}} \equiv 0 \Mod{q}] \\
    &\le m / q
    \ \text.
\end{align*}

For \cref{en:sparsification-close,en:sparsification-too-close} we will use the fact that the statistical distance between $(\vec{x}, \iprod{\vec{z}, \vec{x}})$ and $(\vec{x}, y)$ with $y \sim \F_q$ sampled uniformly at random is $(q - 1) / q^{n+1} < 1/q^n$. This follows by a direct computation and noting that $(\vec{x}, y)$ and $(\vec{x}, \iprod{\vec{z}, \vec{x}})$ are identically distributed conditioned on $\vec{x} \neq \vec{0}$. Additionally, we note that
\begin{align} \label{eq:coset-counts}
N_p(\lat', r, \vec{t}') = N_p(\lat' + B\vec{z}, r, \vec{t})
= N_p(\set{\vec{v} \in \lat : \iprod{B^+ \vec{v}, \vec{x}} \equiv \iprod{\vec{z}, \vec{x}} \Mod{q}}, r, \vec{t}) \ .
\end{align}

For \cref{en:sparsification-close}, let $m := N_p(\lat, r, \vec{t})$, let $\vec{v}_1, \ldots, \vec{v}_m \in \lat$ be the $m$ distinct lattice vectors satisfying $\norm{\vec{t} - \vec{v}_i}_p \le r$, and let $\vec{a}_i := B^+ \vec{v}_i$ for $i \in [m]$.
By triangle inequality, $\norm{\vec{v}_i - \vec{v}_j}_p \le 2 r$ for all $i, j \in [m]$.
Then, because $2 r < q \cdot \sm[p]_1(\lat)$, for $i \neq j$ we know that $\vec{v}_i - \vec{v}_j \notin q \lat$ and thus $\vec{a}_i \not\equiv \vec{a}_j \Mod{q}$.
Let $y \sim \F_q$ be sampled uniformly at random.
Then by \cref{eq:coset-counts} and \cref{lem:sparsification-lin-eqs}, \cref{en:sparsification-distinct},
\begin{align*}
    \Pr[N_p(\lat', r, \vec{t}') \le (1 - \delta) m / q]
    &= \Pr[\card{\set{i \in [m] : \iprod{\vec{a}_i, \vec{x}} \equiv \iprod{\vec{z}, \vec{x}} \Mod{q}}} \le (1 - \delta) m / q] \\
    &\le \Pr[\card{\set{i \in [m] : \iprod{\vec{a}_i, \vec{x}} \equiv y \Mod{q}}} \le (1 - \delta) m / q] + \frac{1}{q^n} \\
    &\le \frac{q}{\delta^2 m} + \frac{1}{q^n}
    \ \text,
\end{align*}

For \cref{en:sparsification-too-close}, let $m := N_p^o(\lat, r, \vec{t})$, let $\vec{v}_1, \ldots, \vec{v}_m \in \lat$ be the $m$ distinct lattice vectors satisfying $\norm{\vec{t} - \vec{v}_i}_p < r$, and let $\vec{a}_i := B^+ \vec{v}_i$ for $i \in [m]$.
Let $y \sim \F_q$ be sampled uniformly at random.
Then by \cref{eq:coset-counts} and \cref{lem:sparsification-upper-bound}, \cref{en:sparsification-ub-random-y},
\begin{equation*}
\begin{aligned}[b]
    \Pr[\dist_p(\vec{t}', \lat') < r]
    &= \Pr[\exists i \in [m] \text{ such that } \iprod{\vec{a}_i - \vec{z}, \vec{x}} \equiv 0 \Mod{q}] \\
    &\le \Pr[\exists i \in [m] \text{ such that } \iprod{\vec{a}_i, \vec{x}} \equiv y \Mod{q}] + \frac{1}{q^n} \\
    &\le \frac{m}{q} + \frac{1}{q^n}
    \ \text.
\end{aligned}
    \qedhere
\end{equation*}
\end{proof}

\subsection{Point Counting via the Theta Function}
\label{subsec:point-counting-Zn}

Here we present results related to counting points in $\Z^n$ using theta functions $\Theta_p$ (which are the $\ell_p$ norm analogs of the Gaussian function). 
The main result presented in this section, \cref{thm:point-counts-thetas}, was originally proved in~\cite{mazo90:_lattice_points} and~\cite{elkies91:_packing_densities}. Here we follow the nomenclature used in~\cite{conf/stoc/AggarwalS18}. 
For $p \in [1, \infty)$, $\tau > 0$, and $t \in \R$, define
\[
\Theta_p(\tau, t) :=
\sum_{z \in \Z} \exp(-\tau \abs{z - t}^p)
\ \text.
\]
We note that without loss of generality we may take $t \in [0, 1/2]$.
For $\vec{t} \in \R^n$, extend this to
\[
\Theta_p(\tau, \vec{t}) := \prod_{i=1}^n \Theta_p(\tau, t_i) = \sum_{z \in \Z^n} \exp(-\tau \norm{\vec{z} - \vec{t}}_p^p)
\ \text.
\]

For $p \in [1, \infty)$, $\tau > 0$, and $t \in \R$, define
\[
\mu_p(\tau, t) 
:= \E_{X\sim D_p(\tau, t)}[\abs{X}^p]
= \frac{1}{\Theta_p(\tau, t)} \cdot \sum_{z \in \Z} \abs{z-t}^p \cdot \exp(-\tau \abs{z-t}^p)
\ \text.
\]
where $D_p(\tau, t)$ is the probability distribution over $\Z - t$ that assigns probability $\exp(-\tau \abs{x}^p)/\Theta_p(\tau, t)$ to $x \in \Z - t$.
For $\vec{t} \in \R^n$, extend this to
\[
\mu_p(\tau, \vec{t}) := \sum_{i=1}^n \mu_p(\tau, t_i)
= \sum_{i=1}^n \E_{X\sim D_p(\tau, t_i)}[\abs{X}^p]
= \sum_{i=1}^n \frac{1}{\Theta_p(\tau, t_i)} \cdot \sum_{z \in \Z}  \abs{z-t_i}^p \cdot \exp(-\tau \abs{z-t_i}^p)
\ \text.
\]

The following fact about $\mu_p$ is claimed but unproven in \cite{conf/stoc/AggarwalS18}.

\begin{claim}
\label{clm:rp-equals-mu}
For any $p \in [1, \infty)$, $\vec{t} \in \R^n$ and $r > \dist_p(\vec{t}, \Z^n)$, there exists a unique $\tau > 0$ such that $r^p = \mu_p(\tau, \vec{t})$.
\end{claim}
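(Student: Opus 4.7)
The plan is to show that $\mu_p(\tau, \vec{t}) = \sum_{i=1}^n \mu_p(\tau, t_i)$ is, as a function of $\tau \in (0, \infty)$, continuous and strictly decreasing, with $\lim_{\tau \to 0^+} \mu_p(\tau, \vec{t}) = +\infty$ and $\lim_{\tau \to \infty} \mu_p(\tau, \vec{t}) = \dist_p(\vec{t}, \Z^n)^p$. Once these four properties are established, the intermediate value theorem immediately produces a unique $\tau > 0$ with $\mu_p(\tau, \vec{t}) = r^p$ for every $r > \dist_p(\vec{t}, \Z^n)$. It suffices to verify each property at the level of a single scalar summand $\mu_p(\tau, t_i)$: strict monotonicity and continuity pass through the sum, and the scalar limits $\dist_p(t_i, \Z)^p$ sum to $\dist_p(\vec{t}, \Z^n)^p$ because the $\ell_p$ distance to $\Z^n$ decomposes coordinate-wise.

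For strict monotonicity, observe that $\log \Theta_p(\tau, t_i)$ is the log-partition function of the Gibbs measure $D_p(\tau, t_i)$ with observable $X \mapsto |X|^p$. The series defining $\Theta_p$ converges uniformly on compact subsets of $(0, \infty)$ (its terms decay super-exponentially in $|z|$ for any fixed $\tau > 0$), so term-by-term differentiation is justified and yields the standard identities
\[
    \mu_p(\tau, t_i) \;=\; -\frac{d}{d\tau} \log \Theta_p(\tau, t_i),
    \qquad
    \frac{d}{d\tau} \mu_p(\tau, t_i) \;=\; -\mathrm{Var}_{X \sim D_p(\tau, t_i)}\!\bigl(|X|^p\bigr).
\]
The variance is strictly positive because the map $z \mapsto |z - t_i|^p$ takes infinitely many distinct values on the support $\Z - t_i$, and $D_p(\tau, t_i)$ assigns positive mass to each point; hence $\mu_p(\tau, t_i)$ is strictly decreasing, and continuity in $\tau$ is immediate from the real-analyticity of $\Theta_p(\tau, t_i)$ on $(0, \infty)$.

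For the limits of a single coordinate, let $d_i := \dist_p(t_i, \Z) \in [0, 1/2]$. As $\tau \to \infty$, dividing numerator and denominator of $\mu_p(\tau, t_i)$ by $\exp(-\tau d_i^p)$ shows that the relative weight $\exp(-\tau(|z - t_i|^p - d_i^p))$ of any non-minimizing $z$ tends to $0$, so $D_p(\tau, t_i)$ concentrates on the one or two integers at distance $d_i$ from $t_i$ (dominated convergence applies using the $\tau \geq 1$ bound $\exp(-\tau(|z-t_i|^p - d_i^p)) \leq \exp(-(|z-t_i|^p - d_i^p))$ as the dominating summable function), yielding $\mu_p(\tau, t_i) \to d_i^p$. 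As $\tau \to 0^+$, fix any $M > 0$: the number of integers $z$ with $|z - t_i|^p < M$ is at most $2 M^{1/p} + 1$, whereas $\Theta_p(\tau, t_i) \to \infty$ by comparison with $\int_\R \exp(-\tau |x|^p)\,dx = \Theta(\tau^{-1/p})$. Hence $\Pr_{D_p(\tau, t_i)}[|X|^p \geq M] \to 1$, so $\mu_p(\tau, t_i) \geq M \cdot \Pr[|X|^p \geq M]$ eventually exceeds $M/2$; since $M$ was arbitrary, $\mu_p(\tau, t_i) \to +\infty$. The only mildly technical step in the plan is justifying term-by-term differentiation for the variance identity, which is routine from the super-exponential decay of $\Theta_p$'s summands on any compact $\tau$-interval in $(0, \infty)$; no step poses a genuine obstacle.
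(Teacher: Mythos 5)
Your proof is correct and follows essentially the same route as the paper's: compute $\frac{\partial}{\partial\tau}\mu_p(\tau, t_i) = -\mathrm{Var}_{X \sim D_p(\tau, t_i)}(|X|^p) < 0$ for strict monotonicity, establish $\lim_{\tau\to\infty}\mu_p(\tau,t_i) = \dist_p(t_i,\Z)^p$ and $\lim_{\tau\to 0^+}\mu_p(\tau,t_i) = \infty$, sum over coordinates, and apply the intermediate value theorem. The only stylistic difference is that you give a self-contained mass-escape argument for the $\tau\to 0^+$ divergence, whereas the paper simply cites the sharper asymptotic $\mu_p(\tau,t)\cdot\tau\to 1/p$ from its Claim~\ref{clm:beta-limits}.
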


\begin{proof}
By calculation (see also \cite[Lemma~6.2]{conf/stoc/AggarwalS18}),
\begin{align*}
    \frac{\partial}{\partial \tau} \Theta_p(\tau, t_i)
    &= \sum_{z \in \Z} -\abs{z - t_i}^p \cdot \exp(-\tau \abs{z - t_i}^p) \\
    &= - \Theta_p(\tau, t_i) \cdot \mu_p(\tau, t_i)
    \ \text, \\
    \frac{\partial}{\partial \tau} \mu_p(\tau, t_i)
    &= \frac{1}{\Theta_p(\tau, t_i)} \sum_{z \in \Z} - \abs{z - t_i}^{2 p} \cdot \exp(-\tau \abs{z - t_i}^p) + \mu_p(\tau, t_i)^2 \\
    &= -\operatorname*{Var}_{X \sim D_p(\tau, t_i)}[\abs{X}^p] < 0
    \ \text.
\end{align*}
Moreover, recalling that $\mu_p(\tau, t_i) = \E_{X \sim D_p(\tau, t_i)}[\abs{X}^p]$, we have $\lim_{\tau \to \infty} \mu_p(\tau, t_i) = \min_{x \in \Z - t_i}[\abs{x}^p] = \dist_p(t_i, \Z)^p$ and $\lim_{\tau \to 0} \mu_p(\tau, t_i) = \infty$ (see \cref{clm:beta-limits}).
Then $\mu_p(\tau, \vec{t}) = \sum_{i = 1}^n \mu_p(\tau, t_i)$ is strictly decreasing in $\tau$ for $\tau \in (0, \infty)$, and $\lim_{\tau \to \infty} \mu_p(\tau, \vec{t}) = \dist_p(\vec{t}, \Z^n)^p$, $\lim_{\tau \to 0} \mu_p(\tau, \vec{t}) = \infty$.
Hence for every $r > \dist_p(\vec{t}, \Z^n)$, there exists a unique $\tau > 0$ such that $\mu_p(\tau, \vec{t}) = r^p$, as desired.
\end{proof}

The following defines functions $\betafun{t}{a}$ that, as we will see, are such that $\betafun{t}{a}^n$ is equal to or a close approximation of $N_p(\Z^n, a n^{1/p}, t \cdot \vec{1})$.

\begin{definition}
\label{def:beta}
For $p \in [1, \infty)$, $t \in [0, 1/2]$, and $a \geq 0$, we define  $\betafun{t}{a}$ as follows.
\begin{enumerate}
    \item For $a < t$, define $\betafun{t}{a} := 0$.
    \item For $a = t$, define $\betafun{1/2}{1/2} := 2$ and for $t \neq 1/2$ define $\betafun{t}{t} := 1$.
    \item For $a > t$, define
    \begin{equation*}
        \betafun{t}{a} := \exp(\tau^* a^p) \cdot \Theta_p(\tau^*, t)
        \ \text,
    \end{equation*}
    where $\tau^* > 0$ is the unique solution to $\mu_p(\tau^*, t) = a^p$.
\end{enumerate}
\end{definition}
We note that $\betafun{t}{a}$ is well-defined in the $a > t$ case by \cref{clm:rp-equals-mu}.
We will also work with the inverse function $\betainv{t}{\nu}$, which we show is well-defined in \cref{clm:beta-properties} for $\nu \geq \betafun{t}{t}$.

The following theorem says that $\betafun{t}{a}^n$ is equal to the number of integer points in a $a n^{1/p}$-scaled, $(t \cdot \vec{1})$-centered $\ell_p$ ball in $n$ dimensions up to a subexponential error term.
This also implies that $a = \betainv{t}{\nu}$ corresponds to the radius $a n^{1/p}$ at which $N_p(\Z^n, an^{1/p}, t \cdot \vec{1}) \approx \nu^n$, where the approximation again holds up to a subexponential error term.
This theorem will be very important for our analysis.
We state the result using the notation of~\cite[Theorem 6.1]{conf/stoc/AggarwalS18}, but again note that it was originally proven for $p = 2$ in~\cite{mazo90:_lattice_points} and for general $p$ in~\cite{elkies91:_packing_densities}.
\begin{theorem} %
\label{thm:point-counts-thetas}
For any constants $p \in [1, \infty)$ and $\tau > 0$, there is another constant $C^* > 0$ such that for any $n \in \Z^+$ and $\vec{t} \in \R^n$,
\[
\exp(\tau \mu_p(\tau, \vec{t}) - C^* \sqrt{n}) \cdot \Theta_p(\tau, \vec{t}) \leq N_p(\Z^n, \mu_p(\tau, \vec{t})^{1/p}, \vec{t}) \leq \exp(\tau \mu_p(\tau, \vec{t})) \cdot \Theta_p(\tau, \vec{t})
\ \text.
\]
\end{theorem}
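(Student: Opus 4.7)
The plan is as follows. Set $r := \mu_p(\tau, \vec{t})^{1/p}$ throughout. The upper bound is immediate: every point $\vec{z} \in \Z^n$ counted by $N_p(\Z^n, r, \vec{t})$ satisfies $\norm{\vec{z}-\vec{t}}_p \le r$, hence contributes $\exp(-\tau\norm{\vec{z}-\vec{t}}_p^p) \ge \exp(-\tau\mu_p(\tau,\vec{t}))$ to $\Theta_p(\tau,\vec{t})$. Summing these contributions gives $\Theta_p(\tau,\vec{t}) \ge N_p(\Z^n, r, \vec{t}) \cdot \exp(-\tau\mu_p(\tau,\vec{t}))$, which rearranges to the stated upper bound.

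For the lower bound, the plan is to exploit the probabilistic content of $\Theta_p$ and $\mu_p$. Let $X_1, \ldots, X_n$ be independent with $X_i \sim D_p(\tau, t_i)$ and set $Y := \sum_{i=1}^n \abs{X_i}^p$, so that $\E[Y] = \mu_p(\tau, \vec{t})$. Using WLOG $t_i \in [0, 1/2]$ and $\tau$ fixed, each $\abs{X_i}^p$ has variance and absolute third central moment bounded above by constants depending only on $\tau, p$ (because $D_p(\tau, t_i)$ has exponentially decaying tails), and variance bounded \emph{below} by a positive constant depending only on $\tau, p$ (since $\abs{X_i}^p$ takes distinct values on $\Z - t_i$ with non-negligible probabilities, uniformly in $t_i$). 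In particular $\operatorname{Var}[Y] = \Theta(n)$ with constants depending only on $\tau, p$.

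The key step is a one-sided concentration claim: there exist constants $C, c > 0$ depending only on $\tau, p$ such that for all sufficiently large $n$,
\[
\Pr\!\bigl[\, Y \in [\mu_p(\tau,\vec{t}) - C\sqrt{n},\; \mu_p(\tau,\vec{t})]\, \bigr] \ge c.
\]
I would prove this via the Berry--Esseen theorem applied to the summands $\abs{X_i}^p - \E[\abs{X_i}^p]$: the CDF of $(Y-\E[Y])/\sqrt{\operatorname{Var}[Y]}$ differs from the standard normal CDF by $O(1/\sqrt{n})$, so for $C$ chosen sufficiently large relative to the variance constants, the probability of lying in a window just below the mean is within $O(1/\sqrt{n})$ of the corresponding Gaussian probability, which is a positive constant. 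Given the claim, the defining formula for $D_p$ yields
\[
c \cdot \Theta_p(\tau,\vec{t}) \;\le\; \sum_{\substack{\vec{z} \in \Z^n \\ \norm{\vec{z}-\vec{t}}_p^p \in [\mu_p - C\sqrt{n},\, \mu_p]}} \exp(-\tau\norm{\vec{z}-\vec{t}}_p^p) \;\le\; N_p(\Z^n, r, \vec{t}) \cdot \exp(-\tau(\mu_p - C\sqrt{n})),
\]
which rearranges to $N_p(\Z^n, r, \vec{t}) \ge c \cdot \exp(\tau\mu_p(\tau,\vec{t}) - \tau C\sqrt{n}) \cdot \Theta_p(\tau,\vec{t})$. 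Setting $C^* := \tau C + \log(1/c)$ (and increasing it if necessary to absorb the finitely many small-$n$ exceptions) completes the bound.

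The main obstacle is the one-sided concentration step: a plain Chebyshev bound yields only two-sided concentration and does not preclude all of $Y$'s mass from sitting just above $\mu_p$, so a quantitative CLT such as Berry--Esseen is essential---and the window needs to lie on the side where points contribute large factors of $\exp(-\tau\|\cdot\|_p^p)$ to $\Theta_p$. The supporting moment bookkeeping, particularly the uniform lower bound on $\operatorname{Var}[\abs{X_i}^p]$ over $t_i \in [0, 1/2]$, is routine but must be carried out carefully so that $C$ and $c$ depend only on $\tau$ and $p$ (not on $\vec{t}$ or $n$), as required for the constant $C^*$ in the statement.
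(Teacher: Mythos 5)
The paper does not prove this theorem---it cites it (via \cite[Theorem~6.1]{conf/stoc/AggarwalS18}) to \cite{mazo90:_lattice_points} for $p=2$ and \cite{elkies91:_packing_densities} for general $p$---so there is no in-paper proof to compare against. Your proposal is a correct, self-contained argument and follows the standard exponential-tilting strategy behind those references: the upper inequality is the Markov-type bound $\Theta_p(\tau,\vec{t}) \ge N_p(\Z^n, r, \vec{t}) \cdot \exp(-\tau r^p)$, and the lower inequality is a Cram\'er-style local-CLT estimate at the tilted measure $D_p(\tau,\vec{t})$, where $r^p = \mu_p(\tau,\vec{t})$ is exactly the tilted mean. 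The one-sided concentration step is correctly identified as the crux: a plain Chebyshev bound controls two-sided deviations but cannot prevent the mass of $Y$ from lying just above $\mu_p$, and Berry--Esseen (or, equivalently, a $\Theta(1/\sqrt{n})$ increase of $\tau$) is the right tool, yielding the $C^*\sqrt{n}$ loss in the exponent. The bookkeeping you flag is indeed routine: the reduction to $t_i \in [0,1/2]$ is legitimate since $N_p$, $\Theta_p$, $\mu_p$ are invariant under $t_i \mapsto t_i+1$ and $t_i \mapsto -t_i$; the uniform variance lower bound follows from the two nearest coset points $z=0$ and $z=-1$, whose values $t_i^p$ and $(1+t_i)^p$ differ by at least $1$ for $p \ge 1$ and each carry probability bounded below uniformly in $t_i \in [0,1/2]$; the third absolute moment is uniformly bounded by the convergence of $\sum_z |z-t_i|^{3p}\exp(-\tau|z-t_i|^p)$; and the small-$n$ absorption into $C^*$ works because $\exp(\tau\mu_p(\tau,\vec{t}))\Theta_p(\tau,\vec{t}) = \prod_i \exp(\tau\mu_p(\tau,t_i))\Theta_p(\tau,t_i) \le K(\tau,p)^n$ while $N_p(\Z^n, \mu_p(\tau,\vec{t})^{1/p}, \vec{t}) \ge 1$ (since $\mu_p(\tau,\vec{t}) \ge \dist_p(\vec{t},\Z^n)^p$).
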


We conclude this subsection with two technical claims. \cref{clm:beta-properties} gives properties of the function $\betafun{t}{a}$ that we will frequently use in later sections. \cref{clm:beta-limits} argues about several limits involving the functions $\Theta$, $\mu$, and $\beta$, and notes that they do not depend on the parameter $t$.
For readability, the proofs of these claims are deferred to \cref{sec:prelims-claims-proofs}.

\begin{restatable}{claim}{techone}
\label{clm:beta-properties}
For any $p \in [1, \infty)$ and $t \in [0, 1/2]$:
\begin{enumerate}
    \item \label{en:beta-approx-Np} For $a > t$, $\betafun{t}{a}^n \cdot \exp(-C^*\sqrt{n}) \leq N_p(\Z^n, a n^{1/p}, t \cdot \vec{1}) \leq \betafun{t}{a}^n$ for some constant $C^* > 0$;
    \item \label{en:beta-via-min-tau} For $a > t$, $\betafun{t}{a} = \min_{\tau > 0} \exp(\tau a^p) \cdot \Theta_p(\tau, t)$, i.e., the unique solution $\tau^*$ of $\mu_p(\tau^*, t) = a^p$ minimizes $\exp(\tau a^p) \cdot \Theta_p(\tau, t)$;
    \item \label{en:beta-continuity} $\betafun{t}{a}$ is strictly increasing for $a \geq t$ and differentiable (and hence continuous) for $a > t$;
    \item \label{en:beta-inverse-continuity} $\betainv{t}{\nu}$ is well-defined and strictly increasing for $\nu \geq \betafun{t}{t}$, and
    differentiable (and hence continuous) for $\nu > \betafun{t}{t}$.
\end{enumerate}
\end{restatable}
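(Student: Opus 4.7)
The plan is to handle the four parts in order, leveraging Theorem~\ref{thm:point-counts-thetas} for part~\ref{en:beta-approx-Np}, standard calculus for parts~\ref{en:beta-via-min-tau} and~\ref{en:beta-continuity}, and the inverse function theorem for part~\ref{en:beta-inverse-continuity}. The derivative formulas for $\Theta_p$ and $\mu_p$ computed in the proof of \cref{clm:rp-equals-mu} will do most of the work.

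For part~\ref{en:beta-approx-Np}, I apply \cref{thm:point-counts-thetas} with the vector $\vec{t} = t \cdot \vec{1}_n$ and the value $\tau = \tau^\ast$ for which $\mu_p(\tau^\ast, t) = a^p$ (which exists uniquely by \cref{clm:rp-equals-mu} since $a > t \geq \dist_p(t, \Z)$). Because $\mu_p$ and $\Theta_p$ factor coordinate-wise, $\mu_p(\tau^\ast, \vec{t}) = n \mu_p(\tau^\ast, t) = n a^p$ and $\Theta_p(\tau^\ast, \vec{t}) = \Theta_p(\tau^\ast, t)^n$, so the theorem gives both bounds with $\beta_{p,t}(a)^n = \exp(\tau^\ast n a^p) \Theta_p(\tau^\ast, t)^n$. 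For part~\ref{en:beta-via-min-tau}, differentiating $f(\tau) := \exp(\tau a^p) \Theta_p(\tau, t)$ and using $\partial_\tau \Theta_p(\tau, t) = -\Theta_p(\tau, t) \mu_p(\tau, t)$ from \cref{clm:rp-equals-mu}, I get
\[
    f'(\tau) = \exp(\tau a^p) \Theta_p(\tau, t) \bigl(a^p - \mu_p(\tau, t)\bigr)\ \text.
\]
Since $\mu_p(\tau, t)$ is strictly decreasing in $\tau$, $f'(\tau)$ is negative for $\tau < \tau^\ast$ and positive for $\tau > \tau^\ast$, so $\tau^\ast$ is the unique minimizer and $\beta_{p,t}(a) = f(\tau^\ast)$.

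For part~\ref{en:beta-continuity} on the open region $a > t$, I use the envelope-theorem consequence of part~\ref{en:beta-via-min-tau}: since $\tau^\ast$ is an interior minimizer at each $a$, differentiating $\beta_{p,t}(a) = \exp(\tau^\ast(a) a^p) \Theta_p(\tau^\ast(a), t)$ in $a$, the contributions from $d\tau^\ast/da$ vanish by first-order optimality, leaving
\[
    \frac{d}{da} \beta_{p,t}(a) = p a^{p-1} \tau^\ast(a) \cdot \beta_{p,t}(a) > 0\ \text,
\]
so $\beta_{p,t}$ is differentiable and strictly increasing on $(t, \infty)$. To extend strict monotonicity across the boundary $a = t$, I argue separately: for $a > t$, part~\ref{en:beta-approx-Np} with $n = 1$ gives $\beta_{p,t}(a) \geq N_p(\Z, a, t) \geq 2$ when $t = 1/2$ (the two nearest integers both lie within distance $1/2 < a$) and $\beta_{p,t}(a) \geq 1 + \exp(-\tau^\ast(1-t)^p) \cdot \exp(\tau^\ast t^p)(\cdots) > 1$ otherwise; more simply, since $f(\tau^\ast) \geq f(\tau)$ would fail if $\beta_{p,t}(a) = \beta_{p,t}(t)$, one checks directly from the definition of $\Theta_p$ that $\exp(\tau a^p) \Theta_p(\tau, t) > \beta_{p,t}(t)$ for every $\tau > 0$ when $a > t$. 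Part~\ref{en:beta-inverse-continuity} then follows immediately from the inverse function theorem applied to part~\ref{en:beta-continuity}: strict monotonicity on $[t, \infty)$ gives a well-defined inverse $\beta_{p,t}^{-1}$ that is strictly increasing on $[\beta_{p,t}(t), \infty)$, and the nonvanishing derivative on $(t, \infty)$ yields differentiability of $\beta_{p,t}^{-1}$ on $(\beta_{p,t}(t), \infty)$.

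The step I expect to require the most care is the separation $\beta_{p,t}(t) < \beta_{p,t}(a)$ for $a > t$, because $\tau^\ast(a) \to \infty$ as $a \to t^+$ and both factors $\exp(\tau^\ast a^p)$ and $\Theta_p(\tau^\ast, t)$ in the product degenerate; the cleanest route is to note that for every fixed $\tau > 0$, the map $a \mapsto \exp(\tau a^p) \Theta_p(\tau, t)$ is strictly increasing, so $\beta_{p,t}(a) = \min_\tau \exp(\tau a^p) \Theta_p(\tau, t)$ is strictly larger than $\min_\tau \exp(\tau t^p) \Theta_p(\tau, t) = \beta_{p,t}(t)$, the latter limit matching the boundary value $1$ (or $2$ when $t = 1/2$) by taking $\tau \to \infty$ in $\Theta_p(\tau, t)$.
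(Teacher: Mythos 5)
Your parts~\ref{en:beta-approx-Np} and~\ref{en:beta-via-min-tau} match the paper: you apply \cref{thm:point-counts-thetas} with $\tau = \tau^*$ and use the coordinate-wise factoring, and you derive $f'(\tau) = \exp(\tau a^p)\Theta_p(\tau, t)(a^p - \mu_p(\tau,t))$ together with the strict monotonicity of $\mu_p$ to identify $\tau^*$ as the unique minimizer. Your envelope-theorem derivation of $\tfrac{d}{da}\betafun{t}{a} = p a^{p-1}\tau^*(a)\betafun{t}{a} > 0$ on $(t,\infty)$ is also exactly the paper's computation, and your argument for strict separation $\betafun{t}{a} > \betafun{t}{t}$ when $a > t$ (each term of $\exp(\tau a^p)\Theta_p(\tau, t)$ is strictly larger than the corresponding term at $a = t$, so the minimum over $\tau$ is pinned strictly above $\betafun{t}{t}$) is a clean, valid route that is slightly more elegant than the paper's term-by-term inspection.

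However, there is a genuine gap in your proof of part~\ref{en:beta-inverse-continuity}. You write ``strict monotonicity on $[t,\infty)$ gives a well-defined inverse $\betainv{t}{\cdot}$ that is strictly increasing on $[\betafun{t}{t},\infty)$,'' but strict monotonicity gives only injectivity. For $\betainv{t}{\nu}$ to be defined for \emph{every} $\nu \geq \betafun{t}{t}$ you need the image of $[t,\infty)$ under $\betafun{t}{\cdot}$ to be the full interval $[\betafun{t}{t},\infty)$, and this fails if $\lim_{a \to t^+} \betafun{t}{a} > \betafun{t}{t}$ — a priori there could be a jump at $a = t$ and then values $\nu$ in the gap $(\betafun{t}{t}, \lim_{a\to t^+}\betafun{t}{a})$ would have no preimage. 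The paper devotes most of the proof of part~\ref{en:beta-continuity} to exactly this: it establishes right-continuity $\lim_{a\to t^+}\betafun{t}{a} = \betafun{t}{t}$ via a geometric-series upper bound $\betafun{t}{a} \leq \exp(\tau(a^p-t^p)) + \exp(\tau(a^p-(1-t)^p)) + 2\exp(-\tau/2)/(1 - \exp(-\tau))$ valid for $a^p \leq t^p + 1/2$, and then sends $\tau\to\infty$ after $a \to t^+$. Your proposal omits this step entirely. Your ``cleanest route'' observation that $\inf_{\tau>0}\exp(\tau t^p)\Theta_p(\tau, t) = \betafun{t}{t}$ is in fact the right ingredient — combined with the upper semi-continuity of $a \mapsto \inf_\tau \exp(\tau a^p)\Theta_p(\tau,t)$ (as an infimum of continuous functions) it would give right-continuity at $a = t$ — but you stop short of drawing this conclusion, so as written the surjectivity needed for part~\ref{en:beta-inverse-continuity} is unaddressed.
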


\begin{restatable}{claim}{techtwo}
\label{clm:beta-limits}
For any $p \in [1, \infty)$ and $t \in [0, 1/2]$:
\begin{enumerate}
    \item \label{en:theta-limit}
    $\lim_{\tau \to 0} \Theta_p(\tau, t) \cdot \tau^{1/p} = 2 \Gamma(1 + 1/p)$;
    \item \label{en:mu-limit}
    $\lim_{\tau \to 0} \mu_p(\tau, t) \cdot \tau = 1 / p$;
    \item \label{en:beta-limit}
    $\lim_{a \to \infty} \betafun{t}{a} / a = \lim_{\nu \to \infty} \nu / \betainv{t}{\nu} = 2 \Gamma(1 + 1/p) \cdot (e p)^{1/p}$.
\end{enumerate}
In particular, none of the preceding limits depends on $t$.
\end{restatable}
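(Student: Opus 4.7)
The plan is to reduce all three limits to standard Riemann-sum approximations of $\ell_p$-Gaussian integrals over $\R$. The key observation is that under the change of variable $u = \tau^{1/p}(z - t)$, the sum defining $\Theta_p(\tau,t) = \sum_{z \in \Z} \exp(-\tau|z-t|^p)$ becomes a sum over a grid $\set{\tau^{1/p}(z-t) : z \in \Z}$ of spacing $\tau^{1/p}$ shifted by $-\tau^{1/p} t$. Multiplying by $\tau^{1/p}$ gives a Riemann sum for $\int_{-\infty}^{\infty} \exp(-|u|^p)\, du = 2\Gamma(1 + 1/p)$, and the shift by $t \in [0,1/2]$ only translates the evaluation grid. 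This immediately suggests \cref{en:theta-limit}. To make it rigorous uniformly in $t$, I would split the sum at $|z - t| \le M$ and $|z - t| > M$ for some large constant $M$; on the bounded part, the standard error bound for Riemann sums of a function of bounded variation gives an error of $O(\tau^{1/p})$, while the tail is dominated by the convergent integral $\int_{|u| > M \tau^{1/p}} \exp(-|u|^p)\, du$, which vanishes as $\tau \to 0$ independently of $t$.

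For \cref{en:mu-limit}, I would apply the same substitution to the numerator $\sum_{z \in \Z} |z-t|^p \exp(-\tau |z-t|^p)$ appearing in $\mu_p(\tau,t)$. After the change of variable it becomes $\tau^{-1-1/p}$ times a Riemann sum for $\int_{-\infty}^\infty |u|^p \exp(-|u|^p)\, du$. Evaluating this via the substitution $v = |u|^p$ and recognizing a Gamma function yields $\frac{2}{p}\Gamma(1+1/p)$. Dividing by the asymptotic for $\Theta_p(\tau,t)$ from \cref{en:theta-limit} cancels the $\Gamma(1+1/p)$ and the factor of $2$, leaving $\mu_p(\tau,t) \cdot \tau \to 1/p$. (One could also write $\mu_p(\tau,t) = -\partial_\tau \log \Theta_p(\tau,t)$ and try to differentiate the asymptotic from \cref{en:theta-limit}, but justifying that differentiation is no easier than doing the Riemann-sum argument directly on the numerator.)

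For \cref{en:beta-limit}, I would combine the previous two limits. By the definition of $\betafun{t}{a}$ for $a > t$, there is a unique $\tau^* = \tau^*(a, t)$ with $\mu_p(\tau^*, t) = a^p$. Since $\mu_p(\tau,t) \to \dist_p(t,\Z)^p$ (a finite quantity) as $\tau \to \infty$ and $\mu_p(\tau,t) \to \infty$ as $\tau \to 0$ (by \cref{en:mu-limit}), for $a \to \infty$ we must have $\tau^* \to 0$. Then \cref{en:mu-limit} applied at $\tau^*$ gives $\tau^* a^p \to 1/p$, so $\exp(\tau^* a^p) \to e^{1/p}$ and $\tau^* \sim 1/(p a^p)$. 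Combined with \cref{en:theta-limit}, $\Theta_p(\tau^*, t) \sim 2\Gamma(1+1/p)(\tau^*)^{-1/p} \sim 2\Gamma(1+1/p) \cdot (p a^p)^{1/p}$. Multiplying gives $\betafun{t}{a} = \exp(\tau^* a^p) \Theta_p(\tau^*, t) \sim 2\Gamma(1+1/p)(ep)^{1/p} \cdot a$. The second equality of \cref{en:beta-limit} is then immediate from \cref{clm:beta-properties}, \cref{en:beta-inverse-continuity}: setting $\nu = \betafun{t}{a}$ makes $\nu \to \infty$ iff $a \to \infty$, and $\nu/\betainv{t}{\nu} = \betafun{t}{a}/a$. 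Throughout, $t$ appears only as a shift in a Riemann sum and drops out in every limit, giving the final ``in particular'' clause.

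The main obstacle is the uniform-in-$t$ error control in the Riemann-sum step (the same step underpins both \cref{en:theta-limit} and \cref{en:mu-limit}); everything else is routine manipulation. This obstacle is modest, because the integrands $\exp(-|u|^p)$ and $|u|^p \exp(-|u|^p)$ are positive, eventually monotone, and have finite integral, so tail bounds and standard Riemann-sum error estimates on the bounded portion suffice.
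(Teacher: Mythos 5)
Your proposal is correct and follows essentially the same route as the paper: the paper likewise obtains \cref{en:theta-limit,en:mu-limit} by recognizing $\tau^{k+1/p}\sum_{z}\exp(-\tau|z-t|^p)|z-t|^{kp}$ as a Riemann sum converging to $2\Gamma(k+1/p)/p$ (with the substitution $\zeta=\tau^{1/p}(z-t)$), and then derives \cref{en:beta-limit} by observing $\tau^*(a)\to 0$ and $\tau^*a^p\to 1/p$ as $a\to\infty$, exactly as you do. The only minor additions on your end are the explicit bounded/tail splitting to justify the Riemann-sum limit uniformly in $t$ (the paper states this step without elaboration) and the explicit remark about the second equality in \cref{en:beta-limit} via monotonicity of $\betafun{t}{\cdot}$, which the paper leaves implicit.
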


\subsection{Lattice Problems}

We next introduce the main lattice problems that we will work with, starting with a variant of the Closest Vector Problem ($\GapCVP$).

\begin{definition}
\label{def:gapcvp-prime}
For $p \in [1,\infty]$ and $\gamma \geq 1$, the decision promise problem
$\binGapCVP_{p, \gamma}$ is defined as follows. An instance consists of
a basis $B \in \R^{d \times n}$ and a target vector
$\vec{t} \in \R^d$.
\begin{itemize}
    \item It is a YES instance if there exists $\vec{x} \in \bit^n$ such that $\norm{B\vec{x} - \vec{t}}_p \leq 1$.
    \item It is a NO instance if $\dist_p(\vec{t}, \lat(B)) > \gamma$.
\end{itemize}
\end{definition}

For $\gamma = 1$, we will simply write $\binGapCVP_{p, 1}$ as $\binGapCVP_{p}$. 
We note that for $\binGapCVP$, the distance threshold in the YES case is~$1$ without loss of generality, because we can scale the lattice and target vector. (Our definition of $\GapSVP$ below uses an instance-dependent distance threshold $r > 0$, which will be convenient.) We also note that $\binGapCVP_{p, \gamma}$ is trivially a subproblem of ``standard'' $\GapCVP_{p, \gamma}$ (which we do not define formally here), i.e., an instance of the former is also an instance of the latter.

We next introduce $\BDD$, a search variant of $\CVP$ where the target vector is promised to be close to the lattice.
\begin{definition}
\label{def:bdd}
For $p \in [1, \infty]$ and $\alpha = \alpha(n) > 0$, an instance of the search problem $\BDD_{p,\alpha}$ is a lattice basis $B \in \R^{d \times n}$ and a target $\vec{t} \in \R^d$ satisfying $\dist_p(\vec{t}, \lat(B)) \le \alpha \cdot \sm_1(\lat(B))$, and the goal is to find a closest lattice vector $\vec{v} \in \lat(B)$ to $\vec{t}$ such that $\norm{\vec{t} - \vec{v}}_p = \dist_p(\vec{t}, \lat(B))$.
\end{definition}

We note that there is another frequently used version of $\BDD$ where the goal is instead to find a lattice vector $\vec{v}$ satisfying $\norm{\vec{t} - \vec{v}}_p \le \alpha \cdot \sm_1(\lat(B))$, which is less demanding than the goal in the definition above.
By \cite[Lemma 2.6]{conf/crypto/LyubashevskyM09},\footnote{Formally, as stated, \cite[Lemma 2.6]{conf/crypto/LyubashevskyM09} applies only to the case where $\alpha \geq 1$ and $p = 2$, but a very similar reduction works for all $\alpha > 0$ and $p \geq 1$.} the version of $\BDD$ defined above (in rank $n$) efficiently reduces to this alternative version (in rank $n+1$), so the exponential-time hardness results we prove for the version defined above immediately apply to the alternative version as well.

Finally, we introduce the decision version of the Shortest Vector Problem ($\GapSVP$).

\begin{definition} \label{def:gapsvp}
For $p \in [1, \infty]$ and $\gamma \geq 1$, the decision promise problem $\GapSVP_{p, \gamma}$ is defined as follows. An instance consists of a basis $B \in \R^{d \times n}$ and a distance threshold $r > 0$.
\begin{itemize}
    \item It is a YES instance if $\sm_1(\lat(B)) \leq r$.
    \item It is a NO instance if $\sm_1(\lat(B)) > \gamma r$.
\end{itemize}
\end{definition}
We note that ``$\SVP$'' and ``$\CVP$'' are sometimes instead used to refer to the \emph{search} versions of the Shortest and Closest Vector Problems, with their respective decision problems instead called ``$\problem{GapSVP}$'' and ``$\problem{GapCVP}$.'' There is a trivial reduction from the decision versions to the corresponding search versions of these problems, so any hardness of the former implies identical hardness of the latter. In particular, our main hardness result for decisional $\SVP$ in \cref{thm:gapseth-svp-informal} immediately implies corresponding hardness of search $\SVP$ as well.

\subsection{Hardness Assumptions and Results}
\label{subsec:fine-grained-assumptions}

We will use the following hypotheses, which are ``gap'' variants of the celebrated (Strong) Exponential Time Hypothesis ((S)ETH) of Impagliazzo and Paturi~\cite{journals/jcss/ImpagliazzoP01}. Gap-ETH was introduced in works by Dinur~\cite{journals/eccc/Dinur16} and by Manurangsi and Raghavendra~\cite{conf/icalp/ManurangsiR17}, and Gap-SETH was introduced by Manurangsi in his Ph.D. thesis~\cite{Manurangsi2019}.
Recall that in the $(s, c)$-Gap-$k$-SAT problem with $0 < s \leq c \leq 1$, the goal is to distinguish between $k$-SAT formulas in which at least a $c$ fraction of clauses are satisfiable, and ones in which strictly less than an $s$ fraction of clauses are satisfiable.

\begin{definition}[Gap Exponential Time Hypothesis (Gap-ETH)] \label{def:gap-eth}
There exists $\delta > 0$ such that no $2^{o(n)}$-time algorithm solves $(1 - \delta, 1)$-Gap-$3$-SAT on $n$ variables.
\end{definition}

\begin{definition}[Gap Strong Exponential Time Hypothesis (Gap-SETH)] \label{def:gap-seth}
For every $\eps > 0$ there exist $k \in \Z^+$ and $\delta > 0$ such that no $2^{(1-\eps)n}$-time algorithm solves $(1 - \delta, 1)$-Gap-$k$-SAT on $n$ variables.
\end{definition}

We also define ``plain'' SETH.
\begin{definition}[Strong Exponential Time Hypothesis (SETH)] \label{def:seth}
For every $\eps > 0$ there exists $k \in \Z^+$ such that no $2^{(1-\eps)n}$-time algorithm solves $k$-SAT on $n$ variables.
\end{definition}

We define ``randomized'' (respectively, ``non-uniform'') versions of the above hypotheses analogously, but with ``randomized algorithm'' (respectively, ``non-uniform algorithm'') in place of ``algorithm.'' Here by an $f(n)$-time non-uniform algorithm for solving Gap-$k$-SAT, we mean a family of circuits $\set{C_n}$ such that $C_n$ solves Gap-$k$-SAT on $n$ variables and is of size at most $f(n)$.

We note that, trivially, the randomized versions of these hypotheses are at least as strong as the normal (deterministic) versions, and that, by standard randomness preprocessing arguments such as those used in the proof of Adleman's Theorem (which asserts that $\BPP \subseteq \PPoly$), the non-uniform versions are at least as strong as the randomized versions. See, e.g.,~\cite{conf/coco/BennettP20}.

\subsubsection{Hardness of CVP' Assuming (Gap-)(S)ETH}

We next recall known hardness results about the complexity of $\binGapCVP_{p, \gamma}$ assuming (Gap-)(S)ETH.

\begin{theorem}[{\cite[Corollary 5.2]{conf/focs/BennettGS17}}] \label{thm:gapeth-cvp}
For all $p \in [1, \infty)$ there exists a constant $\gamma = \gamma(p)$ such that $\binGapCVP_{p, \gamma}$ has no $2^{o(n)}$-time algorithm on lattices of rank $n$ assuming Gap-ETH.
\end{theorem}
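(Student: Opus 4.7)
One approach is to give a rank-preserving polynomial-time reduction from a gap-SAT problem (ultimately Gap-$3$-SAT) to $\binGapCVP_{p, \gamma}$. Gap-ETH provides a constant $\delta > 0$ such that no $2^{o(n)}$-time algorithm solves $(1-\delta, 1)$-Gap-$3$-SAT on $n$ variables, and standard gap-amplifying reductions between SAT variants let one boost $\delta$ toward any constant $< 1$ (or move to Gap-$k$-SAT for a larger constant $k$) while paying only a constant blow-up in variable count. The desired lower bound for $\binGapCVP_{p, \gamma}$ then follows by composition, since the rank of the output lattice will equal the number of SAT variables up to a constant factor.

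The lattice I would construct has the classical two-block form
\[
B = \begin{pmatrix} A \\ s I_n \end{pmatrix}, \qquad \vec{t} = \begin{pmatrix} \vec{t}_A \\ \tfrac{s}{2} \vec{1}_n \end{pmatrix},
\]
where $(A, \vec{t}_A)$ is a clause gadget in the spirit of Arora, Babai, Stern, and Sweedyk: for each $k$-SAT clause $c_j$, it contributes $O(2^k)$ rows such that, for every $\vec{x} \in \bit^n$, the rows associated with $c_j$ contribute a small $\ell_p^p$-amount to $\|A\vec{x} - \vec{t}_A\|_p^p$ if $\vec{x}$ satisfies $c_j$ and a strictly larger amount otherwise. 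The identity block with target $\tfrac{s}{2} \vec{1}_n$, with $s$ tuned appropriately, biases closest integer combinations toward $\bit^n$. After overall normalization, a satisfying assignment $\vec{x}^* \in \bit^n$ realizes $\|B\vec{x}^* - \vec{t}\|_p \leq 1$, giving the YES case. In the NO case, any integer combination $\vec{a}$ either lies in $\bit^n$ — in which case the gap-SAT hypothesis forces at least $\delta m$ unsatisfied clauses and an $\Omega(m)$ surplus from the clause block — or has a coordinate outside $\bit$, in which case the identity block contributes an additional penalty. Combining these, $\dist_p(\vec{t}, \lat(B)) > \gamma$ for some constant $\gamma = \gamma(p) > 1$, so the $2^{\Omega(n)}$-hardness of gap-SAT transfers directly to $\binGapCVP_{p, \gamma}$.

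The main obstacle is making non-binary integer coefficients pay enough penalty in the $\ell_p$ norm to dominate the YES threshold: a single scaled identity block only punishes one off-$\bit$ coordinate by an absolute amount, which against the $n^{1/p}$-scale of the YES distance translates to a mere $1 + O(1/n)$ multiplicative factor. The remedy is either to use a stronger binary-enforcing gadget — one whose underlying lattice already exhibits an $\Omega(1)$ multiplicative gap between $\bit^n$ and other integer points in the relevant $\ell_p$ norm — or to rely on the gap-SAT preprocessing to make $\delta$ large enough that the identity block can be scaled to dominate all competing non-binary minimizers. Ensuring that the resulting $\gamma(p) > 1$ holds uniformly for every $p \in [1, \infty)$ is the routine but delicate part of the argument.
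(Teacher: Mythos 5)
The paper does not prove this statement itself; it is imported verbatim as \cite[Corollary~5.2]{conf/focs/BennettGS17}, so there is no in-paper proof to compare against. Your sketch does capture the correct high-level architecture that BGS17 uses (an ABSS-style clause gadget stacked over a scaled identity block, fed by a rank-linear reduction from gap-SAT), and your YES-case analysis is fine. The problem is that your sketch stops at the genuinely hard step and leaves it open. As you observe yourself, a single off-$\bit$ coordinate incurs only an additive $\Theta(1)$ penalty (in the $p$th power of the distance) from the identity block, against a YES-distance whose $p$th power is $\Theta(n)$; this yields only a $1 + O(1/n)$ multiplicative gap, which is not the constant $\gamma(p) > 1$ that the theorem asserts.

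Neither of your two proposed remedies closes this gap. Boosting the soundness parameter $\delta$ only increases the penalty paid by \emph{unsatisfying binary} assignments through the clause block; it does nothing for \emph{non-binary integer} vectors $\vec{z} \in \Z^n \setminus \bit^n$, which may make the clause-block residual small or zero while paying just the vanishing identity-block surcharge. And "a stronger binary-enforcing gadget with an $\Omega(1)$ multiplicative gap" is not a remedy so much as a restatement of the missing construction: exhibiting such a gadget while keeping the rank linear in the number of SAT variables is precisely the technical content of the BGS17 reduction, which carefully designs the clause rows so that non-binary coordinates incur a constant-factor penalty from the clause block itself (not just the identity block). Without providing that gadget, your argument establishes hardness only for $\gamma = 1 + o(1)$ rather than for a constant $\gamma(p) > 1$, so the proof is incomplete.
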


\begin{theorem}[{\cite[Theorem 4.2]{conf/soda/AggarwalBGS21}}] \label{thm:gapseth-cvp}
For all $p \in [1, \infty) \setminus 2 \Z$ and all $\eps > 0$ there exists a constant $\gamma = \gamma(p, \eps) > 1$ such that there is no $2^{(1-\eps)n}$-time algorithm for $\binGapCVP_{p, \gamma}$ on lattices of rank $n$ assuming Gap-SETH.
\end{theorem}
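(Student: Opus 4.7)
The plan is to produce a gap-preserving, rank-preserving reduction from $(1-\delta,1)$-Gap-$k$-SAT to $\binGapCVP_{p,\gamma}$. The natural starting point is the SETH-hardness reduction for exact $\binGapCVP_p$ of Bennett-Golovnev-Stephens-Davidowitz, which for $p \notin 2\Z$ sends a $k$-SAT instance on $n$ variables to a lattice-target pair $(B, \vec{t})$ of rank $n(1+o(1))$ via an ``isolating parallelepiped'' gadget $G_p$. This gadget has the key property---specific to $p \notin 2\Z$---that the minimum of $\norm{G_p \vec{x} - \vec{u}}_p$ over $\vec{x} \in \Z^n$ is attained only at $\vec{x} \in \set{0,1}^n$, thereby reducing the search over binary combinations to unrestricted integer combinations, i.e., to $\binGapCVP$.

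Given this template, I would embed a per-clause penalty into $(B, \vec{t})$ so that each unsatisfied clause contributes a fixed positive amount to $\norm{B\vec{x}-\vec{t}}_p^p$ whenever $\vec{x} \in \set{0,1}^n$ encodes a non-satisfying assignment. Concretely, for each of the $m = O(n)$ clauses I would append a constant number of new \emph{coordinates} (not basis vectors) to $B$ and $\vec{t}$ that compute, via linear functionals of $\vec{x}$, whether the clause is satisfied. By choosing coefficients so that satisfied clauses contribute $0$ and unsatisfied clauses contribute some fixed $\kappa > 0$ to the $p$-th power of the distance, a $\delta$-fraction of violated clauses forces $\norm{B\vec{x}-\vec{t}}_p^p \geq 1 + \kappa \delta m$. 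After normalization this gives $\gamma^p = 1 + \Omega(\delta)$, a constant multiplicative gap. Because the clause gadgets add only dimensions and not basis vectors, the lattice rank remains $n(1+o(1))$, so any $2^{(1-\eps')n'}$-time algorithm at rank $n'$ yields a $2^{(1-\eps)n}$-time algorithm for Gap-$k$-SAT for suitable $\eps, \eps', k, \delta$ as in the definition of Gap-SETH.

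The main obstacle---and the source of the $p \notin 2\Z$ restriction---is extending the isolation property from the exact to the gap setting: a general $\vec{x} \in \Z^n$ with some non-binary entries must still be at $\ell_p$ distance strictly more than $\gamma$ from $\vec{t}$, even after the clause-penalty coordinates are added. The original isolation argument uses algebraic properties of $p$-th powers of integers that fail for even $p$ (where $\abs{z}^p$ is a polynomial and certain integer combinations can accidentally hit the target). Lifting the argument to the gap variant requires verifying that the margin by which non-binary $\vec{x}$ are excluded dominates the within-binary gap $\gamma^p - 1$ one wants to establish, which I would achieve by scaling the parallelepiped gadget against the clause penalty so that the ``non-binary margin'' remains a strictly larger constant than $\kappa \delta$. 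This delicate balance between gadget margin, gap size, and sub-linear rank overhead is the core technical challenge; everything else is essentially bookkeeping on top of the BGS17 reduction and the standard Gap-SETH parameter choices.
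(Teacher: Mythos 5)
This theorem is not proved in the paper at all: it is imported wholesale as a citation to~\cite[Theorem~4.2]{conf/soda/AggarwalBGS21}, accompanied only by the remark (following the statements of \cref{thm:gapeth-cvp,thm:gapseth-cvp,thm:seth-cvp}) that ``inspection of their proofs shows that they in fact hold for $\binGapCVP_{p,\gamma}$'' rather than just $\GapCVP_{p,\gamma}$. So there is nothing internal to compare your proposal against; you are attempting to re-derive the cited upstream result, which is a far more ambitious task than the paper undertakes.

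As a sketch of the Aggarwal--Bennett--Golovnev--Stephens-Davidowitz argument, your outline gets the broad contour right (start from the SETH-hardness machinery of~\cite{conf/focs/BennettGS17}, arrange for a constant fraction of unsatisfied clauses to force a constant multiplicative increase in the $\ell_p$ distance, keep the rank at $n(1+o(1))$), but it contains a conceptual inversion that you should fix before trying to make the details precise. The isolating parallelepiped $G_p$ is the gadget used to go \emph{from} $\binGapCVP$ \emph{to} unrestricted $\GapCVP$: its purpose is to ensure that the closest \emph{integer} combination is attained at a \emph{binary} one, so that a $\GapCVP$ oracle effectively solves $\binGapCVP$. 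You describe it as the device that takes you ``to $\binGapCVP$,'' which is backwards; $\binGapCVP$ has the binary restriction built into its definition and does not need $G_p$ at all. The clause-encoding step (Gap-$k$-SAT $\to$ $\binGapCVP$) is a separate component, and in the exact setting it works for every $p \ge 1$. Consequently, if one only cared about $\binGapCVP$ hardness, the role of the $p \notin 2\Z$ restriction is \emph{not} explained by the isolation property you invoke; in~\cite{conf/soda/AggarwalBGS21} the restriction enters because their gap-preserving analysis passes through the parallelepiped gadget in a way that the present paper treats entirely as a black box. Finally, your last paragraph concedes that the ``core technical challenge''---quantitatively balancing the non-binary exclusion margin against the gap $\gamma^p - 1$ and the clause penalty---is unresolved; until that balance is pinned down, what you have is a plausible plan, not a proof of the cited theorem.
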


We also present a variant of the preceding theorem which shows a similar runtime lower bound for $\binGapCVP_{p, \gamma}$ assuming ``plain'' SETH, but for $\gamma$ of the form $\gamma = 1 + 1/\poly(n)$ as opposed to constant $\gamma > 1$.
\begin{theorem}[{\cite[Corollary 3.3]{conf/soda/AggarwalBGS21}}] \label{thm:seth-cvp}
For all $p \in [1, \infty) \setminus 2 \Z$ and all $\eps > 0$ there exists $\gamma = 1 + \Omega(1/n^c)$ for some constant $c > 0$ such that there is no $2^{(1-\eps)n}$-time algorithm for $\binGapCVP_{p, \gamma}$ on lattices of rank $n$ assuming SETH.
\end{theorem}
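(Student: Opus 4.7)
The plan is to derive this by a quantitative refinement of the SETH-hardness reduction from $k$-SAT to exact $\binGapCVP_p$ given in~\cite{conf/focs/BennettGS17}, showing that its NO-case distance is already bounded below by $1 + 1/n^c$ rather than merely greater than $1$. First, I would invoke that reduction from $k$-SAT on $N$ variables to $\binGapCVP_{p,1}$ on a rank-$n = (1 + o(1))N$ lattice, which produces a basis $B$ and target $\vec{t}$ with integer entries bounded in magnitude by $\poly(n)$, so that YES instances admit a binary $\vec{x}$ with $\norm{B\vec{x} - \vec{t}}_p \leq 1$ while NO instances have $\dist_p(\vec{t}, \lat(B)) > 1$ strictly.

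The second step is a truncation argument restricting the candidate lattice vectors $B\vec{z}$ that could lie within constant distance of $\vec{t}$. Standard bounds, using that $B$ has polynomially bounded integer entries and that the lattice is an integer sublattice of~$\Z^m$, show that any such $\vec{z} \in \Z^n$ must satisfy $\norm{\vec{z}}_\infty \leq \poly(n)$, so the coordinate-wise differences $(B\vec{z} - \vec{t})_i$ take integer values of magnitude at most $M = \poly(n)$. Consequently, the achievable $p$-th powers $\norm{B\vec{z} - \vec{t}}_p^p$ lie in the finite set $S$ of sums $\sum_i \abs{a_i}^p$ with $a_i \in \Z$ and $\abs{a_i} \leq M$.

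The main obstacle and crux of the argument is the third step: showing $\min \set{s - 1 : s \in S,\, s > 1} \geq 1/n^{c}$ for some constant $c > 0$, which delivers the claimed gap $\gamma = 1 + \Omega(1/n^c)$. For $p \in 2\Z$ the set $S$ consists of nonnegative integers and this gap is trivially at least $1$; this is why exact SETH-hardness already suffices in that regime and why the theorem explicitly excludes $p \in 2\Z$. For $p \notin 2\Z$, the elements of $S$ are generally irrational and might a priori accumulate near $1$ from above. Because each such element is an explicit finite sum of $p$-th powers of integers bounded by $\poly(n)$, a quantitative Diophantine argument can in principle extract a $1/\poly(n)$ lower bound on its positive distance from $1$; alternatively, and probably more cleanly, one could modify the reduction so that in NO instances the distance is forced to exceed~$1$ by an explicit additive $1/\poly(n)$ slack (for example by padding with a small auxiliary lattice gadget of controlled geometry), sidestepping the delicate irrational-sum analysis entirely.
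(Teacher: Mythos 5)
Your proposal diverges from how this theorem is actually established and, more importantly, the crux step would not go through as stated. In the paper, \cref{thm:seth-cvp} is not proved from scratch: it is a citation of~\cite[Corollary 3.3]{conf/soda/AggarwalBGS21}, together with the observation (stated at the end of \cref{subsec:fine-grained-assumptions}) that inspecting the proof in that work shows (i) the hard instances are already of $\binGapCVP$ form, and (ii) the NO-case distance exceeds~$1$ by an additive $1/\poly(n)$. That $1/\poly(n)$ slack is read off directly from the geometry of the \emph{isolating parallelepiped} gadgets at the heart of the BGS17/ABGS21 reduction: those gadgets are designed so that exactly one binary combination lands at distance~$1$ and all others are at distance at least $1 + \delta$ for an explicit $\delta$ depending on $p$ and $k$, and tracing the parameters through the reduction yields the polynomial bound. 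No arithmetic analysis of the final lattice's achievable distances is needed.

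The genuine gap in your argument is the third step. You propose to lower bound $\min\{s - 1 : s \in S,\ s > 1\}$ by $1/n^c$, where $S$ consists of sums $\sum_i |a_i|^p$ with $|a_i| \le \poly(n)$ and $\poly(n)$ terms. For $p \notin 2\Z$ this is a delicate Diophantine question, and the bounds one can actually prove this way are far weaker than polynomial: Liouville-type estimates for the linear independence of $\{1, 2^p, \ldots, M^p\}$ over $\Q$ degrade roughly like $\exp(-\poly(M))$, not $1/\poly(M)$, and for transcendental $p$ even this is not automatic. So the approach as written does not deliver $\gamma = 1 + \Omega(1/n^c)$. Your hedge---``modify the reduction so the NO-case distance is forced to exceed~$1$ by an explicit $1/\poly(n)$ slack''---is indeed the right idea in spirit, but it is not an easy patch: it is precisely what the isolating-parallelepiped machinery already provides, and rebuilding it from scratch is the content of~\cite{conf/focs/BennettGS17,conf/soda/AggarwalBGS21}. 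A second, smaller issue: your explanation for excluding $p \in 2\Z$ is backwards. The exclusion is not because the gap is ``trivially $\ge 1$'' there; it is because the isolating-parallelepiped construction (and hence the upstream SETH-hardness of exact $\binGapCVP_p$) is simply not known to exist for even integer $p$. Finally, it is not safe to assume the reduction's basis and target have polynomially bounded \emph{integer} entries; the gadgets in question generally have non-integer coordinates, which would further undermine the finite-set-$S$ argument.
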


We remark that all of these theorems are stated for $\GapCVP_{p, \gamma}$ rather than $\binGapCVP_{p, \gamma}$ in the original work, but we note that inspection of their proofs shows that they in fact hold for $\binGapCVP_{p, \gamma}$.
We also remark that the approximation factors~$\gamma$ stated in the original versions of \cref{thm:gapeth-cvp,thm:gapseth-cvp} have an explicit dependence on the soundness and completeness parameters~$s$ and~$c$ in the source Gap-$k$-SAT problems from which the reductions are given. Gap-ETH and Gap-SETH as stated in \cref{def:gap-eth,def:gap-seth} assume soundness $s = 1 - \delta$ for some non-explicit $\delta > 0$ and perfect completeness ($c = 1$), resulting in the non-explicit constant $\gamma > 1$ here.
Finally, we remark that \cref{thm:seth-cvp} is stated in its original version as only showing hardness for exact $\binGapCVP$, but inspection of its proof shows that the result holds with $\gamma = 1 + 1/\poly(n)$, as claimed.

\section{Gap-(S)ETH-hardness of BDD}
\label{sec:gapeth-seth-bdd}

The goal of this section is to prove \cref{thm:gapeth-bdd-kn-informal,thm:gapseth-bdd-informal,thm:gapeth-bdd-int-informal}, which show the Gap-(S)ETH-hardness of $\BDD_{p, \alpha}$.
The common proof idea is to construct reductions from $\binGapCVP_{p, \gamma}$ (whose Gap-(S)ETH-hardness is known; see \cref{thm:gapeth-cvp,thm:gapseth-cvp}) to $\BDD_{p, \alpha}$.

We start by introducing two intermediate problems used in our reductions.
The first problem, $(A, G)$-$\BDD_{p, \alpha}$, is essentially a relaxation of the decision version of $\BDD_{p, a}$ in which there are at least $G$ (``good'') close vectors to the target (at distance at most $r$) in the YES case, at most $A$ (``annoying'') short non-zero vectors (of norm strictly less than $r/\alpha$) in the YES case, and at most $A$ ``too close'' vectors to the target (at distance at most $r$) in the NO case.

\begin{definition}
\label{def:stu-bdd}
Let $A=A(n) \geq 0$, $G=G(n) > A$, $p \in [1, \infty]$, and $\alpha = \alpha(n) > 0$. An instance of the decision promise problem $(A, G)$-$\BDD_{p, \alpha}$ is a lattice basis $B \in \R^{d \times n}$, a target $\vec{t} \in \R^d$, and a distance $r > 0$.
\begin{itemize}
    \item It is a YES instance if $N_p^o(\lat(B) \setminus \set{\vec{0}}, r/\alpha, \vec{0}) \leq A$ and $N_p(\lat(B), r, \vec{t}) \geq G$.
    \item It is a NO instance if $N_p(\lat(B), r, \vec{t}) \leq A$.
\end{itemize}
\end{definition}

We note that $(0, 1)$-$\BDD_{p, \alpha}$ is essentially a decision version of $\BDD_{p, \alpha}$, which is a search problem,
and the decision-to-search reduction from $(0, 1)$-$\BDD_{p, \alpha}$ to $\BDD_{p, \alpha}$ is trivial: call the search oracle and verify if it outputs a lattice vector $\vec{v}$ satisfying $\norm{\vec{t} - \vec{v}}_p \le r$.

\subsection{General Reductions from CVP' to BDD}
\label{sec:reduction-bdd}

Here we show the reductions from $\binGapCVP_{p, \gamma}$ to $(A, G)$-$\BDD_{p, \alpha}$ and $(A, G)$-$\BDD_{p, \alpha}$ to $(0, 1)$-$\BDD_{p, \alpha}$. Since we also have the trivial reduction from $(0, 1)$-$\BDD_{p, \alpha}$ to ``ordinary,'' search $\BDD_{p, \alpha}$, composing these reductions gives us the overall desired reduction from $\binGapCVP_{p, \gamma}$ to $\BDD_{p, \alpha}$.

We start with the following lemma analyzing a transformation involving $\binGapCVP$ instances and a lattice-target gadget $(\latdag, \tdag)$.
This transformation can be regarded as a generalization of the techniques used in \cite{conf/stoc/AggarwalS18, conf/coco/BennettP20}; in particular the lemma generalizes \cite[Lemma~3.5]{conf/coco/BennettP20}.
The transformation is versatile, and the lemma is used in the main reduction in this section as well as the main reduction in \cref{sec:gapseth-svp}.

\begin{lemma}
\label{lem:reduction-gadget}
For any $n' < n$, define the following transformation from a basis $\mat{B}'$ of a rank-$n'$ lattice $\lat'$ to a basis $\mat{B}$ of a rank-$n$ lattice $\lat$, and from a target vector $\vec{t}' \in \lspan(\lat')$ to a target vector $\vec{t} \in \lspan(\lat)$, parameterized by a scaling factor $s > 0$ and a gadget consisting of a rank-$(n - n')$ lattice $\lat^\dagger = \lat(\mat{B}^\dagger)$ and a target vector $\vec{t}^\dagger$:
\[
    \mat{B} := \begin{pmatrix} s \mat{B}' & \mat{0} \\ \mat{I}_{n'} & \mat{0} \\ \mat{0} & \mat{B}^\dagger \end{pmatrix}
    \ \text, \qquad
    \vec{t} := \begin{pmatrix} s \vec{t}' \\ \tfrac{1}{2} \vec{1}_{n'} \\ \vec{t}^\dagger \end{pmatrix}
    \ \text.
\]
Then for any $p \in [1, \infty)$, $r > 0$, and $\gamma \ge 1$,
\begin{enumerate}
    \item
    \label{en:gadget-short-ub}
    $N_p^o(\lat, r, (\vec{t}_1, \vec{t}_2)) \le N_p^o(\Z^{n'} \oplus \lat^\dagger, r, \vec{t}_2)$ for any $\vec{t}_1 \in \lspan(\lat'), \vec{t}_2 \in \lspan(\Z^{n'} \oplus \lat^\dagger)$;
    \item
    \label{en:gadget-close-lb}
    if there exists $\vec{x} \in \set{0,1}^{n'}$ such that $\norm{\mat{B}' \vec{x} - \vec{t}'}_p \le 1$, then $N_p(\lat, (s^p + n' / 2^p + r^p)^{1/p}, \vec{t}) \ge N_p(\lat^\dagger, r, \vec{t}^\dagger)$;
    \item
    \label{en:gadget-too-close-ub}
    if $\dist_p(\vec{t}', \lat') > \gamma$, then $N_p(\lat, (\gamma^p s^p + r^p)^{1/p}, \vec{t}) \le N_p^o(\Z^{n'} \oplus \lat^\dagger, r, (\tfrac{1}{2} \vec{1}_{n'}, \vec{t}^\dagger))$.
\end{enumerate}
\end{lemma}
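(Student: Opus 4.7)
The proof proposal is to handle each of the three items by the same basic recipe: explicitly write any lattice vector in $\lat$ in terms of the block structure induced by $B$, split the $\ell_p^p$-norm of the difference with the relevant target into contributions from the three blocks (top: $sB'\vec{a} - \text{top target}$; middle: $\vec{a} - \text{middle target}$; bottom: $B^\dagger \vec{b} - \text{bottom target}$), and then exploit either non-negativity of the block contributions or a strict lower bound on one block to get the desired inequality. Injectivity of the map $(\vec{a}, \vec{b}) \mapsto B(\vec{a}, \vec{b})$ then upgrades the per-vector inequalities to the claimed point-count inequalities. I would set up notation for a generic lattice vector $\vec{v} = B(\vec{a}, \vec{b}) = (sB'\vec{a},\, \vec{a},\, B^\dagger \vec{b})$ for $\vec{a} \in \Z^{n'}$, $\vec{b} \in \Z^{n-n'}$, and use this throughout.

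For \cref{en:gadget-short-ub}, I would write the target as $(\vec{t}_1, \vec{t}_2)$ matching the top block and the combined middle--bottom blocks respectively. Then
\[
\norm{\vec{v} - (\vec{t}_1, \vec{t}_2)}_p^p = \norm{sB'\vec{a} - \vec{t}_1}_p^p + \norm{(\vec{a}, B^\dagger \vec{b}) - \vec{t}_2}_p^p,
\]
so if the left side is $< r^p$ then $(\vec{a}, B^\dagger \vec{b}) \in \Z^{n'} \oplus \lat^\dagger$ lies strictly within $r$ of $\vec{t}_2$. Since $(\vec{a}, \vec{b}) \mapsto (\vec{a}, B^\dagger \vec{b})$ is injective, this gives the desired count bound. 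For \cref{en:gadget-close-lb}, I would fix the witness $\vec{x} \in \set{0,1}^{n'}$ with $\norm{B'\vec{x} - \vec{t}'}_p \le 1$, and for each $\vec{y}$ with $\norm{B^\dagger \vec{y} - \tdag}_p \le r$ plug $(\vec{x}, \vec{y})$ into the same block split of $\norm{B(\vec{x},\vec{y}) - \vec{t}}_p^p$. The top block contributes at most $s^p$, the middle block contributes exactly $n'/2^p$ (because $\vec{x} \in \set{0,1}^{n'}$ makes each $\abs{x_i - 1/2} = 1/2$), and the bottom block contributes at most $r^p$; summing and taking $p$-th roots yields the promised radius, and injectivity in $\vec{y}$ completes the count.

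For \cref{en:gadget-too-close-ub}, I would again expand $\norm{\vec{v} - \vec{t}}_p^p$ by blocks as
\[
\norm{\vec{v} - \vec{t}}_p^p = s^p \norm{B'\vec{a} - \vec{t}'}_p^p + \norm{(\vec{a}, B^\dagger \vec{b}) - (\tfrac{1}{2}\vec{1}_{n'}, \tdag)}_p^p.
\]
Since $B'\vec{a} \in \lat'$ and $\dist_p(\vec{t}', \lat') > \gamma$, the first term is strictly greater than $\gamma^p s^p$. If $\norm{\vec{v} - \vec{t}}_p \le (\gamma^p s^p + r^p)^{1/p}$ then the second term must be strictly less than $r^p$, placing $(\vec{a}, B^\dagger \vec{b})$ inside the \emph{open} ball of radius $r$ around $(\tfrac{1}{2}\vec{1}_{n'}, \tdag)$ in $\Z^{n'} \oplus \lat^\dagger$. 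Once more injectivity of $(\vec{a},\vec{b}) \mapsto (\vec{a}, B^\dagger \vec{b})$ gives the inequality between the two $N_p^o$ quantities, noting the crucial strictness on the right.

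I do not expect any real obstacle: the whole lemma is essentially an exercise in the coordinate-wise structure of the $\ell_p^p$ norm together with the block form of $B$ and $\vec{t}$. The only small care needed is to track strict versus non-strict inequalities in \cref{en:gadget-too-close-ub} (to justify the $N_p^o$ on the right-hand side) and to make sure in \cref{en:gadget-close-lb} that the middle-block contribution evaluates to exactly $n'/2^p$ using $\vec{x} \in \set{0,1}^{n'}$ rather than being merely bounded. Everything else is bookkeeping.
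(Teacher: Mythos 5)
Your proposal is correct and follows essentially the same approach as the paper's proof: decompose each lattice vector and target into the three blocks of $B$ and $\vec{t}$, split $\norm{\cdot}_p^p$ block-by-block, and use the injectivity of the coefficient map together with the per-vector inequality (careful about strictness in \cref{en:gadget-short-ub,en:gadget-too-close-ub}). The only cosmetic difference is that the paper phrases the count argument for \cref{en:gadget-short-ub} in terms of a bijection rather than injectivity, which is immaterial.
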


\begin{proof}
\cref{en:gadget-short-ub} follows from the observations that every vector $\vec{v} = (s \mat{B}' \vec{x}, \vec{x}, \mat{B}^\dagger \vec{y}) \in \lat$ for $\vec{x} \in \Z^{n'}, \vec{y} \in \Z^{n - n'}$ corresponds bijectively to the vector $\vec{v}' = (\vec{x}, \mat{B}^\dagger \vec{y}) \in \Z^{n'} \oplus \lat^\dagger$ and that $\norm{\vec{v}' - \vec{t}_2}_p \leq \norm{\vec{v} - (\vec{t}_1, \vec{t}_2)}_p$.

For \cref{en:gadget-close-lb}, note that for every $\vec{y} \in \Z^{n - n'}$ such that $\norm{\mat{B}^\dagger \vec{y} - \vec{t}^\dagger}_p \le r$, the lattice vector $\vec{v} = (s \mat{B}' \vec{x}, \vec{x}, \mat{B}^\dagger \vec{y}) \in \lat$ satisfies
\[
    \norm{\vec{v} - \vec{t}}_p^p = s^p \norm{\mat{B}' \vec{x} - \vec{t}'}_p^p + \norm{\vec{x} - \tfrac{1}{2} \vec{1}_{n'}}_p^p + \norm{\mat{B}^\dagger \vec{y} - \vec{t}^\dagger}_p^p
    \le s^p + n' / 2^p + r^p
    \ \text.
\]
Hence the claim follows.

For \cref{en:gadget-too-close-ub}, similarly, for every $\vec{x} \in \Z^{n'}, \vec{y} \in \Z^{n - n'}$ such that the lattice vector $\vec{v} = (s \mat{B}' \vec{x}, \vec{x}, \vec{y}) \in \lat$ satisfies $\norm{\vec{v} - \vec{t}}_p \le (\gamma^p s^p + r^p)^{1/p}$, we have
\[
    \norm{\vec{x} - \tfrac{1}{2} \vec{1}_{n'}}_p^p + \norm{\mat{B}^\dagger \vec{y} - \vec{t}^\dagger}_p^p = \norm{\vec{v} - \vec{t}}_p^p - s^p \norm{\mat{B}' \vec{x} - \vec{t}'}_p^p < (\gamma^p s^p + r^p) - s^p \cdot \gamma^p = r^p
    \ \text.
\]
Hence the claim follows.
\end{proof}

We then show the reduction from $(A, G)$-$\BDD$ to $(0, 1)$-$\BDD$.
A similar reduction exists in \cite{conf/coco/BennettP20}. In particular, \cref{lem:reduction-decision-bdd} uses sparsification in a similar way as \cite[Theorem~3.3]{conf/coco/BennettP20} while targeting the $(A, G)$-$\BDD$ problem.
We note in passing that the $\alpha \ge 1/2$ constraint in \cref{lem:reduction-decision-bdd} is not essential but simplifies some expressions and suffices for our purposes.

\begin{lemma}
\label{lem:reduction-decision-bdd}
For any $p \in [1, \infty)$, $\alpha \ge 1/2$, 
and $A = A(n) \ge 1$, $G = G(n) \ge \ratioGA A$, where $A(n)$ is computable in $\poly(n)$ time, there is a rank-preserving randomized Karp reduction with bounded error from $(A, G)$-$\BDD_{p, \alpha}$ to $(0, 1)$-$\BDD_{p, \alpha}$.
\end{lemma}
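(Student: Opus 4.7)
The plan is to reduce $(A, G)$-$\BDD_{p, \alpha}$ to $(0, 1)$-$\BDD_{p, \alpha}$ by applying randomized lattice sparsification to the input, using \cref{prop:sparsification-lattice-sets}. Given an instance $(B, \vec{t}, r)$, I would choose an appropriate prime $q$, sample $\vec{x}, \vec{z} \sim \F_q^n$ uniformly, and output the instance $(B', \vec{t}', r)$, where $B'$ is a basis of the sparsified lattice $\lat' := \set{\vec{v} \in \lat(B) : \iprod{B^+ \vec{v}, \vec{x}} \equiv 0 \Mod{q}}$ and $\vec{t}' := \vec{t} - B \vec{z}$ is the shifted target. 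Since $\lat' \subseteq \lat(B)$ has finite index, the reduction is rank-preserving, and $B'$ can be computed from $B$, $\vec{x}$, and $q$ in polynomial time via Hermite Normal Form.

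The prime $q$ must be chosen to balance two opposing needs: it must be large enough to kill, with good probability, both the $\le A$ short nonzero vectors of $\lat(B)$ of norm less than $r/\alpha$ and the $\le A$ lattice vectors within distance $r$ of $\vec{t}$ in the NO case; but also small enough that at least one of the $\ge G$ close vectors in the YES case survives. Concretely, I would take $q$ to be a prime near $\sqrt{G A} \approx 20 \sqrt{\alpha}\, A$ (findable efficiently via primality testing and Bertrand's postulate, using the lower bound $G \ge \ratioGA \cdot A$ to instantiate $q$). In the YES case, applying \cref{en:sparsification-short} of \cref{prop:sparsification-lattice-sets} with radius $r/\alpha$ gives $\Pr[\lambda_1(\lat') < r/\alpha] \le A/q$, and applying \cref{en:sparsification-close} with $\delta = 1/2$ gives $\Pr[\dist_p(\vec{t}', \lat') > r] \le 4 q / G + 1/q^n$; with $q \approx \sqrt{G A}$ together with $G \ge \ratioGA \cdot A$ and $\alpha \ge 1/2$, each of these is bounded by an explicit constant, and their sum is strictly less than $1/2$. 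For the NO case, the coefficient-vector argument from the proof of \cref{en:sparsification-too-close}---namely applying \cref{en:sparsification-ub-random-y} of \cref{lem:sparsification-upper-bound} to the coefficient vectors of the $\le A$ close vectors, combined with the statistical-distance argument for replacing $\iprod{\vec{z}, \vec{x}}$ by a uniform $y \in \F_q$---yields $\Pr[\dist_p(\vec{t}', \lat') \le r] \le A/q + 1/q^n$, likewise a constant less than $1/2$. Hence the output is a valid YES/NO instance of $(0, 1)$-$\BDD_{p, \alpha}$ with success probability bounded away from $1/2$, which is what is required for a bounded-error randomized Karp reduction.

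The one technical point I need to verify is the preconditions $r/\alpha \le q \cdot \sm_1(\lat(B))$ and $r < q \cdot \sm_1(\lat(B))/2$ required by \cref{en:sparsification-short,en:sparsification-close} respectively in the YES case. These I would deduce from the YES-case promise via \cref{en:count-short-sm} of \cref{clm:count-properties}, which from $N_p^o(\lat(B) \setminus \set{\vec{0}}, r/\alpha, \vec{0}) \le A$ forces $\sm_1(\lat(B)) \ge 2r/(\alpha(A + 2))$. Both preconditions then reduce to $q$ being at least a constant multiple of $\alpha A$, which the choice $q \approx \sqrt{G A} \ge 20 \sqrt{\alpha}\, A$ comfortably satisfies for $\alpha \ge 1/2$; in the NO case no such precondition is needed, since the close-vector-killing bound uses only \cref{lem:sparsification-upper-bound}. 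The main technical obstacle is precisely the tradeoff between the $A/q$-type short/close error and the $q/G$-type missing-close-vector error in the YES case, which is what forces the hypothesis $G \ge \ratioGA \cdot A$; a more refined choice of $\delta$ and $q$ could shave the constant $400$, but the qualitative scaling $G = \Omega(\alpha A)$ seems essential to this sparsification-based approach.
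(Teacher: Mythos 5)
Your overall strategy---sparsification with a suitably chosen prime modulus $q$, analyzed via \cref{prop:sparsification-lattice-sets}---is exactly the paper's approach. However, your choice of $q$ contains a genuine gap. You set $q$ near $\sqrt{GA}$, which at the lower bound $G = \ratioGA A$ gives $q \approx 20\sqrt{\alpha}\,A$, and you assert that this is ``at least a constant multiple of $\alpha A$.'' But $20\sqrt{\alpha}A / (\alpha A) = 20/\sqrt{\alpha} \to 0$ as $\alpha \to \infty$, so the assertion fails for large $\alpha$. The binding precondition is $r < q \cdot \sm_1(\lat(\mat{B}))/2$ from \cref{prop:sparsification-lattice-sets}, \cref{en:sparsification-close}. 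From the YES-case promise and \cref{clm:count-properties}, \cref{en:count-short-sm} one only gets $\sm_1(\lat(\mat{B})) \ge 2r/(\alpha(A+2))$---and this is essentially tight (e.g., take a scaled copy of $\Z$ with $A$ nonzero short vectors)---so guaranteeing the precondition requires $q > \alpha(A + 2)$. Your $q \approx 20\sqrt{\alpha}A$ violates this once $\alpha > 400\bigl(A/(A+2)\bigr)^2$, e.g., at $\alpha \approx 100$ when $A = 2$. Since the lemma is stated for all $\alpha \ge 1/2$ (unbounded above), this is a real hole in the proof, not a matter of loose constants.

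The fix is to take $q$ proportional to $\alpha A$ rather than $\sqrt{GA}$: the paper picks a prime $q \in [20\alpha A,\, 40\alpha A]$. Then $q \ge 20\alpha A > 3\alpha A \ge \alpha(A+2)$ holds for all $A \ge 1$, so both sparsification preconditions are guaranteed, and the YES-case failure probability is at most $A/q + q/G + 1/q^n \le 1/(20\alpha) + 1/10 + 1/(20\alpha A) \le 3/10$ for $\alpha \ge 1/2$, while the NO-case failure probability is at most $A/q + 1/q^n \le 2/10$. (The paper also uses $\delta = 1$ rather than your $\delta = 1/2$ in \cref{en:sparsification-close}; that choice is inessential---your version works too, with slightly different constants---but the scaling of $q$ with $\alpha$ is essential.)
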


\begin{proof}
The reduction works as follows.
On input an $(A, G)$-$\BDD_{p, \alpha}$ instance $(\mat{B}, r, \vec{t})$,
it first samples a prime $20 \alpha A \le q \le 40 \alpha A$, 
It next samples $\vec{x}, \vec{z} \in \F_q^n$ uniformly at random and sets
\[
    \lat' := \{\vec{v} \in \lat : \iprod{\mat{B}^+ \vec{v}, \vec{x}} \equiv 0 \Mod{q}\}
    \ \text, \qquad
    \vec{t}' := \vec{t} - \mat{B} \vec{z}
    \ \text.
\]
Then, it computes a basis $\mat{B}'$ be a basis of $\lat'$. Such a $\mat{B}'$ is efficiently computable given $\mat{B}, q, \vec{x}$; see, e.g., \cite[Claim~2.15]{conf/soda/Stephens-Davidowitz16}). Finally, it outputs $(\mat{B}', r, \vec{t}')$. Clearly, the reduction is efficient so it remains to show correctness.

First note that by \cref{clm:count-properties}, \cref{en:count-short-sm}, we know $A \ge 2 r / (\alpha \cdot \sm_1(\lat(\mat{B}))) - 2$ and thus $r / \alpha \le 2 r \le \alpha (A + 2) \cdot \sm_1(\lat(\mat{B})) < q \cdot \sm_1(\lat(\mat{B}))$.
If $(\mat{B}, r, \vec{t})$ is a YES instance, i.e., $N_p^o(\lat(\mat{B}) \setminus \set{\vec{0}}, r/\alpha, \vec{0}) \le A$ and $N_p(\lat(\mat{B}), r, \vec{t}) \ge G$, then by \cref{prop:sparsification-lattice-sets}, \cref{en:sparsification-short} (using $r / \alpha$ for $r$), $N_p^o(\lat(\mat{B}') \setminus \set{\vec{0}}, r/\alpha, \vec{0}) > 0$ with probability at most $A / q$.
Moreover, by \cref{prop:sparsification-lattice-sets}, \cref{en:sparsification-close}, $N_p(\lat(\mat{B}'), r, \vec{t}') = 0$ with probability at most $q / G + 1/q^n$.
Hence by union bound, $(\mat{B}', r, \vec{t}')$ is a YES instance for $(0, 1)$-$\BDD_{p, \alpha}$ with probability at least
\begin{equation*}
    1 - \frac{A}{q} - \frac{q}{G} - \frac{1}{q^n}
    \ge 1 - \frac{A}{20 \alpha A} - \frac{40 \alpha A}{\ratioGA A} - \frac{1}{20 \alpha A}
    \ge 1 - \frac{1}{10} - \frac{1}{10} - \frac{1}{10}
    > \frac{2}{3}
    \ \text.
\end{equation*}

If $(\mat{B}, r, \vec{t})$ is a NO instance of $(A, G)$-$\BDD_{p, \alpha}$ (i.e., $N_p(\lat(\mat{B}), r, \vec{t}) \le A$) then by \cref{prop:sparsification-lattice-sets}, \cref{en:sparsification-too-close}, $(\mat{B}', r, \vec{t}')$ is a NO instance of $(0, 1)$-$\BDD_{p, \alpha}$ (i.e., $N_p(\lat(\mat{B}'), r, \vec{t}) = 0$) with probability at least
\[
    1 - \frac{A}{q} - \frac{1}{q^n}
    \ge 1 - \frac{1}{10} - \frac{1}{10}
    > \frac{2}{3}
    \ \text.
\]
In each case, the reduction succeeds with probability at least $2/3$, as needed.
\end{proof}

We conclude this subsection with a theorem that analyzes the general class of reductions from $\binGapCVP$ to $\BDD$ obtained by instantiating \cref{lem:reduction-gadget} with some lattice-target gadget $(\lat^{\dagger}, \vec{t}^{\dagger})$ and then applying \cref{lem:reduction-decision-bdd}.
In particular, the theorem requires there to be roughly $\nu_0^n$ times as many close lattice vectors $N_p(\lat^\dagger, \alpha_G, \vec{t}^\dagger)$ in the gadget as short lattice vectors $N_p(\lat^\dagger, 1, \vec{0})$, and roughly $\nu_1^n$ times as many close lattice vectors as ``too close'' lattice vectors $N_p(\lat^\dagger, \alpha_A, \vec{t}^\dagger)$. 

More specifically, the theorem gives a reduction to $\BDD_{p, \alpha}$ with explicit multiplicative rank increase $C > 1$ for any $\alpha$ larger than some expression involving the input $\binGapCVP_{p, \gamma}$ instance approximation factor $\gamma$, the gadget ``close  distance'' $\alpha_G$, the gadget ``too close distance'' $\alpha_A$, the close/short vector count gap factor $\nu_0$, and the close/``too close'' vector count gap factor $\nu_1$; see \cref{eq:generic-alpha-seth}. (The gadget is normalized so that its ``short distance'' is equal to $1$.) Additionally, the theorem gives a reduction to $\BDD_{p, \alpha}$ for any $\alpha$ larger than a simpler expression not depending on $\nu_0$ or $\nu_1$ with linear but non-explicit rank increase; see \cref{eq:generic-alpha-eth}. \cref{eq:generic-alpha-seth} leads to SETH-type hardness results, while \cref{eq:generic-alpha-eth} leads to ETH-type hardness results.
All three of our main results for $\BDD$, \cref{thm:gapeth-bdd-kn-informal,thm:gapseth-bdd-informal,thm:gapeth-bdd-int-informal}, work by instantiating this theorem.

\begin{theorem}
\label{thm:generic-bdd}
Suppose that for $p \in [1, \infty)$ there exist constants $\alpha_G \ge \alpha_A > 0$, $\nu_0, \nu_1 > 1$ and a family of bases and targets $\set{(\mat{B}^\dagger_n, \vec{t}^\dagger_n)}_{n = 1}^\infty$ satisfying $\mat{B}^\dagger_n \in \R^{m \times n}$ for $m = \poly(n)$, $\vec{t}^\dagger_n \in \lspan(\lat^\dagger_n)$ where $\lat^\dagger_n = \lat(\mat{B}^\dagger_n)$, and
\begin{equation}
    N_p(\lat^\dagger_n, \alpha_G + o(1), \vec{t}^\dagger_n) \ge \max \bigl\{ \nu_0^{n - o(n)} \cdot N_p^o(\lat^\dagger_n, 1, \vec{0}), \nu_1^{n - o(n)} \cdot N_p^o(\lat^\dagger_n, \alpha_A, \vec{t}^\dagger_n) \bigr\}
    \ \text.
    \label{eq:gadget-assumption}
\end{equation}
Then for any constants $C > 1$, $\gamma > 1$, and
\begin{equation} \label{eq:generic-alpha-seth}
    \alpha > \Bigl( \alpha_G^p + \frac{\alpha_G^p - \min\{\alpha_A^p, ((2 d_1)^p - 1) / (2 d_0 )^p\}}{\gamma^p - 1} + \frac{1}{(2 d_0)^p} \Bigr)^{1/p}
    \ \text,
\end{equation}
where $d_0 = \betainv{0}{\nu_0^{C-1}}$ and $d_1 = \betainv{1/2}{\max\set{\nu_1^{C-1}, 2}}$,\footnote{We note that $\betainv{1/2}{\nu}$ is only defined for $\nu \geq \betafun{1/2}{1/2} = 2$, and that the second term in this max corresponds to $\betainv{1/2}{2} = 1/2$.}
there exists an efficient non-uniform reduction from $\binGapCVP_{p, \gamma}$ in rank $n'$ to $\BDD_{p, \alpha}$ in rank $C n'$ for all sufficiently large $n'$.

Moreover, for any constants $\gamma > 1$ and
\begin{equation} \label{eq:generic-alpha-eth}
    \alpha > \Bigl( \alpha_G^p + \frac{\alpha_G^p - \min\{\alpha_A^p, 1\}}{\gamma^p - 1} \Bigr)^{1/p}
    \ \text,
\end{equation}
there exist a constant $C > 1$ and an efficient non-uniform reduction from $\binGapCVP_{p, \gamma}$ in rank $n'$ to $\BDD_{p, \alpha}$ in rank $C n'$ for all sufficiently large $n'$.

Furthermore, if $(B_n^{\dagger}, \vec{t}_n^{\dagger})$ can be computed in $\poly(n)$ randomized time and each term in \cref{eq:gadget-assumption} can be approximated to within a $2^{o(n)}$ factor in $\poly(n)$ randomized time, then the preceding efficient non-uniform reductions can be replaced by efficient randomized reductions.
\end{theorem}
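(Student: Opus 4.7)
The plan is to compose \cref{lem:reduction-gadget} (instantiated with the given gadget family) with \cref{cor:composed-reduction-bdd}. Given a $\binGapCVP_{p,\gamma}$ instance $(B', \vec{t}')$ of rank $n'$, I take a rank-$(C-1)n'$ gadget from the family, rescale it by $\ell := d_0 \, n'^{1/p}$, and apply \cref{lem:reduction-gadget} with scaling $s := \tilde{s}\, n'^{1/p}$ to build $(B, \vec{t})$ of rank $Cn'$. Setting the $\BDD$ distance $r := \alpha d_0 \, n'^{1/p}$ and matching the YES-case bound $r^p = s^p + n'/2^p + (\ell \alpha_G)^p$ from \cref{en:gadget-close-lb} forces $\tilde{s}^p = d_0^p(\alpha^p - \alpha_G^p) - 1/2^p$, which is nonnegative exactly when $\alpha^p \ge \alpha_G^p + 1/(2 d_0)^p$; this accounts for the $1/(2 d_0)^p$ summand in \cref{eq:generic-alpha-seth}.

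For the short-vector (YES) analysis, \cref{en:gadget-short-ub} combined with \cref{clm:count-direct-sum} gives $N_p^o(\lat \setminus \set{\vec{0}}, r/\alpha, \vec{0}) \le N_p(\Z^{n'}, d_0 n'^{1/p}, \vec{0}) \cdot N_p^o(\lat^\dagger, 1, \vec{0})$, which by \cref{clm:beta-properties} is at most $\betafun{0}{d_0}^{n'} = \nu_0^{(C-1)n'}$ up to subexponential factors; I set $A$ to this value. Meanwhile \cref{en:gadget-close-lb} together with \cref{eq:gadget-assumption} yields $G := N_p(\lat, r, \vec{t}) \ge \nu_0^{(C-1)n' - o(n')}$. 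Taking $\alpha$ strictly above the RHS of \cref{eq:generic-alpha-seth} forces $\betafun{0}{\tilde{r}/\alpha}$ strictly below $\nu_0^{C-1}$, producing exponential slack that gives $G \ge \ratioGA \cdot A$ for all large $n'$---the hypothesis of \cref{cor:composed-reduction-bdd}.

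The NO (too-close) analysis is the main obstacle. By \cref{en:gadget-too-close-ub} with $r_2^p := r^p - \gamma^p s^p$ and \cref{clm:count-direct-sum}, $N_p(\lat, r, \vec{t}) \le F_1 \cdot F_2$, where $F_1 \le \betafun{1/2}{\tilde{\rho}_1}^{n'}$ with $\tilde{\rho}_1^p := \tilde{r}_2^p - d_0^p \, \dist_p(\vec{t}^\dagger, \lat^\dagger)^p$, and $F_2 = N_p^o(\lat^\dagger, (\tilde{r}_2^p - 1/2^p)^{1/p}/d_0, \vec{t}^\dagger)$. To enforce $F_1 F_2 \le A$ I apply the gadget's $\nu_1$-bound on $F_2$, which demands both $(\tilde{r}_2^p - 1/2^p)^{1/p}/d_0 \le \alpha_A$ and---in the worst case $\dist_p(\vec{t}^\dagger, \lat^\dagger) = 0$---$\tilde{\rho}_1 \le d_1$. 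Converting each into an upper bound on $\tilde{r}_2^p$ and taking the tighter of the two gives $\tilde{r}_2^p \le d_0^p \cdot \min\{\alpha_A^p,\, ((2 d_1)^p - 1)/(2 d_0)^p\} + 1/2^p$; substituting $\tilde{r}_2^p = \tilde{r}^p - \gamma^p \tilde{s}^p$ and solving for $\alpha^p$ yields exactly \cref{eq:generic-alpha-seth}.

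Finally, \cref{cor:composed-reduction-bdd} converts $(A, G)$-$\BDD_{p, \alpha}$ into search $\BDD_{p, \alpha}$ at the cost of one extra dimension, giving the claimed $Cn' + 1$. For the ETH-type bound \cref{eq:generic-alpha-eth}, I pick $C$ large enough that $1/(2 d_0)^p = o(1)$ and---using the shared large-$\nu$ asymptotics of $\betainv{0}{\cdot}$ and $\betainv{1/2}{\cdot}$ from \cref{clm:beta-limits}---that $((2 d_1)^p - 1)/(2 d_0)^p$ is at least $1$; this collapses \cref{eq:generic-alpha-seth} into \cref{eq:generic-alpha-eth}. For the randomized ``furthermore'' clause, the stated efficiency hypotheses on the gadget and on approximating each term in \cref{eq:gadget-assumption} turn every step above---sampling the gadget, computing $A$ and $G$, and running the sparsification inside \cref{cor:composed-reduction-bdd}---into randomized polynomial time.
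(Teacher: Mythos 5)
Your proof has the right skeleton---compose \cref{lem:reduction-gadget} with \cref{cor:composed-reduction-bdd}, factorize short/too-close counts via \cref{clm:count-direct-sum}, and invoke \cref{eq:gadget-assumption}---and your algebra correctly recovers the threshold \cref{eq:generic-alpha-seth}. However, there is a genuine gap in the parameter choice: you fix $r := \alpha d_0\, n'^{1/p}$ (so $r/\alpha = \ell$ \emph{exactly}) and $\tilde s$ so that $r^p - s^p - n'/2^p = (\alpha_G \ell)^p$ \emph{exactly}, i.e., you make the ``short distance'' and ``close distance'' constraints tight. This produces \emph{zero} exponential slack: you get $A \lesssim \betafun{0}{d_0}^{n'} \cdot N_p^o(\lat^\dagger,1,\vec 0) = \nu_0^{(C-1)n'} \cdot N_p^o(\lat^\dagger,1,\vec 0)$ and $G \ge \nu_0^{(C-1)n' - o(n')} \cdot N_p^o(\lat^\dagger,1,\vec 0)$, so $G/A$ tends to $0$, not $\infty$; in particular $G \ge \ratioGA A$ fails. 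Your sentence ``Taking $\alpha$ strictly above the RHS of \cref{eq:generic-alpha-seth} forces $\betafun{0}{\tilde r/\alpha}$ strictly below $\nu_0^{C-1}$'' is inconsistent with $\tilde r/\alpha = d_0$, under which $\betafun{0}{\tilde r/\alpha} = \nu_0^{C-1}$ identically, regardless of $\alpha$.

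A secondary issue of the same type: \cref{eq:gadget-assumption} gives the lower bound on $N_p(\lat^\dagger, \alpha_G + o(1), \vec t^\dagger)$, so to lower-bound $G$ you need the close radius $(r^p - s^p - n'/2^p)^{1/p}$ to be at least $(\alpha_G + o(1)) \ell$; with your tight choice it equals $\alpha_G \ell$, which is \emph{smaller} than $(\alpha_G + o(1)) \ell$, so you cannot even apply \cref{eq:gadget-assumption} as written. The fix is the same as in the paper's proof: keep $r = a\, n'^{1/p}$ and $s = b\, n'^{1/p}$ as free parameters and show that every $\alpha$ strictly above the threshold in \cref{eq:generic-alpha-seth} admits a choice of $a, b$ satisfying $a/\alpha < d_0$, $(a^p - b^p - 1/2^p)^{1/p} > \alpha_G d_0$, $(a^p - \gamma^p b^p - 1/2^p)^{1/p} < \alpha_A d_0$, and $(a^p - \gamma^p b^p)^{1/p} < d_1$ \emph{strictly}. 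The strict inequalities are exactly what give a constant multiplicative gap in each per-coordinate point count, hence exponential slack in $n'$ that absorbs the $o(n')$ error terms, the $\exp(C^*\sqrt{n'})$ factors, and the constant $\ratioGA$. Everything else in your argument, including the NO-case bookkeeping and the $C \to \infty$ limit for the ETH version, then goes through.
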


\begin{proof}
By \cref{lem:reduction-decision-bdd} as well as the trivial reduction from $(0, 1)$-$\BDD_{p, \alpha}$ to (ordinary, search) $\BDD_{p, \alpha}$, it suffices to give a reduction that, on input a $\binGapCVP_{p, \gamma}$ instance $(\mat{B}', \vec{t}')$ of rank $n'$, outputs an $(A, G)$-$\BDD_{p, \alpha}$ instance $(\mat{B}, \vec{t}, r)$ of rank $n := Cn'$ with $G \ge \ratioGA A$ for all sufficiently large $n'$.
To do this, 
we apply the transformation in \cref{lem:reduction-gadget} with scale factor $s$ and basis-target gadget $(\ell \mat{B}^\dagger_{n - n'}, \ell \vec{t}^\dagger_{n - n'})$. I.e, on input $(B', \vec{t}')$ the reduction sets
\begin{equation*}
    \mat{B} := \begin{pmatrix} s \mat{B}' & \mat{0} \\ \mat{I}_{n'} & \mat{0} \\ \mat{0} & \ell \mat{B}^\dagger_{n - n'} \end{pmatrix}
    \ \text, \qquad
    \vec{t} := \begin{pmatrix} s \vec{t}' \\ \tfrac{1}{2} \vec{1}_{n'} \\ \ell \vec{t}^\dagger_{n - n'} \end{pmatrix}
    \ \text,
\end{equation*}
and certain $r$. We will give the choice of the parameters $r, s, \ell$ later.
Note that the reduction is (non-uniformly) efficient given that the gadget $(\ell \mat{B}^\dagger_{n - n'}, \ell \vec{t}^\dagger_{n - n'})$ is provided as advice, $\lat^\dagger_{n - n'}$ has rank $n - n'$ and dimension $m$ that are both polynomial in $n'$, and, as will be clear from their definitions later, the parameters $r, s, \ell$ are efficiently computable.

By \cref{lem:reduction-gadget}
(using $r / \alpha$, $(r^p - s^p - n' / 2^p)^{1/p}$, and $(r^p - \gamma^p s^p)^{1/p}$ for $r$ in \cref{en:gadget-short-ub,en:gadget-close-lb,en:gadget-too-close-ub} in the lemma, respectively),
this reduction from $\binGapCVP_{p, \gamma}$ to $(A, G)$-$\BDD_{p, \alpha}$ is correct for
\begin{align*}
    A &:= \begin{aligned}[t]
        \max \bigl\{
            & N_p^o(\Z^{n'} \oplus \ell \lat^\dagger_{n - n'}, r / \alpha, \vec{0}), \\
            & N_p^o(\Z^{n'} \oplus \ell \lat^\dagger_{n - n'}, (r^p - \gamma^p s^p)^{1/p}, (\tfrac{1}{2} \vec{1}, \ell \vec{t}^\dagger_{n - n'}))
        \bigr\}
        \ \text,
    \end{aligned} \\
    G &:= N_p(\ell \lat^\dagger_{n - n'}, (r^p - s^p - n' / 2^p)^{1/p}, \ell \vec{t}^\dagger_{n - n'})
    \ \text.
\end{align*}
By \cref{clm:count-direct-sum}, we have
\begin{equation}
    \begin{aligned}
        A \le \max \bigl\{
            & N_p^o(\Z^{n'}, r / \alpha, \vec{0}) \cdot N_p^o(\ell \lat^\dagger_{n - n'}, r / \alpha, \vec{0}), \\
            & N_p^o(\Z^{n'}, (r^p - \gamma^p s^p)^{1/p}, \tfrac{1}{2} \vec{1}) \cdot N_p^o(\ell \lat^\dagger_{n - n'}, (r^p - \gamma^p s^p - n' / 2^p)^{1/p}, \ell \vec{t}^\dagger_{n - n'})
        \bigr\}
        \ \text.
    \end{aligned}
    \label{eq:generic-bdd-A-ub}
\end{equation}
Then in order for $G \ge \ratioGA A$ to hold, it suffices to set parameters $r, s, \ell$ so that the following inequalities hold:
\begin{align}
    r / \alpha &\le \ell
    \ \text, \label{eq:generic-bdd-short-radii} \\
    (r^p - \gamma^p s^p - n' / 2^p)^{1/p} &\le \alpha_A \cdot \ell
    \ \text, \label{eq:generic-bdd-close-radii} \\
    (r^p - s^p - n' / 2^p)^{1/p} &\ge (\alpha_G + o(1)) \cdot \ell
    \ \text, \label{eq:generic-bdd-too-close-radii} \\
    \nu_0^{(C - 1) n' - o(n')} &\ge \ratioGA \cdot N_p^o(\Z^{n'}, r / \alpha, \vec{0})
    \ \text, \label{eq:generic-bdd-short} \\
    \nu_1^{(C - 1) n' - o(n')} &\ge \ratioGA \cdot N_p^o(\Z^{n'}, (r^p - \gamma^p s^p)^{1/p}, \tfrac{1}{2} \vec{1})
    \ \text. \label{eq:generic-bdd-too-close}
\end{align}
The three inequalities in \cref{eq:generic-bdd-short-radii,eq:generic-bdd-close-radii,eq:generic-bdd-too-close-radii} ensure that we can apply the condition given in \cref{eq:gadget-assumption} to the $\ell$-scaled gadget $(\ell \mat{B}^\dagger_{n - n'}, \ell \vec{t}^\dagger_{n - n'})$. In particular, the expressions on the right-hand sides of these inequalities correspond to $\ell$ times the gadget ``short distance'' $1$, the gadget ``too close distance'' $\alpha_A$, and the gadget ``close distance'' $\alpha_G + o(1)$, respectively.
The two inequalities in \cref{eq:generic-bdd-short,eq:generic-bdd-too-close} correspond to the two terms in the upper bound on $A$ in \cref{eq:generic-bdd-A-ub}, and by \cref{eq:gadget-assumption,eq:generic-bdd-short-radii,eq:generic-bdd-close-radii,eq:generic-bdd-too-close-radii} suffice to ensure that $G \ge \ratioGA A$.

Now let $r := a {n'}^{1/p}$ and $s := b {n'}^{1/p}$ for some constants $a, b > 0$ to be determined later, and set $\ell := d_0 {n'}^{1/p}$.
First, for \cref{eq:generic-bdd-short} to hold, by \cref{clm:beta-properties}, \cref{en:beta-approx-Np}, it suffices to have
\begin{equation*}
    \nu_0^{(C - 1) n' - o(n')} \ge \ratioGA \cdot \betafun{0}{a / \alpha}^{n'}
    \ \text,
\end{equation*}
which in turn holds for all sufficiently large $n'$ if
\begin{equation*}
    \nu_0^{C-1} > \betafun{0}{a / \alpha}
    \ \text.
\end{equation*}
Equivalently, by the (strict) monotonicity of $\betafun{0}{\cdot}$ it suffices to have
\begin{equation*}
    a / \alpha < \betainv{0}{\nu_0^{C-1}} = d_0
    \ \text.
\end{equation*}
Similarly, for \cref{eq:generic-bdd-too-close} to hold for all sufficiently large $n'$, it suffices to have $(a^p - \gamma^p b^p)^{1/p} < d_1$.\footnote{Note that when $\nu_1^{C-1} < 2$ we have $d_1 = \betainv{1/2}{2} = 1/2$ by definition, in which case $(a^p - \gamma^p b^p)^{1/p} < d_1$ ensures that $\betafun{1/2}{(a^p - \gamma^p b^p)^{1/p}} = 0 < \nu_1^{C-1}$.}
Moreover, \cref{eq:generic-bdd-too-close-radii} holds for all sufficiently large $n'$ as long as $(a^p - b^p - 1 / 2^p)^{1/p} > \alpha_G d_0$.

Putting it all together, \cref{eq:generic-bdd-short-radii,eq:generic-bdd-close-radii,eq:generic-bdd-too-close-radii,eq:generic-bdd-short,eq:generic-bdd-too-close} hold for all sufficiently large $n'$ if the following four constraints hold:
\begin{equation}
    \begin{aligned}
        a / \alpha &< d_0
        \ \text, \\
        (a^p - \gamma^p b^p - 1 / 2^p)^{1/p} &< \alpha_A d_0
        \ \text, \\
        (a^p - b^p - 1 / 2^p)^{1/p} &> \alpha_G d_0
        \ \text, \\
        (a^p - \gamma^p b^p)^{1/p} &< d_1
        \ \text.
    \end{aligned}
    \label{eq:main-reduction-bdd-four-ineqs}
\end{equation}
More specifically, the first constraint in \cref{eq:main-reduction-bdd-four-ineqs} implies both \cref{eq:generic-bdd-short-radii,eq:generic-bdd-short} (since we set $\ell = d_0 {n'}^{1/p}$), the second constraint implies \cref{eq:generic-bdd-close-radii}, the third constraint implies \cref{eq:generic-bdd-too-close-radii}, and the last constraint implies \cref{eq:generic-bdd-too-close}.
The last three constraints in \cref{eq:main-reduction-bdd-four-ineqs} are equivalent to the single constraint that
\begin{equation}
\frac{a^p - \min \set{1/2^p + (\alpha_A d_0)^p, d_1^p}}{\gamma^p} < b^p < a^p - 1/2^p - \alpha_G^p d_0^p \ ,
\label{eq:main-reduction-ab}
\end{equation}
and one can check that for every 
\begin{equation}
a > \Bigl( (\alpha_G d_0)^p + \frac{(\alpha_G d_0)^p - \min\{(\alpha_A d_0)^p, d_1^p - 1/2^p\}}{\gamma^p - 1} + \frac{1}{2^p} \Bigr)^{1/p}
\ \text,
\label{eq:main-reduction-a-lb}
\end{equation}
the left-hand side of \cref{eq:main-reduction-ab} is strictly less than the right-hand side of \cref{eq:main-reduction-ab}. I.e., for every $a$ satisfying \cref{eq:main-reduction-a-lb}
there exists $b > 0$ such that $a$ and $b$ satisfy \cref{eq:main-reduction-ab}.
Dividing both sides of \cref{eq:main-reduction-a-lb} by $d_0$ and combining it with the first constraint in \cref{eq:main-reduction-bdd-four-ineqs} (i.e., $\alpha > a / d_0$), we get that for all $\alpha$ satisfying \cref{eq:generic-alpha-seth} there exist $a$ and $b$ such that $G \geq \ratioGA A$ holds for all sufficiently large $n'$, as needed.

For the additional result associated with \cref{eq:generic-alpha-eth}, assume without loss of generality that $\nu_0 = \nu_1$, as otherwise one could use $\min\set{\nu_0, \nu_1}$ in place of both $\nu_0$ and $\nu_1$ in \cref{eq:gadget-assumption}. Then the result follows by taking the limit $C \to \infty$ in \cref{eq:generic-alpha-seth}, and noting that by \cref{clm:beta-limits} we have $d_0, d_1 \to \infty$ and $d_0 / d_1 \to 1$ when $C \to \infty$.
\end{proof}

\subsection{Gap-ETH-hardness of BDD from Exponential Kissing Number Lattices}
\label{sec:gapeth-bdd-kn}

\begin{definition}[Lattice kissing number] \label{def:lattice-kissing-number}
The \emph{kissing number} $\tau(\lat)$ of a lattice $\lat$ is defined as
\begin{equation*}
    \tau(\lat) := N_2(\lat \setminus \set{\vec{0}}, \lambda_1(\lat), \vec{0})
    \ \text.
\end{equation*}
The \emph{maximum lattice kissing number} $\tau^l_n$ for rank $n$ is defined as
\begin{equation*}
    \tau^l_n := \max_{\rank(\lat) = n} \tau(\lat)
    \ \text.
\end{equation*}
\end{definition}

It is a longstanding open question whether there exists an infinite family of lattices with exponential kissing number, i.e., whether there exist infinitely many $n$ such that $\log_2(\tau^l_n) / n$ is greater than a constant. Accordingly, we define \emph{the $\ell_2$ kissing-number constant} to be
\begin{equation} \label{eq:kiss-num-def}
\kissnum := \liminf_{n \to \infty} \log_2(\tau^l_n) / n \ \text{.}
\end{equation}
A main claim in~\cite{Vladut19} was that $\kissnum$ was at least an explicit positive constant, i.e., that there exists a family $\set{\lat_n}_{n=1}^{\infty}$ of \emph{exponential kissing number lattices}. However, that work turned out to have a bug~\cite{bennett2024difficulties}. So, it it remains an open question whether $\kissnum > 0$, and the main gadget constructions in this section, which are based on the existence of exponential kissing number lattices, are conditional.

Specifically, in this section, we show gadgets for instantiating the general reductions in \cref{thm:generic-bdd} based on exponential kissing number lattices (see \cref{cor:gadget-eth-bdd-kn} for the resulting gadgets, which are obtained by sequentially applying \cref{lem:decrease-close-radius,lem:increase-too-close-radius,lem:norm-embedding} to the exponential kissing number lattices).

The following lemma shows that, given a rank-$n$ lattice $\lat^\dagger$ and a target $\vec{t}^\dagger$, by moving $\vec{t}^\dagger$ by a small distance $\delta$ in a random direction to some new target $\vec{t}'$ and picking a distance that is $1 - \eps$ times smaller than the original distance, the number of vectors within (relative) distance $1 - \eps$ of $\vec{t}'$ is in expectation roughly a $\sin(\theta)^n$ factor less than the number of vectors within the original distance of $\vec{t}$, where $\theta$ is (only) a function of $\delta$ and $\eps$.
This allows us to construct gadgets with smaller relative distances $\alpha_G$ 
(where $\alpha_G$ is as in \cref{thm:generic-bdd})
at the cost of an exponential decrease $\sin(\theta)^n$ in the number of close vectors $N_p(\latdag, \alpha_G, \tdag)$; fortunately we can afford this decrease as long as the original count is exponential.

To remark, the lemma is similar to~\cite[Corollary~5.8]{conf/stoc/AggarwalS18}, but has a larger parameter space for $\delta$ and $\eps$, and shows that a larger fraction (essentially $\sin(\theta)^n$) of close vectors is preserved.

\begin{lemma}
\label{lem:decrease-close-radius}
For any lattice $\lat^\dagger$ of rank $n$, target $\vec{t}^\dagger \in \lspan(\lat^\dagger)$, and constants $\eps \in (0, 1/2]$ and $\delta \in [\eps, 1 - \eps]$,
there exists $\vec{t}' \in \lspan(\lat^\dagger)$ with $\norm{\vec{t}' - \vec{t}^\dagger}_2 = \delta$,
such that
\begin{equation*}
    N_2(\lat^\dagger, 1 - \eps, \vec{t}') \ge \frac{\sin(\theta)^n}{\poly(n)} \cdot N_2(\lat^\dagger, 1, \vec{t}^\dagger)
    \ \text,
\end{equation*}
where $\theta := \arccos(\frac{\delta^2 + 2 \eps - \eps^2}{2 \delta})$.
\end{lemma}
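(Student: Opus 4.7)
My plan is to argue probabilistically: sample $\vec{w}$ uniformly at random from the sphere of radius $\delta$ in $\lspan(\lat^\dagger)$ and set $\vec{t}' := \vec{t}^\dagger + \vec{w}$. Then $\norm{\vec{t}' - \vec{t}^\dagger}_2 = \delta$ automatically, so it suffices to show $\E\bigl[N_2(\lat^\dagger, 1 - \eps, \vec{t}')\bigr] \geq (\sin(\theta)^n / \poly(n)) \cdot N_2(\lat^\dagger, 1, \vec{t}^\dagger)$; some realization of $\vec{t}'$ will then achieve the claimed bound.

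To analyze the expectation, I would fix any lattice vector $\vec{v}$ with $r := \norm{\vec{v} - \vec{t}^\dagger}_2 \leq 1$ and compute the probability over $\vec{w}$ that $\norm{\vec{v} - \vec{t}'}_2 \leq 1 - \eps$. Setting $\vec{u} := \vec{v} - \vec{t}^\dagger$ and expanding $\norm{\vec{u} - \vec{w}}_2^2 = r^2 - 2 \iprod{\vec{u}, \vec{w}} + \delta^2$, the condition becomes $\iprod{\vec{u}, \vec{w}} \geq (r^2 + \delta^2 - (1 - \eps)^2)/2$; for $r > 0$, this is equivalent to $\cos(\phi) \geq f(r)$, where $\phi \in [0, \pi]$ is the angle between $\vec{u}$ and $\vec{w}$ and $f(r) := (r^2 + \delta^2 - (1 - \eps)^2)/(2 r \delta)$. (For $r = 0$, the condition reduces to $\delta \leq 1 - \eps$, which holds by hypothesis.) The key geometric calculation is that $f$ is nondecreasing on $(0, \infty)$ whenever $\delta \leq 1 - \eps$: a direct differentiation gives $f'(r) = (r^2 - \delta^2 + (1-\eps)^2) / (2 r^2 \delta) \geq 0$. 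Hence for all $r \in (0, 1]$ we have $f(r) \leq f(1) = (\delta^2 + 2\eps - \eps^2)/(2\delta) = \cos(\theta)$ by the very definition of $\theta$, so the condition is implied by $\phi \leq \theta$, i.e., $\vec{w}$ lying in the fixed spherical cap of half-angle $\theta$ centered on $\vec{u}/r$.

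The final step is to invoke the standard estimate that any spherical cap of half-angle $\theta \in (0, \pi/2]$ covers at least a $\sin(\theta)^n / \poly(n)$ fraction of the surface area of $S^{n-1}$; the assumption $\delta \in [\eps, 1 - \eps]$ ensures $\cos(\theta) \in [0, 1]$, so $\theta$ is well-defined and in this range. Thus each $\vec{v}$ counted in $N_2(\lat^\dagger, 1, \vec{t}^\dagger)$ is counted in $N_2(\lat^\dagger, 1 - \eps, \vec{t}')$ with probability at least $\sin(\theta)^n / \poly(n)$, and linearity of expectation yields the claimed inequality on the expectations, concluding the proof by averaging. The main technical obstacle is the monotonicity argument for $f$ and the verification that the critical angle obtained at $r = 1$ matches the claimed expression for $\theta$; once this is in hand, the rest is a standard first-moment argument combined with a cap-measure bound.
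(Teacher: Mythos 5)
Your proof is correct and takes essentially the same probabilistic approach as the paper: sample $\vec{t}'$ uniformly on the sphere of radius $\delta$ around $\vec{t}^\dagger$, bound the per-point retention probability below by the measure of a spherical cap of half-angle $\theta$, and conclude by a first-moment/averaging argument. You are in fact a bit more careful than the paper in two places---explicitly verifying the monotonicity $f(r)\le f(1)$ on $(0,1]$ via $f'(r)\ge 0$ under $\delta\le 1-\eps$, and handling the $r=0$ case---but these are details the paper glosses over, not a different method.
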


\begin{proof}
Without loss of generality suppose that $\lat^\dagger$ has dimension $n$ so that $\lspan(\lat^\dagger) = \R^n$.
It suffices to show that the expectation of $N_2(\lat^\dagger, 1 - \eps, \vec{t}')$ for a uniformly random target $\vec{t}'$ sampled from the sphere $\delta \cdot S^{n-1} + \vec{t}^\dagger$ is at least $(\sin(\theta)^n / \poly(n)) \cdot N_2(\lat^\dagger, 1, \vec{t}^\dagger)$, and therefore by linearity of expectation it suffices to show that the probability that $\norm{\vec{t}' - \vec{v}}_2 \le 1 - \eps$ is at least $\sin(\theta)^n / \poly(n)$ for any vector $\vec{v}$ with $\norm{\vec{v} - \vec{t}^\dagger} \le 1$.
Let $\vec{a} = \vec{t}' - \vec{t}^\dagger$, $\vec{b} = \vec{v} - \vec{t}^\dagger$, and $\theta'$ be the angle between $\vec{a}$ and $\vec{b}$.
We know $\norm{\vec{a}}_2 = \delta$ and $\norm{\vec{b}}_2 \le 1$.
We calculate the probability as follows.
\begin{equation*}
    \begin{aligned}
        \Pr_{\vec{t}'}[\norm{\vec{t}' - \vec{v}}_2 \le 1 - \eps]
        &= \Pr_{\vec{t}'}[\norm{\vec{a}}_2^2 + \norm{\vec{b}}_2^2 - 2 \norm{\vec{a}}_2 \norm{\vec{b}}_2 \cos(\theta') \le (1 - \eps)^2] \\
        &= \Pr_{\vec{t}'} \Bigl[ \cos(\theta') \ge \frac{\norm{\vec{b}}_2^2 - ((1 - \eps)^2 - \delta^2)}{2 \delta \norm{\vec{b}}_2} \Bigr] \\
        &\ge \Pr_{\vec{t}'} \Bigl[ \cos(\theta') \ge \frac{1 - ((1 - \eps)^2 - \delta^2)}{2 \delta} \Bigr] \\
        &= \Pr_{\vec{t}'}[\theta' \le \theta] \\
        &= C_{n-1}(1 - \cos(\theta)) / A_{n-1}
        \ \text,
    \end{aligned}
\end{equation*}
where $C_{n-1}(h)$ is the surface area of a spherical cap with height $h$ of the unit sphere $S^{n-1}$, and $A_{n-1}$ is the surface area of the entire unit sphere $S^{n-1}$.
The ratio in the last quantity satisfies $C_{n-1}(1 - \cos(\theta)) / A_{n-1} = \sin(\theta)^n/\Theta(n^{3/2})$; see, e.g., \cite[Lemma~2.1 and Appendix~A]{conf/soda/BeckerDGL16}.
\end{proof}

The following lemma is implicit in \cite[Lemma~5.2]{conf/stoc/AggarwalS18}, with small changes regarding $N_p$ and $N_p^o$. It uses an averaging argument to show that if there exists a gadget $(\lat^{\dagger}, \vec{t}^{\dagger})$ with at least $M$ times as many close vectors $N_p(\lat^{\dagger}, 1 - \eps, \vec{t}^\dagger)$ as short vectors $N_p(\lat^{\dagger}, 1, \vec{0})$, then there exist a (potentially different) target $\vec{t}'$ and $\eps' \geq \eps$ such that there are at least $M' \approx M^{1/\log_{\eta}(2\eps)}$ times as many close vectors $N_p(\lat^{\dagger}, 1 - \eps', \vec{t}')$ as short vectors $N(\lat^{\dagger}, 1, \vec{0})$ \emph{and in addition} at least $M'$ as many close vectors as ``too close'' vectors $N_p(\lat^{\dagger}, 1 - \eps'/\eta, \vec{t}')$. Here $\eta \in [2 \eps, 1)$ is a parameter that controls the trade-off between $M'$ and the difference $\eps'/\eta - \eps'$ of the distances for the close and the ``too close'' lattice vectors.
In particular, taking larger $\eta$ allows us to decrease the difference between the ``close distance'' $\alpha_G$ and ``too close distance'' $\alpha_A$ in \cref{thm:generic-bdd}
at the cost of having a smaller factor $M'$ in place of $M$ (this trade-off is analogous to the one in \cref{lem:decrease-close-radius}, which allowed us to decrease $\alpha_G$).

\begin{lemma}
\label{lem:increase-too-close-radius}
For any $p \in [1, \infty)$, lattice $\lat^\dagger$, target $\vec{t}^\dagger \in \lspan(\lat^\dagger)$, and constants $\eps \in (0, 1/2)$ and $\eta \in [2 \eps, 1)$ satisfying $N_p(\lat^\dagger, 1 - \eps, \vec{t}^\dagger) \ge M \cdot N_p^o(\lat^\dagger, 1, \vec{0})$ for some $M > 1$,
there exist $\eps' \in [\eps, 1/2]$ and $\vec{t}' \in \lspan(\lat^\dagger)$,
such that
\begin{equation*}
    N_p(\lat^\dagger, 1 - \eps', \vec{t}') \ge M' \cdot \max \bigl\{ N_p^o(\lat^\dagger, 1, \vec{0}), N_p(\lat^\dagger, 1 - \eps' / \eta, \vec{t}') \bigr\}
    \ \text,
\end{equation*}
where $M' := M^{1/(k+1)}$ and $k := \lfloor \log_\eta(2 \eps) \rfloor$.
\end{lemma}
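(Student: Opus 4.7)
The plan is to take $\vec{t}' := \vec{t}^\dagger$ unchanged, and to select $\eps' \in \set{\eps/\eta^i : 0 \le i \le k}$ via a pigeonhole-style argument along a geometric chain of radii. Concretely, define $r_i := 1 - \eps/\eta^i$ for $i = 0, 1, \ldots, k+1$. A short calculation using the definition $k = \lfloor \log_\eta(2\eps) \rfloor$ (together with $\eta \in [2\eps, 1)$, which in particular forces $k \ge 1$) shows that $\eta^k \ge 2\eps > \eta^{k+1}$, and therefore $r_k \ge 1/2 > r_{k+1}$; crucially $2 r_{k+1} < 1$. The goal is to find the smallest $i^* \in \set{0, 1, \ldots, k}$ for which
\[
N_p(\lat^\dagger, r_{i^*}, \vec{t}^\dagger) \ge M' \cdot N_p(\lat^\dagger, r_{i^*+1}, \vec{t}^\dagger),
\]
and then set $\eps' := \eps/\eta^{i^*}$, which automatically lies in $[\eps, 1/2]$.

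The existence of such an $i^*$ I would prove by contradiction. If the inequality failed for every $i \in \set{0, 1, \ldots, k}$, then telescoping the $k+1$ strict reversed inequalities yields
\[
N_p(\lat^\dagger, 1-\eps, \vec{t}^\dagger) = N_p(\lat^\dagger, r_0, \vec{t}^\dagger) < (M')^{k+1} \cdot N_p(\lat^\dagger, r_{k+1}, \vec{t}^\dagger) = M \cdot N_p(\lat^\dagger, r_{k+1}, \vec{t}^\dagger),
\]
using $(M')^{k+1} = M$ by definition of $M'$. Now I would invoke \cref{clm:count-properties}, \cref{en:count-close-triangle} with radius $r_{k+1}$: since $2 r_{k+1} < 1 \le \sm[p]_1(\lat^\dagger) \cdot 1$, we get
\[
N_p(\lat^\dagger, r_{k+1}, \vec{t}^\dagger) \le N_p(\lat^\dagger, 2 r_{k+1}, \vec{0}) \le N_p^o(\lat^\dagger, 1, \vec{0}).
\]
Chaining this into the previous display produces $N_p(\lat^\dagger, 1-\eps, \vec{t}^\dagger) < M \cdot N_p^o(\lat^\dagger, 1, \vec{0})$, directly contradicting the hypothesis of the lemma. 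Hence such an $i^*$ exists.

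With $i^*$ chosen as the smallest valid index, the ``too close'' half of the desired inequality holds by the very definition of $i^*$. For the ``short vector'' half, I would use minimality: for every $i < i^*$, the reversed strict inequality holds, so telescoping those $i^*$ inequalities yields
\[
N_p(\lat^\dagger, 1-\eps', \vec{t}^\dagger) > (M')^{-i^*} \cdot N_p(\lat^\dagger, 1-\eps, \vec{t}^\dagger) \ge (M')^{-i^*} \cdot M \cdot N_p^o(\lat^\dagger, 1, \vec{0}),
\]
and since $(M')^{-i^*} M = (M')^{k+1-i^*} \ge M'$ (using $i^* \le k$ and $M' > 1$), this gives the required factor $M'$. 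The only place any real care is needed is in the initial ``chain terminates below $1/2$'' calculation, because $\log_\eta$ has base less than one and the resulting inequalities flip; once that is settled, the rest is bookkeeping. The main conceptual point of the argument is that $k$ is engineered so that $M'^{k+1} = M$ exactly cancels the loss from telescoping, allowing the comparison to $N_p^o(\lat^\dagger, 1, \vec{0})$ through the triangle-inequality bound at the ``end cap'' $r_{k+1} < 1/2$.
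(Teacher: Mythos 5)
Your proof is correct and follows the same overall strategy as the paper's: pigeonhole along the geometric chain of radii $r_i = 1 - \eps/\eta^i$, anchored at one end by the hypothesis $N_p(\lat^\dagger, r_0, \vec{t}^\dagger) \ge M \cdot N$ (writing $N := N_p^o(\lat^\dagger, 1, \vec{0})$) and at the other by the ``end cap'' observation that $r_{k+1} < 1/2$ forces the count there to be at most $N$ via \cref{clm:count-properties}, \cref{en:count-close-triangle}. The one genuine difference is that you keep the target fixed throughout (setting $\vec{t}' := \vec{t}^\dagger$) and pigeonhole on the minimal index $i^*$ for which the ratio condition holds, whereas the paper defines $D_i := \max_{\vec{t} \in \lspan(\lat^\dagger)} N_p(\lat^\dagger, 1 - \eps_i, \vec{t})$ and pigeonholes on those maxima, which in general produces a $\vec{t}'$ distinct from $\vec{t}^\dagger$. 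Your version is a modest simplification (no re-optimization of the target at each radius, and the minimality argument cleanly delivers the short-vector bound) and in fact establishes the slightly stronger conclusion that the original target $\vec{t}^\dagger$ itself works.

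One small blemish worth removing: the parenthetical ``since $2 r_{k+1} < 1 \le \sm[p]_1(\lat^\dagger) \cdot 1$'' asserts $\sm[p]_1(\lat^\dagger) \ge 1$, which is not among the lemma's hypotheses and is also not needed. The chain $N_p(\lat^\dagger, r_{k+1}, \vec{t}^\dagger) \le N_p(\lat^\dagger, 2 r_{k+1}, \vec{0}) \le N_p^o(\lat^\dagger, 1, \vec{0})$ uses only the triangle-inequality bound (\cref{clm:count-properties}, \cref{en:count-close-triangle}) for the first step and the containment of the closed ball of radius $2 r_{k+1} < 1$ inside the open unit ball for the second, neither of which references the minimum distance of $\lat^\dagger$. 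Deleting that stray inequality leaves the proof clean and fully justified.
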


\begin{proof}
For $i = 0, 1, \dots, k + 1$, define $\eps_i = \eps / \eta^i$ and $D_i = \max_{\vec{t} \in \lspan(\lat^\dagger)} N_p(\lat^\dagger, 1 - \eps_i, \vec{t})$.
Write $N = N_p^o(\lat^\dagger, 1, \vec{0})$ for simplicity.
By the given assumption, we know $D_0 \ge N_p(\lat^\dagger, 1 - \eps, \vec{t}^\dagger) \ge M \cdot N$.
Moreover, $k = \lfloor \log_\eta(2 \eps) \rfloor$ indicates that $\eps_k \le 1/2$ and $\eps_{k+1} > 1/2$.
Then we know $D_{k+1} \le N$ by \cref{clm:count-properties}, \cref{en:count-close-triangle}.
Now that $D_0 \ge M \cdot N$ and $D_{k+1} \le N$, there exists $i \in \set{0, 1, \dots, k}$ such that $D_i \ge M' \cdot \max\{N, D_{i+1}\}$ (for $M' = M^{1/(k+1)}$).
Then setting $\eps' = \eps_i$ and $\vec{t}'$ achieving $N_p(\lat^\dagger, 1 - \eps_i, \vec{t}') = D_i$ satisfies the desired property.
Note that this choice gives $\eps' \in [\eps, 1/2]$.
\end{proof}

We use the following lemma about norm embeddings to extend our gadgets based on exponential kissing number lattices to arbitrary $\ell_p$ norms.

\begin{lemma}[\cite{FLM77}, {\cite[Theorem~3.2]{conf/stoc/RegevR06}}]
\label{lem:norm-embedding}
For any $n \in \Z^+$, $p \in [1, \infty)$, and $\eps > 0$,
there exists $m \in \Z^+$ and a linear map $f: \R^n \to \R^m$,
such that for all $\vec{x} \in \R^n$,
\begin{equation*}
    (1 - \eps) \norm{\vec{x}}_2 \le \norm{f(\vec{x})}_p \le (1 + \eps) \norm{\vec{x}}_2
    \ \text.
\end{equation*}
In particular, it suffices to take $m$ to be
\begin{equation*}
    m = \begin{cases}
        \eps^{-2} n
        & \text{if} \
        1 \le p < 2
        \ \text, \\
        \eps^{-p} (n / p)^{p / 2}
        & \text{if} \
        p \ge 2
        \ \text.
    \end{cases}
\end{equation*}
\end{lemma}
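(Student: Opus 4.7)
The plan is to construct $f$ as a suitably normalized random Gaussian linear map and establish the two-sided bound via pointwise concentration plus a net argument. I would let $G \in \R^{m \times n}$ have i.i.d.\ standard Gaussian entries and set $f(\vec{x}) := c^{-1} G \vec{x}$, where $c := (m \mu_p)^{1/p}$ and $\mu_p := \E \abs{Z}^p$ for $Z \sim N(0, 1)$. The goal is to show that with positive probability, $f$ satisfies the stated bound for every $\vec{x} \in \R^n$ simultaneously; existence of a deterministic $f$ then follows by the probabilistic method.

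First I would establish pointwise concentration on the sphere. For fixed $\vec{x} \in S^{n-1}$, the entries $(G \vec{x})_i$ are i.i.d.\ standard Gaussians $Z_i$, so $\norm{f(\vec{x})}_p^p = (m \mu_p)^{-1} \sum_{i=1}^m \abs{Z_i}^p$ has mean exactly~$1$. Applying a standard tail inequality for sums of i.i.d.\ variables---Bernstein's inequality exploiting the sub-exponentiality of $\abs{Z}^p$ in the regime $1 \le p < 2$, and a moment-based or truncation-based argument accounting for the heavier tails of $\abs{Z}^p$ when $p \ge 2$---yields an exponentially small failure probability. Converting from $\norm{\cdot}_p^p$ to $\norm{\cdot}_p$ via $(1+\delta)^{1/p} \approx 1 + \delta/p$ then gives the desired pointwise estimate with the stated dependence on $m$, $n$, $p$, and $\eps$.

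Next I would upgrade this to a uniform bound via an $\eps$-net argument. Let $\mathcal{N} \subset S^{n-1}$ be a $(\eps/C)$-net of cardinality at most $(3 C / \eps)^n$ for a suitable absolute constant $C$, and take a union bound over $\mathcal{N}$. The contribution $\exp(n \log(3 C / \eps))$ from the net's cardinality is dominated by the concentration bound exactly when $m$ meets the stated lower bound: $m = \Theta(\eps^{-2} n)$ in the regime $1 \le p < 2$, and $m = \Theta(\eps^{-p} (n / p)^{p / 2})$ in the regime $p \ge 2$. A standard Lipschitz-extension argument (using that $f$, with high probability, has operator norm $\R^n \to \ell_p^m$ of the right order) then extends the bound from $\mathcal{N}$ to all of $S^{n-1}$, and homogeneity of both norms extends it to all of $\R^n$.

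The main obstacle will be obtaining the tight dependence for $p \ge 2$, where $\abs{Z}^p$ is heavy-tailed and a naive Bernstein bound is suboptimal. I would handle this by truncating each Gaussian at scale $\Theta(\sqrt{\log m})$ and separately bounding the (vanishingly small) probability that truncation ever takes effect, then applying a moment-based concentration inequality---such as Rosenthal's inequality---to the truncated sum. An alternative, closer to the original Figiel--Lindenstrauss--Milman strategy, is to fix a deterministic isometry and instead take $\vec{x}$ uniform on $S^{n-1}$, exploiting L\'evy's concentration-of-measure inequality for Lipschitz functions on the sphere to obtain the sharp dependence on $n$ and $\eps$ essentially for free; either route recovers the two bounds on $m$ stated in the lemma.
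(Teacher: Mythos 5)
The paper does not prove this lemma; it is imported verbatim from Figiel--Lindenstrauss--Milman and from Regev--Rosen~\cite[Theorem~3.2]{conf/stoc/RegevR06}, so there is no in-paper argument to compare against. Your sketch is the standard random-projection proof and is sound at a high level, but two places need tightening before it actually recovers the \emph{stated} dimensions $m$.

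First, the naive $(\eps/C)$-net union bound does not give $m = \Theta(\eps^{-2} n)$ for $p<2$ as you claim: with net size $(3C/\eps)^n$ against failure probability $\exp(-c\eps^2 m)$, balancing the exponents forces $m \gtrsim \eps^{-2} n \log(1/\eps)$, and similarly an extra $(\log(1/\eps))^{p/2}$ appears in the $p\ge 2$ regime. Removing this logarithmic overhead is a known subtlety in Dvoretzky-type statements and requires something beyond a one-shot net --- Gordon's comparison inequality, a chaining argument, or the measure-concentration route you mention as an ``alternative.'' That ``alternative'' is in fact what the cited sources do, so it should be promoted from a side remark to the main line of attack if you want to match the lemma's bounds exactly. (For the paper's actual use of the lemma, any $m = \poly(n, 1/\eps)$ suffices, so the log factor is harmless downstream, but your proposal asserts the sharp bounds.)

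Second, for $p \ge 2$ the truncation-plus-Rosenthal route is more work than necessary and is not obviously sharp. The cleaner and tight path is to note that, for fixed unit $\vec{x}$, the map $G \mapsto \norm{G\vec{x}}_p$ is $1$-Lipschitz with respect to the Frobenius norm on $G$ (because $\norm{\cdot}_p \le \norm{\cdot}_2$ when $p \ge 2$), so Gaussian concentration immediately gives $\Pr\bigl[\,\bigl|\norm{G\vec{x}}_p - M\bigr| > t\,\bigr] \le 2\exp(-t^2/2)$ with $M = \Theta\bigl((m\mu_p)^{1/p}\bigr) = \Theta(m^{1/p})$; setting $t = \eps M$ yields failure probability $\exp(-\Omega(\eps^2 m^{2/p}))$, which is exactly the exponent you need, with no truncation at all. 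The analogous computation for $1\le p<2$ (where the Lipschitz constant is $m^{1/p-1/2}$ and $M = \Theta(m^{1/p-1/2})$) gives $\exp(-\Omega(\eps^2 m))$. Pairing these with the improved net argument is the most direct way to reproduce the bounds stated in the lemma.
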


We remark that for fixed $p$, the increased dimensionality of the norm embeddings given by \cref{lem:norm-embedding} is $m = \poly(n, 1 / \eps)$, which is polynomial in $n$ for any $\eps \geq 1 / \poly(n)$.

The following lemma allows us to adapt gadgets in $\ell_2$ norm to the requirements of \cref{thm:generic-bdd}; in particular, the lemma
gives analogous gadgets $(\lat^{\dagger}, \vec{t}^{\dagger})$ in any $\ell_p$ norm by applying the norm embeddings in \cref{lem:norm-embedding} and scaling properly, with the caveat that $\sm_1(\lat^{\dagger})$ and $\dist_p(\vec{t}^{\dagger}, \lat^{\dagger})$ may be distorted by a $1 \pm \eps$ factor.\footnote{Miller and Stephens-Davidowitz~\cite{MSKissing18} referred to the number of non-zero vectors in a lattice $\lat$ of length at most $(1 + \eps) \cdot \sm_1(\lat)$ as its ``$(1+\eps)$-handshake number.'' For our reductions, it essentially suffices to have a family of lattices $\set{\lat_n}_{n=1}^{\infty}$ where $\lat_n$ has rank $n$ and large $(1+\eps)$-handshake number (in the $\ell_p$ norm, with $\eps$ satisying, say, $\eps \leq 1 + O(1/n)$), which is a weaker property than having large kissing number.
Indeed, all of our reductions are just as strong quantitatively (in terms of the relative distance $\alpha$ and rank increase $C$ achieved) using lattices with large handshake number.
The only issue with using lattices with large handshake number (rather than large kissing number) is qualitative. Their use leads us to rely on ``Gap'' variants of (S)ETH. However, in \cref{subsubsec:remove-gap-assumption} we sketch why even this difference is likely not inherent.
}

\begin{lemma}
\label{lem:adapt-gadget}
For any lattice $\lat^\dagger$ of rank $n$, target $\vec{t}^\dagger \in \lspan(\lat^\dagger)$, and constants $0 < \alpha_A \le \alpha_G \le 1$,
if, for some $M_0, M_1 > 0$,
\begin{equation*}
    N_2(\lat^\dagger, \alpha_G, \vec{t}^\dagger) \ge \max \bigl\{ M_0 \cdot N_2^o(\lat^\dagger, 1, \vec{0}), M_1 \cdot N_2^o(\lat^\dagger, \alpha_A, \vec{t}^\dagger) \bigr\}
    \ \text,
\end{equation*}
then for any $p \in [1, \infty)$, there exist a lattice $\lat'$ of rank $n$ and dimension $m = \poly(n)$ and a target $\vec{t}' \in \lspan(\lat')$,
such that
\begin{equation*}
    N_p(\lat', \alpha_G + o(1), \vec{t}') \ge \max \bigl\{ M_0 \cdot N_p^o(\lat', 1, \vec{0}), M_1 \cdot N_p^o(\lat', \alpha_A, \vec{t}') \bigr\}
    \ \text.
\end{equation*}
\end{lemma}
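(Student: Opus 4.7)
The plan is to instantiate the norm embedding from \cref{lem:norm-embedding} with a suitably small distortion parameter $\eps = o(1)$ (e.g., $\eps = 1/n$, which still gives $m = \poly(n)$), and then take $\lat'$ and $\vec{t}'$ to be appropriately rescaled images of $\lat^\dagger$ and $\vec{t}^\dagger$ under the embedding.

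Concretely, let $f \colon \R^n \to \R^m$ be the linear map from \cref{lem:norm-embedding} satisfying $(1-\eps)\norm{\vec{x}}_2 \le \norm{f(\vec{x})}_p \le (1+\eps)\norm{\vec{x}}_2$ for all $\vec{x} \in \R^n$. Set
\[
    \lat' := \tfrac{1}{1-\eps} \cdot f(\lat^\dagger)
    \ \text, \qquad
    \vec{t}' := \tfrac{1}{1-\eps} \cdot f(\vec{t}^\dagger)
    \ \text.
\]
Note that $f$ is injective on $\lat^\dagger$ (any kernel vector would have zero $\ell_2$ norm by the lower distortion bound), so $\lat'$ has rank $n$ and dimension $m = \poly(n)$.

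The key computation is the distortion on displacements: for each $\vec{v} \in \lat^\dagger$, the corresponding $\vec{v}' := \frac{1}{1-\eps} f(\vec{v}) \in \lat'$ satisfies
\[
    \norm{\vec{v}' - \vec{t}'}_p = \tfrac{1}{1-\eps} \norm{f(\vec{v} - \vec{t}^\dagger)}_p \in \bigl[ \norm{\vec{v} - \vec{t}^\dagger}_2,\ \tfrac{1+\eps}{1-\eps} \norm{\vec{v} - \vec{t}^\dagger}_2 \bigr]
    \ \text,
\]
and analogously with $\vec{t}', \vec{t}^\dagger$ replaced by $\vec{0}, \vec{0}$. From the upper bound, each close $\ell_2$-vector (at $\ell_2$-distance $\le \alpha_G$) maps to a vector at $\ell_p$-distance at most $\frac{1+\eps}{1-\eps} \alpha_G = \alpha_G + o(1)$ (using $\alpha_G \le 1$ and $\eps = o(1)$), yielding $N_p(\lat', \alpha_G + o(1), \vec{t}') \ge N_2(\lat^\dagger, \alpha_G, \vec{t}^\dagger)$. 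From the lower bound, any vector of $\ell_p$-length strictly less than $1$ comes from an $\ell_2$-vector of length strictly less than $1$, so $N_p^o(\lat', 1, \vec{0}) \le N_2^o(\lat^\dagger, 1, \vec{0})$; the analogous argument at radius $\alpha_A$ around the target gives $N_p^o(\lat', \alpha_A, \vec{t}') \le N_2^o(\lat^\dagger, \alpha_A, \vec{t}^\dagger)$.

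Chaining these three inequalities with the hypothesis immediately yields
\[
    N_p(\lat', \alpha_G + o(1), \vec{t}') \ge \max\bigl\{ M_0 \cdot N_p^o(\lat', 1, \vec{0}),\ M_1 \cdot N_p^o(\lat', \alpha_A, \vec{t}') \bigr\}
    \ \text,
\]
as required. There is no substantive obstacle here; the only care needed is to choose the scaling factor $\frac{1}{1-\eps}$ (rather than $1$ or $\frac{1}{1+\eps}$) so that the distortion falls entirely on the ``close'' side—where we can absorb the multiplicative $\frac{1+\eps}{1-\eps}$ into the additive $o(1)$ slack—while the ``short'' and ``too close'' radii are preserved exactly without any slack.
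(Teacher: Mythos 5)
Your proof is correct and takes essentially the same approach as the paper: apply the norm embedding of \cref{lem:norm-embedding} with distortion $\eps = o(1)$ and rescale so that the distortion slack falls entirely on the close-vector radius (absorbed into the $\alpha_G + o(1)$) while the short and too-close radii transfer without loss. The paper's version performs the rescaling as a final normalization step and uses somewhat looser (but still correct, and still $o(1)$) intermediate estimates via the assumption $\norm{\vec{t}^\dagger}_2 \le \alpha_G$; your choice of scale factor $\frac{1}{1-\eps}$ applied up front gives a cleaner three-inequality chain without needing that normalization.
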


\begin{proof}
Without loss of generality suppose that $\lat^\dagger$ has dimension $n$ so that $\lspan(\lat^\dagger) = \R^n$, and $\vec{0}$ is the closest lattice vector to $\vec{t}^\dagger$ so that $\norm{\vec{t}^\dagger}_2 \le \alpha_G$ (note that $N_2(\lat^\dagger, \alpha_G, \vec{t}^\dagger) \ge M_0 \cdot N_2^o(\lat^\dagger, 1, \vec{0}) \ge M_0 > 0$).
We apply the norm embeddings in \cref{lem:norm-embedding} to the gadget $(\lat^\dagger, \vec{t}^\dagger)$, with error bound $\eps = 1 / n = o(1)$.
Let $(\lat, \vec{t})$ be the embedded gadget.
By \cref{lem:norm-embedding}, we know $\lat$ has rank $n$ and dimension $m = \poly(n)$, and
\begin{align*}
    N_2^o(\lat^\dagger, 1, \vec{0})
    &\ge N_p^o(\lat, 1 - \eps, \vec{0})
    \ \text, \\
    N_2(\lat^\dagger, \alpha_G, \vec{t}^\dagger)
    &\le N_p(\lat, (1 + \eps) \cdot 2 \alpha_G - (1 - \eps) \cdot \alpha_G, \vec{t}) \\
    &= N_p(\lat, (1 + 3 \eps) \cdot \alpha_G, \vec{t})
    \ \text, \\
    N_2^o(\lat^\dagger, \alpha_A, \vec{t}^\dagger)
    &\ge N_p^o(\lat, (1 - \eps) \cdot (\alpha_A + \alpha_G) - (1 + \eps) \cdot \alpha_G, \vec{t}) \\
    &= N_p^o(\lat, (1 - \eps \cdot (1 + 2 \alpha_G / \alpha_A)) \cdot \alpha_A, \vec{t})
    \ \text.
\end{align*}
Then by normalizing with $\lat' = \lat / (1 - \eps')$ and $\vec{t}' = \vec{t} / (1 - \eps')$, where $\eps' = \eps \cdot (1 + 2 \alpha_G / \alpha_A) = o(1)$, we get a desired gadget $(\lat', \vec{t}')$.
\end{proof}

By applying \cref{lem:decrease-close-radius,lem:increase-too-close-radius} to exponential kissing number lattices along with targets $\vec{0}$, and adapting the resulting gadgets to the requirements of \cref{thm:generic-bdd} using \cref{lem:adapt-gadget}, we get the following.

\begin{corollary}
\label{cor:gadget-eth-bdd-kn}
For any $n \in \Z^+$, $p \in [1, \infty)$, and constants $\eps \in (0, 1/2)$, $\delta \in [\eps, 1 - \eps]$ and $\eta \in [2 \eps, 1)$,
there exist $\eps' \in [\eps, 1/2]$, a lattice $\lat^\dagger$ with rank $n$ and dimension $m = \poly(n)$, and a target $\vec{t}^\dagger \in \lspan(\lat^\dagger)$,
such that
\begin{equation*}
    N_p(\lat^\dagger, 1 - \eps' + o(1), \vec{t}^\dagger) \ge (\sin(\theta) \cdot 2^{\kissnum})^{n / (k+1) - o(n)} \cdot \max \bigl\{ N_p^o(\lat^\dagger, 1, \vec{0}), N_p^o(\lat^\dagger, 1 - \eps' / \eta, \vec{t}^\dagger) \bigr\}
    \ \text,
\end{equation*}
where $\theta := \arccos(\frac{\delta^2 + 2 \eps - \eps^2}{2 \delta})$ and $k := \lfloor \log_\eta(2 \eps) \rfloor$.
\end{corollary}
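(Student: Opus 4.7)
The plan is to chain together the three preparatory lemmas (\cref{lem:decrease-close-radius,lem:increase-too-close-radius,lem:adapt-gadget}) applied to Vl{\u{a}}du{\c{t}}'s exponential-kissing-number lattices guaranteed by \cref{thm:kissing-number-lattice}. Roughly, I will (i) start from such a lattice with target $\vec{0}$; (ii) use \cref{lem:decrease-close-radius} to shift the target by $\delta$ in some direction and simultaneously shrink the close radius to $1 - \eps$, paying a $\sin(\theta)^n$ factor in the close-vector count; (iii) use \cref{lem:increase-too-close-radius} to create a multiplicative gap between the close and ``too close'' radii $1 - \eps'$ and $1 - \eps'/\eta$, paying a $1/(k+1)$ factor in the exponent; and (iv) use \cref{lem:adapt-gadget} to port the resulting $\ell_2$ gadget to the $\ell_p$ norm via norm embeddings.

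Concretely, for each $n$ I will invoke \cref{thm:kissing-number-lattice} to obtain a rank-$n$ lattice $\lat_n$ with $\tau(\lat_n) \ge 2^{\kissnum n - o(n)}$, scaled so that $\lambda_1(\lat_n) = 1$. Then $N_2(\lat_n, 1, \vec{0}) \ge 2^{\kissnum n - o(n)}$ while $N_2^o(\lat_n, 1, \vec{0}) = 1$. Applying \cref{lem:decrease-close-radius} with target $\vec{0}$ and the given $\eps, \delta$ yields some $\vec{t}_1 \in \lspan(\lat_n)$ with $\norm{\vec{t}_1}_2 = \delta$ and
\[
    N_2(\lat_n, 1 - \eps, \vec{t}_1) \ge \frac{\sin(\theta)^n}{\poly(n)} \cdot N_2(\lat_n, 1, \vec{0}) \ge (\sin(\theta) \cdot 2^{\kissnum})^{n - o(n)} =: M.
\]
Since $N_2^o(\lat_n, 1, \vec{0}) = 1$, the hypothesis of \cref{lem:increase-too-close-radius} is satisfied with this $M$, and applying the lemma produces $\eps' \in [\eps, 1/2]$ together with some target $\vec{t}_2$ satisfying
\[
    N_2(\lat_n, 1 - \eps', \vec{t}_2) \ge M^{1/(k+1)} \cdot \max \bigl\{ N_2^o(\lat_n, 1, \vec{0}),\; N_2(\lat_n, 1 - \eps'/\eta, \vec{t}_2) \bigr\},
\]
where $k = \lfloor \log_\eta(2 \eps) \rfloor$ and $M^{1/(k+1)} = (\sin(\theta) \cdot 2^{\kissnum})^{n/(k+1) - o(n)}$. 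Weakening the $N_2$ on the right to $N_2^o$ matches the hypothesis format of \cref{lem:adapt-gadget} with $\alpha_G := 1 - \eps'$, $\alpha_A := 1 - \eps'/\eta$, and $M_0 = M_1 := M^{1/(k+1)}$, so \cref{lem:adapt-gadget} delivers the desired $\ell_p$ gadget $(\lat^\dagger, \vec{t}^\dagger)$ of rank $n$ and dimension $\poly(n)$, with the norm-embedding distortion absorbed into the additive $o(1)$ on the close radius.

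The main bookkeeping issue will be the boundary regime $\eps'/\eta \ge 1$, where $\alpha_A \le 0$ falls outside the hypothesis of \cref{lem:adapt-gadget}. There, $N_p^o(\lat^\dagger, 1 - \eps'/\eta, \vec{t}^\dagger)$ is identically zero, so the max collapses to the short-vector term and one may either restrict to an arbitrary positive $\alpha_A < \alpha_G$ or directly invoke the $M_1$-free version of the norm-embedding argument; either way the target inequality holds. The remaining checks---passing between $N_p$ and $N_p^o$ via monotonicity, choosing the distortion parameter $\eps = 1/n$ in \cref{lem:norm-embedding} so that the ambient dimension $m$ remains polynomial in $n$, and combining the subexponential error terms into a single $o(n)$ in the exponent---are all routine given that $\eps, \delta, \eta$ (and hence $\theta$ and $k$) are constants.
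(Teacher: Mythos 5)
Your proof is correct and follows essentially the same route as the paper: normalize a Vl{\u{a}}du{\c{t}} lattice to $\lambda_1 = 1$, apply \cref{lem:decrease-close-radius} then \cref{lem:increase-too-close-radius}, and port to $\ell_p$ via \cref{lem:adapt-gadget}. Your careful handling of the boundary case $\eps'/\eta \ge 1$ (where $\alpha_A \le 0$ technically falls outside the hypothesis of \cref{lem:adapt-gadget}, but the too-close count vanishes so the max collapses to the short-vector term) is a valid point the paper's proof glosses over, and your fix is fine.
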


\begin{proof}
By definition of $\kissnum$ (\cref{eq:kiss-num-def}), we know there exists a lattice $\lat'$ of rank $n$ satisfying $\lambda_1(\lat') = 1$ and $N_2(\lat', 1, \vec{0}) = 2^{\kissnum n - o(n)}$.
Then by \cref{lem:decrease-close-radius}, there exists a target $\vec{t} \in \lspan(\lat')$ such that $N_2(\lat', 1 - \eps, \vec{t}) = \sin(\theta)^n / \poly(n) \cdot 2^{\kissnum n - o(n)} = (\sin(\theta) \cdot 2^{\kissnum})^{n - o(n)}$.
Therefore by \cref{lem:increase-too-close-radius}, there exists $\eps' \in [\eps, 1/2]$ and target $\vec{t}' \in \lspan(\lat')$ such that
\begin{equation*}
    N_2(\lat', 1 - \eps', \vec{t}') \ge (\sin(\theta) \cdot 2^{\kissnum})^{n / (k+1) - o(n)} \cdot \max \bigl\{ N_2^o(\lat', 1, \vec{0}), N_2^o(\lat', 1 - \eps' / \eta, \vec{t}') \bigr\}
    \ \text.
\end{equation*}
The result then follows by applying \cref{lem:adapt-gadget} to the gadget $(\lat', \vec{t}')$.
\end{proof}

Using the notation in \cref{thm:generic-bdd},
the family of gadgets given by \cref{cor:gadget-eth-bdd-kn} has $\alpha_G = 1 - \eps' \le 1 - \eps$, $\alpha_A = 1 - \eps' / \eta$, and $\nu_0 = \nu_1 = (\sin(\theta) \cdot 2^{\kissnum})^{1/(k+1)}$.
For $\nu_0, \nu_1 > 1$, we need $\theta > \theta_0 := \arcsin(2^{-\kissnum})$.
By rearranging this is equivalent to $1 - \eps > \sqrt{(\cos(\theta_0) - \delta)^2 + \sin(\theta_0)^2}$, where the right-hand side is minimized by setting $\delta := \cos(\theta_0)$.

Therefore, by taking $1 - \eps$ to be a constant sufficiently close to $\sin(\theta_0) = 2^{-\kissnum}$ and $\eta$ to be a constant sufficiently close to $1$, we can take $\alpha_G$ to be an arbitrary constant greater than $2^{-\kissnum}$ and $\alpha_A$ to be an arbitrary constant less than $\alpha_G$. Plugging such $\alpha_G, \alpha_A$ into \cref{eq:generic-alpha-eth} in \cref{thm:generic-bdd}, we get that it simplifies to $\alpha > 2^{-\kissnum}$.
By applying this family of gadgets to \cref{thm:generic-bdd}, we then get the following.

\begin{corollary}
\label{cor:reduction-eth-bdd-kn}
Assume that $\kissnum > 0$. Then for any $p \in [1, \infty)$, $\gamma > 1$, and $\alpha > 2^{-\kissnum}$,
there exists an efficient non-uniform reduction from $\binGapCVP_{p, \gamma}$ in rank $n$ to $\BDD_{p, \alpha}$ in rank $C n$ for some constant $C > 1$ for all sufficiently large $n$.
\end{corollary}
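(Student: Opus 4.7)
The plan is to combine the gadget family given by \cref{cor:gadget-eth-bdd-kn} with the generic reduction of \cref{thm:generic-bdd}, specifically using the ETH-style bound \cref{eq:generic-alpha-eth}, and to choose the gadget parameters so that the resulting threshold on $\alpha$ is driven arbitrarily close to $2^{-\kissnum}$.

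First, I would fix notation and parameters as in the paragraph preceding the corollary. Set $\theta_0 := \arcsin(2^{-\kissnum})$, and for the gadget of \cref{cor:gadget-eth-bdd-kn} choose $\delta := \cos(\theta_0)$ (the minimizer of $\sqrt{(\cos(\theta_0)-\delta)^2 + \sin(\theta_0)^2}$ in $\delta$). Then, given any $\alpha > 2^{-\kissnum}$, select a constant $\eps \in (0, 1/2)$ with $1 - \eps$ slightly larger than $2^{-\kissnum} = \sin(\theta_0)$, so that the angle $\theta = \arccos\bigl(\tfrac{\delta^2 + 2\eps - \eps^2}{2\delta}\bigr)$ satisfies $\theta > \theta_0$, which makes $\sin(\theta) \cdot 2^{\kissnum} > 1$. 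Also pick a constant $\eta \in [2\eps, 1)$ close to $1$. This yields a gadget family with rank $n$, polynomial dimension, and, in the language of \cref{thm:generic-bdd}, parameters $\alpha_G = 1 - \eps'$, $\alpha_A = 1 - \eps'/\eta$ for some $\eps' \in [\eps, 1/2]$, and $\nu_0 = \nu_1 = (\sin(\theta)\cdot 2^{\kissnum})^{1/(k+1)} > 1$, where $k = \lfloor \log_\eta(2\eps) \rfloor$.

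Next, I would invoke the second (ETH-style) part of \cref{thm:generic-bdd}, which for any $\gamma > 1$ gives an efficient non-uniform reduction from $\binGapCVP_{p,\gamma}$ of rank $n'$ to $\BDD_{p,\alpha'}$ of rank $Cn' + 1$ whenever
\[
\alpha' > \Bigl(\alpha_G^p + \tfrac{\alpha_G^p - \min\{\alpha_A^p, 1\}}{\gamma^p - 1}\Bigr)^{1/p}.
\]
Since $\alpha_G, \alpha_A \le 1$, $\min\{\alpha_A^p, 1\} = \alpha_A^p$, and the above right-hand side is a continuous function of $\alpha_G$ and $\alpha_A$ (which in turn depend continuously on our choice of $\eps$ and $\eta$). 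As we let $\eta \to 1$ we may take $\alpha_A$ arbitrarily close to $\alpha_G$, so the second term vanishes and the bound simplifies to $\alpha' > \alpha_G = 1 - \eps'$. Since $\eps' \ge \eps$ and $1 - \eps$ can be chosen arbitrarily close to (but larger than) $2^{-\kissnum}$, the threshold simplifies to $\alpha' > 2^{-\kissnum}$ in the limit.

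Concretely, given the target $\alpha > 2^{-\kissnum}$, I would choose $\eps$ and $\eta$ (and thus $\alpha_G, \alpha_A$, and the resulting right-hand side above) so that this threshold is strictly less than $\alpha$, which is possible by continuity and the limiting behavior just described. \cref{thm:generic-bdd} then supplies a constant $C > 1$ and a non-uniform efficient reduction from $\binGapCVP_{p,\gamma}$ of rank $n$ to $\BDD_{p,\alpha}$ of rank $Cn + 1$ for all sufficiently large $n$. The non-uniformity comes solely from the gadget family, since Vl\u{a}du\c{t}'s construction is not known to be uniformly (or even randomized-polynomially) constructible. The main subtlety, and essentially the only non-mechanical step, is the joint choice of $\delta, \eps, \eta$ that simultaneously keeps $\nu_0, \nu_1 > 1$ and drives $\alpha_G$ down to $2^{-\kissnum}$; this is handled by the $\delta = \cos(\theta_0)$ optimization together with the limit $\eta \to 1$.
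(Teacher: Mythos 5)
Your proposal is correct and follows essentially the same route as the paper: instantiate the gadget family of \cref{cor:gadget-eth-bdd-kn}, set $\delta = \cos(\theta_0)$ to minimize the constraint on $1-\eps$, push $1-\eps \to 2^{-\kissnum}$ and $\eta \to 1$ so that $\alpha_G$ approaches $2^{-\kissnum}$ and $\alpha_A \to \alpha_G$, and then invoke the ETH-style bound \cref{eq:generic-alpha-eth} of \cref{thm:generic-bdd} to get the threshold $\alpha > 2^{-\kissnum}$.
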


Combining \cref{cor:reduction-eth-bdd-kn} with the Gap-ETH-hardness result for $\binGapCVP$ in \cref{thm:gapeth-cvp}, we immediately get \cref{thm:gapeth-bdd-kn-informal}, restated below.

\gapethbddkn*

\subsection{Gap-ETH-hardness of BDD from the Integer Lattice}
\label{sec:gapeth-bdd}

In this section we show Gap-ETH-hardness of BDD by instantiating the general reductions in \cref{thm:generic-bdd} with gadgets based on the integer lattice $\Z^n$ as opposed to exponential kissing number lattices. More specifically, we will consider lattice-target gadgets $(\Z^n, t \cdot \vec{1})$.

For $p \in [1, \infty)$, we define
\begin{equation} \label{eq:alpha-eth}
    \alphaeth_p := \inf_{\substack{t \in [0, 1/2], \\ a \ge t}} \frac{a}{\betainv{0}{\betafun{t}{a}}}
    \ \text.
\end{equation}
The numerator in \cref{eq:alpha-eth} corresponds to the resulting gadget's ``close distance,'' and the denominator to the gadget's ``short distance.'' More specifically, the value $b := \betainv{0}{\betafun{t}{a}}$ corresponds to the largest radius $bn^{1/p}$ at which the number of ``short'' integer vectors of norms at most $b n^{1/p}$ is less than the number of ``close'' integer vectors within distance $a n^{1/p}$ of the target $t \cdot \vec{1}$.
We note that $\alphaeth_p \le 1$ since
\begin{equation*}
    \alphaeth_p
    = \inf_{\substack{t \in [0, 1/2], \\ a \ge t}} \frac{a}{\betainv{0}{\betafun{t}{a}}}
    \le \inf_{t \in [0, 1/2]} \lim_{a \to \infty} \frac{a}{\betainv{0}{\betafun{t}{a}}}
    = 1
    \ \text,
\end{equation*}
where the last equality follows from \cref{clm:beta-limits}. 
Moreover, we note that (provably) $\alphaeth_p < 1$ for all $p > 2$ by~\cite[Lemma 5.4]{conf/stoc/AggarwalS18}, and that $\alphaeth_p \to 1/2$ as $p \to \infty$ (see the discussion at the end of this subsection).
The left plot in \cref{fig:alpha-plots} shows explicit values of $\alphaeth_p$ for $p$ around $2$, and how $\alphaeth_p$ compares to the related quantities $\alpha_p^*$ (with norm embeddings) and $\alphakissnum$. 

The following lemma formally analyzes the gadgets based on the integer lattice.

\begin{lemma}
For any $p \in [1, \infty)$ and $\alphaeth_p < \alpha_A < \alpha_G$,
there exist $t \in [0, 1/2]$, $a \ge t$ and $\nu_0, \nu_1 > 1$,
such that for any $n \in \Z^+$ and for $r = a n^{1/p}$,
\begin{equation*}
    N_p(\Z^n, r, t \cdot \vec{1}) \ge \max \bigl\{ \nu_0^{n - o(n)} \cdot N_p^o(\Z^n, r / \alpha_G, \vec{0}), \nu_1^{n - o(n)} \cdot N_p^o(\Z^n, (\alpha_A / \alpha_G) \cdot r, t \cdot \vec{1}) \bigr\}
    \ \text.
\end{equation*}
\end{lemma}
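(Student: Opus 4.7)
The plan is to pick $t, a$ via the infimum definition of $\alphaeth_p$, and then use \cref{clm:beta-properties} to translate integer point counts into (strictly ordered) values of $\betafun{\cdot}{\cdot}$, from which the desired exponential gaps follow.

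Since $\alphaeth_p < \alpha_A$, by the definition in \cref{eq:alpha-eth} I can fix $t \in [0, 1/2]$ and $a \ge t$ with
\[
    \frac{a}{\betainv{0}{\betafun{t}{a}}} < \alpha_A.
\]
This forces $\betainv{0}{\betafun{t}{a}} > a/\alpha_A > a/\alpha_G > 0$ (so in particular $\betafun{t}{a} > 1$), and applying the strictly increasing function $\betafun{0}{\cdot}$ (\cref{clm:beta-properties}, \cref{en:beta-continuity}) to both sides yields the strict inequality $\betafun{t}{a} > \betafun{0}{a/\alpha_G}$.

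Next I would use \cref{clm:beta-properties}, \cref{en:beta-approx-Np}, together with $r = a n^{1/p}$, to write
\[
    N_p(\Z^n, r, t \cdot \vec{1}) \ge \betafun{t}{a}^n \cdot \exp(-C^* \sqrt{n}) = \betafun{t}{a}^{n - o(n)}
\]
and
\[
    N_p^o(\Z^n, r/\alpha_G, \vec{0}) \le N_p(\Z^n, (a/\alpha_G) n^{1/p}, \vec{0}) \le \betafun{0}{a/\alpha_G}^n,
\]
so that setting $\nu_0 := \betafun{t}{a}/\betafun{0}{a/\alpha_G} > 1$ handles the first term of the max. The only edge case is $a = t$, which (in order for the ratio above to be finite and less than $\alpha_A$) forces $t = 1/2$ and $a = 1/2$; there $N_p(\Z^n, r, (1/2) \cdot \vec{1}) = 2^n = \betafun{t}{a}^n$ may be verified directly.

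For the $\nu_1$ half, let $b := (\alpha_A/\alpha_G) \cdot a < a$. If $b \le t$, then every integer point has $\ell_p$ distance at least $t n^{1/p} \ge b n^{1/p}$ from $t \cdot \vec{1}$, so $N_p^o(\Z^n, b n^{1/p}, t \cdot \vec{1}) = 0$ and any $\nu_1 > 1$ suffices. Otherwise $b > t$, and by strict monotonicity of $\betafun{t}{\cdot}$ on $[t, \infty)$ I get $\betafun{t}{b} < \betafun{t}{a}$; combined with the upper bound $N_p^o(\Z^n, bn^{1/p}, t \cdot \vec{1}) \le N_p(\Z^n, bn^{1/p}, t \cdot \vec{1}) \le \betafun{t}{b}^n$ from \cref{en:beta-approx-Np}, I can take $\nu_1 := \betafun{t}{a}/\betafun{t}{b} > 1$.

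The only genuinely substantive step is locating $(t, a)$ via the strict infimum bound; the rest reduces to invoking the asymptotic integer-point counting machinery already set up in the preliminaries. I expect the mild degeneracies ($a = t$ and $b \le t$) dispatched above to be the only bookkeeping of note.
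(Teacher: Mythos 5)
Your proposal is correct and follows essentially the same route as the paper's proof: pick $(t,a)$ from the infimum defining $\alphaeth_p$, then convert all point counts into powers of $\beta_{p,\cdot}$ via \cref{clm:beta-properties}, \cref{en:beta-approx-Np}, and extract $\nu_0,\nu_1>1$ from the resulting strict inequalities of $\beta$-values. The only cosmetic difference is that you anchor the infimum bound on $\alpha_A$ rather than $\alpha_G$ (both work since $\alphaeth_p < \alpha_A < \alpha_G$), and you spell out the degenerate cases $a = t$ and $(\alpha_A/\alpha_G)a \le t$ explicitly, whereas the paper handles them implicitly through the extended definition of $\betafun{t}{\cdot}$.
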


\begin{proof}
By \cref{clm:beta-properties}, \cref{en:beta-approx-Np}, to have the desired inequality, it suffices to have
\begin{equation*}
    \betafun{t}{a}^n \cdot \exp(-C^* \sqrt{n})
    \ge \max \bigl\{ \nu_0^{n - o(n)} \cdot \betafun{0}{a / \alpha_G}^n, \nu_1^{n - o(n)} \cdot \betafun{t}{(\alpha_A / \alpha_G) \cdot a}^n \bigr\}
    \ \text,
\end{equation*}
and then suffices to have
\begin{equation*}
    \betafun{t}{a}
    \ge \max \bigl\{ \nu_0 \cdot \betafun{0}{a / \alpha_G}, \nu_1 \cdot \betafun{t}{(\alpha_A / \alpha_G) \cdot a} \bigr\}
    \ \text.
\end{equation*}
Since $\alpha_A < \alpha_G$, by the monotonicity of $\betafun{t}{\cdot}$, we know that for any $a \ge t$ there exists $\nu_1 > 1$ such that $\betafun{t}{a} \ge \nu_1 \cdot \betafun{t}{(\alpha_A / \alpha_G) \cdot a}$.
Moreover, as $\alpha_G > \alphaeth_p$, by definition of $\alphaeth_p$ in \cref{eq:alpha-eth}, we know there exists $t \in [0, 1/2]$ and $a \ge t$ such that
\begin{equation*}
    \alpha_G > \frac{a}{\betainv{0}{\betafun{t}{a}}}
    \ \text.
\end{equation*}
Hence, by the monotonicity of $\betafun{0}{\cdot}$, there exists $\nu_0 > 1$ so that $\betafun{t}{a} \ge \nu_0 \cdot \betafun{0}{a / \alpha_G}$.
\end{proof}

By applying the normalized gadgets $((\alpha_G / r) \cdot \Z^n, (\alpha_G t / r) \cdot \vec{1})$ to \cref{thm:generic-bdd}, and setting $\alpha_G, \alpha_A$ to be arbitrarily close to $\alphaeth_p$, we get the following.
Note that this family of gadgets is efficiently computable as $(\Z^n, t \cdot \vec{1})$ is efficiently computable and $r$ is set to be $a n^{1/p}$ for some constant $a$, and the associated point counting of the form $N_p(\Z^n, a' n^{1/p}, t' \cdot \vec{1})$ can be efficiently approximated to within a $2^{o(n)}$ factor by computing $\betafun{t'}{a'}^n$.

\begin{corollary}
\label{cor:reduction-eth-bdd-int}
For any $p \in [1, \infty)$, $\alpha > \alphaeth_p$, and $\gamma > 1$, there exists an efficient randomized reduction from $\binGapCVP_{p, \gamma}$ in rank $n$ to $\BDD_{p, \alpha}$ in rank $C n$ for some constant $C > 1$ for all sufficiently large $n$.
\end{corollary}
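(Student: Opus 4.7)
The plan is to apply Theorem \ref{thm:generic-bdd} with the integer-lattice gadget supplied by the immediately preceding lemma, pushing the parameters $\alpha_G, \alpha_A$ arbitrarily close to $\alphaeth_p$. First, given a target $\alpha > \alphaeth_p$ and a gap $\gamma > 1$, I would pick constants $\alphaeth_p < \alpha_A < \alpha_G$ so that the closed-form bound \eqref{eq:generic-alpha-eth} evaluates to a quantity strictly less than $\alpha$; I explain in the next paragraph why this is always possible. With these $\alpha_A, \alpha_G$ fixed, the preceding lemma furnishes $t \in [0,1/2]$, $a \ge t$, constants $\nu_0, \nu_1 > 1$, and the radius $r = a n^{1/p}$ for every $n$, so that $(\Z^n, t\cdot\vec{1})$ has at least $\nu_0^{n-o(n)}$ times more close vectors (within distance $r$) than short vectors (within distance $r/\alpha_G$), and at least $\nu_1^{n-o(n)}$ times more close vectors than too-close vectors (within distance $(\alpha_A/\alpha_G) r$).

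Next I would normalize this gadget so that its short distance is $1$, matching the hypothesis of Theorem \ref{thm:generic-bdd}: set $\lat_n^\dagger := (\alpha_G/r)\cdot \Z^n$ and $\vec{t}_n^\dagger := (\alpha_G t/r)\cdot \vec{1}$. This scales the short, close, and too-close radii to $1, \alpha_G, \alpha_A$ respectively, so the hypothesis \eqref{eq:gadget-assumption} is satisfied. Now I would verify that the limit of the expression in \eqref{eq:generic-alpha-eth} as $(\alpha_G,\alpha_A)\downarrow \alphaeth_p$ equals $\alphaeth_p$: since $\alphaeth_p \le 1$ by the remark right after \eqref{eq:alpha-eth}, once $\alpha_A$ is close enough to $\alphaeth_p$ we have $\alpha_A^p < 1$ so $\min\{\alpha_A^p,1\} = \alpha_A^p$, and then \eqref{eq:generic-alpha-eth} reads $\bigl(\alpha_G^p + (\alpha_G^p - \alpha_A^p)/(\gamma^p - 1)\bigr)^{1/p}$, which indeed tends to $\alphaeth_p$ as $\alpha_G, \alpha_A \to \alphaeth_p$. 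Thus for any $\alpha > \alphaeth_p$, a suitable choice of $\alpha_G, \alpha_A$ makes the bound less than $\alpha$, and Theorem \ref{thm:generic-bdd} delivers a non-uniform reduction to $\BDD_{p,\alpha}$ with multiplicative rank blowup $C>1$.

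Finally, to upgrade from a non-uniform to an efficient randomized reduction, I would invoke the last clause of Theorem \ref{thm:generic-bdd}. The gadget $(\Z^n, t\cdot\vec{1})$ (and its scaled version) is trivially computable deterministically in polynomial time, since $t$ and the scaling factor are fixed constants depending only on $p, \alpha, \gamma$. For the second condition, \cref{clm:beta-properties}\ref{en:beta-approx-Np} gives that the relevant counts $N_p(\Z^n, a' n^{1/p}, t'\cdot\vec{1})$ are sandwiched between $\betafun{t'}{a'}^n \cdot \exp(-C^*\sqrt n)$ and $\betafun{t'}{a'}^n$, so they can be approximated to within a $\exp(C^*\sqrt n) = 2^{o(n)}$ factor by computing $\betafun{t'}{a'}^n$. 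The value $\betafun{t'}{a'}$ is a simple one-parameter minimization over a rapidly convergent series (by \cref{clm:beta-properties}\ref{en:beta-via-min-tau} and the proof of \cref{clm:rp-equals-mu}), and can be evaluated to arbitrary accuracy in polynomial time; this establishes approximability and hence gives the randomized reduction.

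No step looks like a genuine obstacle: the heavy lifting has already been done by the preceding lemma and by Theorem \ref{thm:generic-bdd}. The most delicate point is the short continuity verification that the bound in \eqref{eq:generic-alpha-eth} collapses to $\alphaeth_p$ in the limit, which is where the infimum in the definition of $\alphaeth_p$ is exactly reflected.
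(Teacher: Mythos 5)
Your proposal is correct and takes essentially the same approach as the paper: normalize the integer-lattice gadget from the preceding lemma by $(\alpha_G/r)$, take $\alpha_G, \alpha_A$ arbitrarily close to $\alphaeth_p$ so that the bound in \cref{eq:generic-alpha-eth} falls below the given $\alpha$, and invoke both the ``moreover'' and ``furthermore'' clauses of \cref{thm:generic-bdd} together with \cref{clm:beta-properties} to get an efficient randomized reduction. One small imprecision: you assert that $\alpha_A^p < 1$ once $\alpha_A$ is close enough to $\alphaeth_p$, but this fails when $\alphaeth_p = 1$ (which holds for all $1 \le p \le 2$), in which case $\min\{\alpha_A^p, 1\} = 1$; nevertheless the expression in \cref{eq:generic-alpha-eth} still tends to $\alphaeth_p$ in that case since $\alpha_G^p - 1 \to 0$, so the conclusion is unaffected.
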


Combining \cref{cor:reduction-eth-bdd-int} with the Gap-ETH-hardness result for $\binGapCVP$ in \cref{thm:gapeth-cvp}, we immediately get \cref{thm:gapeth-bdd-int-informal}, restated below.

\gapethbddint*

Bennett and Peikert~\cite[Equation 3.5]{conf/coco/BennettP20} define $\alpha^*_p :=
\inf\set{\alpha^* > 0 : \min_{\tau > 0} \exp(\tau / (2 \alpha^*)^p) \cdot \Theta_p(\tau) \leq 2}$ and show an analogous result to \cref{thm:gapeth-bdd-int-informal} for all $\alpha > \alpha_p^*$ (assuming randomized ``non-gap'' ETH). By \cref{clm:beta-properties}, \cref{en:beta-via-min-tau}, we know $\alpha^*_p = 1/(2 \betainv{0}{2})$.
Note that $1 / (2 \betainv{0}{2}) = a / \betainv{0}{\betafun{t}{a}}$ for $a = t = 1/2$.
As a result, we have $\alphaeth_p \le \alpha^*_p$ (see \cref{eq:alpha-eth} for the definition of $\alphaeth_p$), and so \cref{thm:gapeth-bdd-int-informal} gives a result that is at least as strong quantitatively as the corresponding result in \cite{conf/coco/BennettP20} for all $p$ (see the left plot in \cref{fig:alpha-plots}).
Moreover, the upper bounds on $\alpha^*_p$ obtained in \cite[Lemma~3.11]{conf/coco/BennettP20} immediately apply to our $\alphaeth_p$ as well.
In addition, $\betafun{0}{2 a} \geq \betafun{t}{a}$ holds for any $t \in [0, 1/2], a \geq t$ by triangle inequality (see \cref{clm:count-properties}, \cref{en:count-close-triangle}), implying that $\alphaeth_p \geq 1/2$.
Combining these bounds, we get that $\lim_{p \to \infty} \alphaeth_p = 1/2$.

\subsection{Gap-SETH-hardness of BDD from Exponential Kissing Number Lattices}
\label{sec:gapseth-bdd}
In this section we show Gap-SETH-hardness of $\BDD_{p, \alpha}$ using the general reduction established in \cref{sec:reduction-bdd}.
We start by showing how to construct locally dense lattices with exponentially many close vectors, essentially the same close and short distances, and no too-close vectors by sparsifying exponential kissing number lattices. 

\begin{lemma}
\label{lem:kissing-number-gadget}
For all sufficiently large $n \in \Z^+$,
there exists a lattice $\lat^\dagger$ of rank $n$ and a target $\vec{t}^\dagger \in \lspan(\lat^\dagger)$
such that $\lambda_1(\lat^\dagger) \ge 1$, 
$N_2(\lat^\dagger, 1, \vec{t}^\dagger) > \tau^l_n/4$,
and $N^o_2(\lat^\dagger, 1, \vec{t}^\dagger) = 0$.
\end{lemma}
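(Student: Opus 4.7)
The plan is to obtain $\lat^\dagger$ and $\vec{t}^\dagger$ via a single application of sparsification (\cref{prop:sparsification-lattice-sets}) to a lattice achieving the maximum kissing number. Let $\lat$ be any rank-$n$ lattice with $\tau(\lat) = \tau^l_n$, scaled so that $\lambda_1(\lat) = 1$, and let $B$ be a basis of $\lat$. Then $N_2(\lat, 1, \vec{0}) = \tau^l_n + 1$ (the kissing vectors together with the origin) and $N_2^o(\lat, 1, \vec{0}) = 1$ (only the origin has norm strictly less than $1$).

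Take the prime $q = 3$, sample $\vec{x}, \vec{z} \in \F_3^n$ uniformly, and set
\[
\lat^\dagger := \set{\vec{v} \in \lat : \iprod{B^+ \vec{v}, \vec{x}} \equiv 0 \Mod{3}}, \qquad \vec{t}^\dagger := -B \vec{z}.
\]
This matches the setup of \cref{prop:sparsification-lattice-sets} with $\vec{t} = \vec{0}$ and $r = 1$, and the hypothesis $r < q \cdot \lambda_1(\lat) / 2 = 3/2$ holds. Moreover $\lat^\dagger \subseteq \lat$ always has rank $n$ and $\vec{t}^\dagger \in \lspan(\lat^\dagger) = \lspan(\lat)$, so deterministically $\lambda_1(\lat^\dagger) \ge \lambda_1(\lat) = 1$, giving the first required property for free.

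I plan to verify the remaining two conditions via a union bound. Applying \cref{en:sparsification-close} with $\delta = 1/4$ gives
\[
\Pr\bigl[N_2(\lat^\dagger, 1, \vec{t}^\dagger) \le (\tau^l_n + 1)/4\bigr] \le \frac{48}{\tau^l_n + 1} + \frac{1}{3^n},
\]
whose complement implies $N_2(\lat^\dagger, 1, \vec{t}^\dagger) > (\tau^l_n + 1)/4 > \tau^l_n/4$. Applying \cref{en:sparsification-too-close} gives $\Pr[\dist_2(\vec{t}^\dagger, \lat^\dagger) < 1] \le 1/3 + 1/3^n$, whose complement is precisely $N_2^o(\lat^\dagger, 1, \vec{t}^\dagger) = 0$. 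Summing, the total failure probability is at most $1/3 + 48/(\tau^l_n + 1) + 2/3^n$, which is strictly less than $1$ for all sufficiently large $n$ since $\tau^l_n \to \infty$ by \cref{thm:kissing-number-lattice}. By the probabilistic method, a witness $(\lat^\dagger, \vec{t}^\dagger)$ must therefore exist.

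The main obstacle is balancing the too-close bound from \cref{en:sparsification-too-close} against the close-vector concentration bound from \cref{en:sparsification-close}: because the origin lies strictly inside the unit ball around $\vec{0}$, sparsification only ``removes'' it with probability about $(q - 1)/q$, which prevents taking $q$ very small. Taking $q = 2$ would violate the strict inequality $r < q \cdot \lambda_1(\lat)/2$ required by \cref{en:sparsification-close}, so $q = 3$ is the minimal prime that works, and it still leaves a success probability of roughly $2/3$, which is more than enough to conclude.
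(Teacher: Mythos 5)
Your proof is correct and follows essentially the same approach as the paper: sparsify a rank-$n$ lattice achieving $\tau^l_n$ with prime index $q = 3$, apply \cref{prop:sparsification-lattice-sets} with $\delta = 1/4$ for the close-vector bound and \cref{en:sparsification-too-close} to exclude too-close vectors, take a union bound, and invoke the probabilistic method. Your accounting is in fact slightly more careful than the paper's, using $N_2(\lat,1,\vec{0}) = \tau^l_n + 1$ throughout rather than rounding to $\tau^l_n$.
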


\begin{proof}
By definition, there exists a lattice $\lat$ of rank $n$ with Euclidean kissing number $\tau^l_n$.
Suppose without loss of generality that $\lat$ has dimension $n$ and $\lambda_1(\lat) = 1$.
We will apply \cref{prop:sparsification-lattice-sets} to lattice $\lat$ and target $\vec{t} := \vec{0}$ with index $q = 3$.
In particular, supposing $\mat{B} \in \R^{n \times n}$ is an arbitrary basis of $\lat$, the sparsification samples $\vec{x}, \vec{z} \in \F_q^n$ uniformly at random and sets
\begin{equation*}
    \lat^\dagger := \set{ \vec{v} \in \lat : \iprod{\mat{B}^{-1} \vec{v}, \vec{x}} \equiv 0 \Mod{q}}
    \ \text, \qquad
    \vec{t}^\dagger := \vec{t} -\mat{B} \vec{z} = -\mat{B} \vec{z}
    \ \text.
\end{equation*}
It is easy to see that by construction, $\lambda_1(\lat^\dagger) \ge \lambda_1(\lat) = 1$.
Moreover, note that with $r = \lambda_1(\lat) = 1$, we have $r < q \cdot \lambda_1(\lat) / 2$.
Then by \cref{prop:sparsification-lattice-sets}, \cref{en:sparsification-close}, $N_2(\lat^\dagger, 1, \vec{t}^\dagger) \le \tau^l_n / (q+1)$ with probability at most $q (q+1)^2 / \tau^l_n + 1/q^n$,
and by \cref{prop:sparsification-lattice-sets}, \cref{en:sparsification-too-close}, $N^o_2(\lat^\dagger, 1, \vec{t}^\dagger) > 0$ with probability at most $1 / q + 1/q^n$.
By union bound, $(\lat^\dagger, \vec{t}^\dagger)$ satisfies the desired properties with probability at least
\begin{equation*}
    1 - \frac{q (q+1)^2}{\tau^l_n} - \frac{1}{q} - \frac{2}{q^n}
    \ \text,
\end{equation*}
which is positive for $q = 3$ and all sufficiently large $n$. In particular, $\lat^{\dagger}$, $\tdag$ with the claimed properties exist for all sufficiently large $n$.
\end{proof}

Note that the gadgets constructed in \cref{lem:kissing-number-gadget} have, using the notation in \cref{lem:adapt-gadget}, $\alpha_A = \alpha_G = 1$, $M_0 = \Omega(\tau^l_n) \ge 2^{\kissnum n - o(n)}$ (by definition of $\kissnum$) and arbitrarily large $M_1$.
By applying \cref{lem:adapt-gadget} to the gadgets we immediately get the following.

\begin{corollary}
\label{cor:gadget-bdd-kn}
Assume that $\kissnum > 0$. Then for any $n \in \Z^+$, $p \in [1, \infty)$, and $\nu_1 > 1$,
there exist a lattice $\lat^\dagger$ of rank $n$ and dimension $m = \poly(n)$ and a target $\vec{t}^\dagger \in \lspan(\lat^\dagger)$,
such that
\begin{equation*}
    N_p(\lat^\dagger, 1 + o(1), \vec{t}^\dagger) \ge \max \bigl\{ 2^{\kissnum n - o(n)} \cdot N_p^o(\lat^\dagger, 1, \vec{0}), \nu_1^n \cdot N_p^o(\lat^\dagger, 1, \vec{t}^\dagger) \bigr\}
    \ \text.
\end{equation*}
\end{corollary}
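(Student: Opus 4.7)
The plan is to derive this corollary as a direct consequence of Lemma~\ref{lem:kissing-number-gadget} together with the norm-embedding gadget adaptation in Lemma~\ref{lem:adapt-gadget}, using the explicit lower bound $\tau^l_n \ge 2^{\kissnum n - o(n)}$ from Theorem~\ref{thm:kissing-number-lattice}.

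First, for all sufficiently large $n$, invoke Lemma~\ref{lem:kissing-number-gadget} to obtain an $\ell_2$ gadget $(\lat^*, \vec{t}^*)$ of rank $n$ satisfying $\sm[2]_1(\lat^*) \ge 1$, $N_2(\lat^*, 1, \vec{t}^*) > \tau^l_n/4$, and $N^o_2(\lat^*, 1, \vec{t}^*) = 0$. The condition $\sm[2]_1(\lat^*) \ge 1$ immediately gives $N_2^o(\lat^*, 1, \vec{0}) = 1$ (only the origin lies strictly inside the unit ball), so combining with Theorem~\ref{thm:kissing-number-lattice} we get
\[
N_2(\lat^*, 1, \vec{t}^*) \ge \tau^l_n/4 \ge 2^{\kissnum n - o(n)} \cdot N_2^o(\lat^*, 1, \vec{0}).
\]
Meanwhile, since $N_2^o(\lat^*, 1, \vec{t}^*) = 0$, the inequality $N_2(\lat^*, 1, \vec{t}^*) \ge M_1 \cdot N_2^o(\lat^*, 1, \vec{t}^*)$ holds vacuously for every $M_1 > 0$, in particular for $M_1 = \nu_1^n$.

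Next, apply Lemma~\ref{lem:adapt-gadget} to $(\lat^*, \vec{t}^*)$ with parameters $\alpha_G = \alpha_A = 1$, $M_0 = 2^{\kissnum n - o(n)}$, and $M_1 = \nu_1^n$. The lemma produces a lattice $\lat^\dagger$ of rank $n$ and dimension $m = \poly(n)$ and a target $\vec{t}^\dagger \in \lspan(\lat^\dagger)$ satisfying
\[
N_p(\lat^\dagger, 1 + o(1), \vec{t}^\dagger) \ge \max\bigl\{ 2^{\kissnum n - o(n)} \cdot N_p^o(\lat^\dagger, 1, \vec{0}),\; \nu_1^n \cdot N_p^o(\lat^\dagger, 1, \vec{t}^\dagger) \bigr\},
\]
which is exactly the conclusion of the corollary. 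For the finitely many values of $n$ that are too small for Lemma~\ref{lem:kissing-number-gadget} to apply, one can use any trivial rank-$n$ gadget (e.g., $\Z^n$ with a suitable target), absorbing the deficit into the $o(n)$ terms in both $2^{\kissnum n - o(n)}$ and the close-distance $1 + o(1)$.

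There is no genuine obstacle here; the statement is essentially a ``packaging'' lemma that combines three earlier ingredients (the kissing-number bound, the sparsification-based construction in Lemma~\ref{lem:kissing-number-gadget}, and the norm-embedding wrapper in Lemma~\ref{lem:adapt-gadget}). The only mild subtlety is to verify that $\alpha_G = \alpha_A = 1$ is permitted by Lemma~\ref{lem:adapt-gadget} (which requires $0 < \alpha_A \le \alpha_G \le 1$) and that the vanishing of $N_2^o(\lat^*, 1, \vec{t}^*)$ lets us take $M_1$ as large as we please without further argument.
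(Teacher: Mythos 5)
Your proof is correct and follows essentially the same route as the paper: invoke Lemma~\ref{lem:kissing-number-gadget} to get an $\ell_2$ gadget with $M_0 = \Omega(\tau^l_n) \ge 2^{\kissnum n - o(n)}$ and arbitrarily large $M_1$ (since the too-close count is $0$), then apply Lemma~\ref{lem:adapt-gadget} with $\alpha_G = \alpha_A = 1$. Your extra observations — that $\lambda_1(\lat^*) \ge 1$ forces $N_2^o(\lat^*, 1, \vec{0}) = 1$, and that small $n$ are absorbed by the $o(\cdot)$ terms — are correct and make explicit what the paper leaves implicit.
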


For $p \in [1, \infty)$ and $C > 1$, we define
\begin{equation} \label{eq:alpha-seth}
    \alphaseth_{p, C} := \Bigl( 1 + \frac{1}{(2 \betainv{0}{2^{\kissnum (C - 1)}})^p} \Bigr)^{1/p}
    \ \text,
\end{equation}
which we obtain by plugging $\alpha_G = \alpha_A = 1$, $\nu_0 = 2^{\kissnum}$, and sufficiently large $\nu_1$ into \cref{eq:generic-alpha-seth} in \cref{thm:generic-bdd}. Combining this with \cref{cor:gadget-bdd-kn,thm:generic-bdd}, we get the following.

\begin{corollary}
\label{cor:reduction-seth-bdd}
Assume that $\kissnum > 0$. Then for any constants $p \in [1, \infty)$, $C > 1$, $\gamma > 1$, and $\alpha > \alphaseth_{p, C}$, there exists an efficient non-uniform reduction from $\binGapCVP_{p, \gamma}$ in rank $n$ to $\BDD_{p, \alpha}$ in rank $C n$ for all sufficiently large $n$.
\end{corollary}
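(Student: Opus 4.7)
The plan is to prove this corollary by directly combining the gadget construction from \cref{cor:gadget-bdd-kn} with the general reduction framework of \cref{thm:generic-bdd}. Since \cref{cor:gadget-bdd-kn} already gives us a family of gadgets $\set{(\lat^\dagger_n, \vec{t}^\dagger_n)}$ satisfying the hypothesis \cref{eq:gadget-assumption} with $\alpha_G = 1 + o(1)$, $\alpha_A = 1$, $\nu_0 = 2^{\kissnum}$, and $\nu_1$ arbitrarily large, the work reduces to plugging these parameters into \cref{eq:generic-alpha-seth} and verifying that the resulting lower bound on $\alpha$ matches $\alphaseth_{p,C}$ as defined in \cref{eq:alpha-seth}.

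First I would instantiate \cref{thm:generic-bdd} with the above choices of $\alpha_G$, $\alpha_A$, $\nu_0$, $\nu_1$. With these choices, the quantities $d_0 = \betainv{0}{\nu_0^{C-1}} = \betainv{0}{2^{\kissnum(C-1)}}$ and $d_1 = \betainv{1/2}{\max\set{\nu_1^{C-1}, 2}}$ are well-defined (using \cref{clm:beta-properties}). By choosing $\nu_1$ large enough, we can drive $d_1$ to be as large as we want, so in particular we can arrange $((2 d_1)^p - 1)/(2 d_0)^p \ge 1 = \alpha_A^p$, making the minimum in \cref{eq:generic-alpha-seth} equal to $\alpha_A^p = 1$.

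With the minimum pinned down, the bound in \cref{eq:generic-alpha-seth} becomes
\begin{equation*}
    \alpha > \Bigl( (1+o(1))^p + \frac{(1+o(1))^p - 1}{\gamma^p - 1} + \frac{1}{(2 d_0)^p} \Bigr)^{1/p}.
\end{equation*}
The first term tends to $1$ and the middle term tends to $0$ as the $o(1)$ quantity from \cref{cor:gadget-bdd-kn} vanishes (we can make it arbitrarily small by choosing $n$ large enough, which is legitimate since \cref{thm:generic-bdd} only needs the reduction to hold for all sufficiently large $n'$). So for any fixed $\alpha > \alphaseth_{p,C} = \bigl( 1 + 1/(2 d_0)^p \bigr)^{1/p}$, we can pick parameters so the inequality above holds, and \cref{thm:generic-bdd} then yields the desired non-uniform reduction from $\binGapCVP_{p,\gamma}$ in rank $n$ to $\BDD_{p,\alpha}$ in rank $Cn+1$.

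There is essentially no obstacle here; this is a bookkeeping step combining two previously established results. The only subtlety to check carefully is that the $o(1)$ slack in $\alpha_G$ from \cref{cor:gadget-bdd-kn} does not inflate the threshold: since $\alpha > \alphaseth_{p,C}$ is a strict inequality with a fixed gap, we have room to absorb both the vanishing $(\alpha_G^p - 1)/(\gamma^p - 1)$ contribution and the perturbation of the $\alpha_G^p$ term itself. The use of ``non-uniform'' in the conclusion is inherited from \cref{thm:generic-bdd} because the Vl\u{a}du\c{t} lattices underlying \cref{cor:gadget-bdd-kn} (via \cref{thm:kissing-number-lattice}) are not known to be efficiently constructible.
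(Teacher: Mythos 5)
Your proposal matches the paper's argument: plug $\alpha_G = \alpha_A = 1$, $\nu_0 = 2^{\kissnum}$, and arbitrarily large $\nu_1$ from \cref{cor:gadget-bdd-kn} into \cref{eq:generic-alpha-seth} of \cref{thm:generic-bdd}, and observe that with $\nu_1$ large enough to force $((2d_1)^p - 1)/(2d_0)^p \ge 1$ the middle term vanishes and the threshold reduces to $\alphaseth_{p,C}$ as in \cref{eq:alpha-seth}. One small simplification is available: the hypothesis \cref{eq:gadget-assumption} already allows a $+o(1)$ slack in the close-vector radius, so you may take the constant $\alpha_G = 1$ directly (the gadget's $1 + o(1)$ matches $\alpha_G + o(1)$ with $\alpha_G = 1$), rather than tracking a separate $o(1)$ inside $\alpha_G$ and arguing it gets absorbed; the conclusion is the same either way.
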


Combining \cref{cor:reduction-seth-bdd} with the Gap-SETH-hardness result for $\binGapCVP$ in \cref{thm:gapseth-cvp}, we immediately get \cref{thm:gapseth-bdd-informal}, restated below.

\gapsethbdd*

We note that the extra $1 - \eps$ factor in the runtime lower bound exponent in \cref{thm:gapseth-cvp} gets absorbed by the multiplicative rank increase factor $C$ in \cref{cor:reduction-seth-bdd}.

We remark that for any fixed $C > 1$, $\lim_{p \to \infty} \alphaseth_{p, C} = 1$, and for any fixed $p \in [1, \infty)$, $\lim_{C \to \infty} \alphaseth_{p, C} = 1$.
To show this, we consider the behavior of $\betainv{0}{\cdot}$.
By \cref{clm:beta-0-ub} in \cref{sec:gapseth-svp} we know that $\betafun{0}{a} \le f(a^p)$ for some strictly increasing function $f$ with $\lim_{x \to \infty} f(x) = \infty$ (see \cref{clm:beta-0-ub} for the explicit formula for $f$), and consequently $\betainv{0}{\nu}^p \ge f^{-1}(\nu)$.
Then, recalling the definition of $\alphaseth_{p, C}$ in \cref{eq:alpha-seth}, we have $1 \le \alphaseth_{p, C} \le (1 + 1 / (2^p f^{-1}(2^{\kissnum (C - 1)})))^{1/p}$, which implies the claimed limits.

\subsubsection{An Approach for Using a Weaker Hypothesis}
\label{subsubsec:remove-gap-assumption}

We conclude with an approach for making the result in \cref{thm:gapseth-bdd-informal} depend on a weaker assumption, namely non-uniform SETH rather than non-uniform Gap-SETH. 
The approach considers a reduction akin to the main reduction in \cref{thm:generic-bdd}, which reduces an instance of $\binGapCVP_{p, \gamma}$ of rank $n'$ to an instance of $\BDD_{p, \alpha}$ of rank $Cn' + 1$ for some constant $C > 1$ assuming the existence of a family of locally dense lattices parameterized by $\alpha_G$, $\alpha_A$, $\nu_0$, and $\nu_1$.
\cref{thm:generic-bdd} assumes that all of these parameters are constants, but this requirement is for simplicity and does not seems inherent. Here we will consider what happens if we disregard this requirement and allow $\gamma$, $\alpha_G$, and $\alpha_A$ to depend on $n'$.
The approach uses three observations:
\begin{enumerate}
    \item \cref{thm:seth-cvp} shows hardness of $\binGapCVP_{p, \gamma}$ instances of rank $n'$ with $\gamma \geq 1 + 1/\poly(n')$ assuming (plain) SETH.
    \item Applying norm embeddings (\cref{lem:norm-embedding}) to lattices of rank $n$ with distortion $\eps = \eps(n) \geq 1/\poly(n)$ can be done in $\poly(n)$ time.
    \item Because we can take $\nu_1$ to be arbitrarily large in \cref{cor:gadget-bdd-kn}, when plugging the corresponding gadgets into \cref{thm:generic-bdd} we can make $d_1$ arbitrarily large. In particular, we can make it so that the middle term in \cref{eq:generic-alpha-seth} simplifies to $(\alpha_G^p - \alpha_A^p)/(\gamma^p - 1)$.\footnote{In fact, this is essentially the same observation we already used to derive \cref{cor:reduction-seth-bdd} from \cref{cor:gadget-bdd-kn}.}
\end{enumerate}

Combining these observations, if $\gamma = \gamma(n')$ satisfies $\gamma^p - 1 \geq 1/\poly(n')$---which it does for the SETH-hard $\binGapCVP_{p, \gamma}$ instances from \cref{thm:seth-cvp}---then we can efficiently apply norm embeddings %
with distortion $\eps = o(\gamma^p - 1)$
to obtain modified locally dense lattices of rank $(C - 1)n'$ with the same bounds on $\nu_0$ and $\nu_1$ as in \cref{cor:gadget-bdd-kn}, but with $\alpha_G = \alpha_G(n'), \alpha_A = \alpha_A(n')$ satisfying $\alpha_G^p - \alpha_A^p = o(\gamma^p - 1)$ and hence $(\alpha_G^p - \alpha_A^p)/(\gamma^p - 1) = o(1)$.

Naively plugging these modified locally dense lattices into \cref{eq:generic-alpha-seth} and invoking \cref{thm:seth-cvp} appears to yield a result with the same conclusion as in \cref{thm:gapseth-bdd-informal}, but with the hypothesis weakened to rely only on non-uniform SETH instead of non-uniform Gap-SETH.
However, we emphasize that this is not correct as is since we have violated the assumption that $\gamma$, $\alpha_G$, and $\alpha_A$ are constants in \cref{thm:generic-bdd}.
Nevertheless, it seems very likely that we could remove this assumption (at least in the special case of the locally dense lattices used in this section) and obtain this result.

\section{Gap-SETH-hardness of GapSVP}
\label{sec:gapseth-svp}

The goal of this section is to show Gap-SETH-hardness of $\GapSVP$, stated in \cref{thm:gapseth-svp-informal}, using the integer lattice as a locally dense lattice as in \cref{sec:gapeth-bdd}.
Here the proof idea is again to reduce from $\binGapCVP_{p, \gamma}$, whose Gap-SETH-hardness is known (see \cref{thm:gapseth-cvp}), to $\GapSVP_{p, \gamma'}$.
We start by introducing the intermediate problem $(A, G)$-$\GapCVP_{p, \gamma}$, which was originally defined in~\cite{conf/stoc/AggarwalS18} with a slightly different but equivalent parameterization.

\begin{definition}
For $p \in [1, \infty]$, $\gamma = \gamma(n) \ge 1$, and $A = A(n), G = G(n) \ge 0$, an instance of the decision promise problem $(A, G)$-$\GapCVP_{p,\gamma}$ is a lattice basis $\mat{B} \in \R^{d \times n}$, a target $\vec{t} \in \R^d$, and distances $r, u > 0$.
\begin{itemize}
    \item It is a YES instance if $N_p(\lat(\mat{B}), r, \vec{t}) \ge G$.
    \item It is a NO instance if $A_{p, u}(\lat(\mat{B}), \gamma (r^p + u^p)^{1/p}, \vec{t}) \le A$, where
    \[
        A_{p, u}(\lat, r', \vec{t}) := \sum_{z = 0}^{\lfloor r' / u \rfloor} N_p(\lat, ((r')^p - z^p u^p)^{1/p}, z \vec{t}) - 1
        \ \text.
    \]
\end{itemize}
\end{definition}

The following result from~\cite{conf/stoc/AggarwalS18} gives an essentially rank- and dimension-preserving reduction from $(A, G)$-$\GapCVP_{p,\gamma}$ to $\GapSVP_{p, \gamma}$ for $G \geq 1000 A$. Although we will use it as a black box, we note in passing that it works by combining Kannan's embedding and lattice sparsification, essentially reducing our task to upper bounding the terms $A_{p, u}(\lat, r', \vec{t})$.

\begin{theorem}[{\cite[Theorem 3.2]{conf/stoc/AggarwalS18}}] \label{thm:agcvp-to-svp}
For any $p \in [1, \infty)$, $A = A(n) \geq 1$ and $G = G(n) \geq 1000 A(n)$ computable in $\poly(n)$ time, and $\gamma = \gamma(n) \geq 1$, there is an efficient randomized reduction from $(A, G)$-$\CVP_{p, \gamma}$ instances of rank $n$ and dimension $d$ to $\GapSVP_{p, \gamma}$ instances of rank $n + 1$ and dimension $d + 1$ that runs in time $\poly(d, \log A, \log G)$.
\end{theorem}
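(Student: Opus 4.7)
The plan is to combine Kannan's embedding with randomized lattice sparsification. Given an $(A,G)$-$\GapCVP_{p,\gamma}$ instance $(B, \vec{t}, r, u)$ of rank $n$ and dimension $d$, the reduction first forms Kannan's embedding
\begin{equation*}
    B'' := \begin{pmatrix} B & \vec{t} \\ \vec{0}^\top & u \end{pmatrix}
    \ \text,
\end{equation*}
a basis of a rank-$(n+1)$, dimension-$(d+1)$ lattice. Vectors in $\lat(B'')$ have the form $(B \vec{x} + z \vec{t}, z u)$ for $(\vec{x}, z) \in \Z^n \times \Z$, with $\ell_p$-norm $(\norm{B\vec{x} + z\vec{t}}_p^p + \abs{z}^p u^p)^{1/p}$. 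Set $r' := (r^p + u^p)^{1/p}$ as the distance threshold for the output $\GapSVP_{p,\gamma}$ instance. In a YES instance each of the $G$ vectors $\vec{v}_i \in \lat(B)$ within $\ell_p$-distance $r$ of $\vec{t}$ lifts to $(\vec{v}_i - \vec{t}, -u) \in \lat(B'')$ of norm at most $r'$, yielding $G$ short vectors. In a NO instance, the $\pm z$ symmetry of $\lat(B'')$ together with the definition of $A_{p,u}$ bounds the total number of \emph{nonzero} lattice vectors in $\lat(B'')$ of norm at most $\gamma r'$ by at most roughly $2 A_{p,u}(\lat(B), \gamma r', \vec{t}) \le 2 A$.

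Next, sparsify $\lat(B'')$ with a prime index $q$ chosen in the range $\Theta(A) \le q \le G/\Theta(1)$, which is possible since $G \ge 1000 A$ (via Bertrand's postulate); this requires only $\poly(\log A, \log G)$ bits. Sample $\vec{y} \sim \F_q^{n+1}$ uniformly and output the sublattice
\begin{equation*}
    \lat' := \set{ \vec{v} \in \lat(B'') : \iprod{(B'')^+ \vec{v}, \vec{y}} \equiv 0 \Mod{q} }
\end{equation*}
together with the distance threshold $r'$. For the NO case, a union bound in the style of \cref{lem:sparsification-upper-bound} shows that every one of the $O(A)$ nonzero vectors of norm at most $\gamma r'$ satisfies the random linear constraint with probability $1/q$, so all are eliminated (whence $\sm[p]_1(\lat') > \gamma r'$) with probability $\ge 1 - O(A)/q$. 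For the YES case, the $G$ lifts $(\vec{v}_i - \vec{t}, -u)$ all share the same nonzero last coordinate, so their coefficient vectors $(\vec{a}_i, -1)$ are pairwise non-scalar-multiples modulo $q$ (assuming $2r < q \cdot \sm[p]_1(\lat(B))$, which makes $\vec{v}_i - \vec{v}_j \notin q \lat(B)$); then \cref{lem:sparsification-lin-eqs}, \cref{en:sparsification-linind} applied with $\delta = 1/2$ shows that $\Omega(G/q) = \Omega(1)$ of them survive with constant probability, yielding $\sm[p]_1(\lat') \le r'$.

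The main obstacle is the YES-case preservation argument, specifically verifying the pairwise-non-proportionality condition required to invoke Chebyshev-style sparsification over the $G$ lifted vectors. This reduces to ensuring $2r < q \cdot \sm[p]_1(\lat(B))$, which is not directly implied by the $(A, G)$-$\GapCVP_{p,\gamma}$ promise and requires supplementary handling---either by observing that otherwise $N_p^o(\lat(B), 2r, \vec{0})$ is large (contradicting a refined NO-case count), or by a small preprocessing step that inflates $\sm[p]_1(\lat(B))$ (e.g., by standard tensoring or dualization tricks). Once this is in place, standard success-probability boosting via a constant number of independent repetitions yields the desired bounded-error randomized reduction, and efficient computability of a basis of $\lat'$ from $(B'', q, \vec{y})$ (see e.g.\ \cite[Claim~2.15]{conf/soda/Stephens-Davidowitz16}) gives the claimed $\poly(d, \log A, \log G)$ runtime.
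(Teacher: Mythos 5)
First, a point of reference: the paper does not prove this statement. It is imported verbatim from \cite[Theorem~3.2]{conf/stoc/AggarwalS18} and used as a black box; the surrounding text only notes that the proof ``combines Kannan's embedding and lattice sparsification.'' Your reconstruction follows exactly that strategy, and the set-up is right: the embedding $B''$, the threshold $r' = (r^p+u^p)^{1/p}$, the lifting of the $G$ close vectors to vectors of norm at most $r'$, and the use of the $\pm z$ symmetry to bound the nonzero vectors of $\lat(B'')$ of norm at most $\gamma r'$ by $2A_{p,u}(\lat(B), \gamma r', \vec{t}) \le 2A$ in the NO case (where the possibility that such a vector lies in $q\lat(B'')$ is itself excluded by the NO-case count once $q > 2A$).

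However, the obstacle you flag in the YES case is the actual crux of the proof, and neither of your proposed remedies closes it. The condition you need---that the coefficient vectors $(\vec{a}_i, -1)$ of the $G$ lifted close vectors are pairwise non-proportional modulo $q$---fails exactly when $\vec{v}_i - \vec{v}_j \in q\lat(B)$ for some $i \neq j$, i.e., when $\sm[p]_1(\lat(B)) \le 2r/q$. Your first fix (deriving a contradiction from a ``refined NO-case count'') cannot work, because this scenario arises in the YES case, and the YES promise of $(A,G)$-$\GapCVP_{p,\gamma}$ places no upper bound on any vector count. This is precisely where the problem differs from $(A,G)$-$\BDD_{p,\alpha}$: the latter's YES promise does bound $N_p^o(\lat \setminus \set{\vec{0}}, r/\alpha, \vec{0}) \le A$ and thereby lower-bounds $\sm[p]_1$, which is exactly how \cref{lem:reduction-decision-bdd} obtains the analogous condition for free---that argument does not transfer here. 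Your second fix (preprocessing to inflate $\sm[p]_1(\lat(B))$) is also problematic: tensoring or dualization changes the rank and does not obviously preserve the close-vector count, whereas the theorem demands output rank exactly $n+1$. The gap is not cosmetic: in the degenerate case the naive argument only guarantees a nonzero survivor of norm at most $2r$ (e.g., $\vec{y}_i - \vec{y}_j \in q\lat(B'') \subseteq \lat'$), which can exceed $r'$ when $u$ is small relative to $r$. Handling collisions modulo $q\lat$ among the close vectors requires the more careful sparsification statement and case analysis carried out in \cite{conf/stoc/AggarwalS18}, and you would need to supply that argument for the proof to be complete.
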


So, to show hardness of $\GapSVP_{p, \gamma'}$ for some $\gamma' \geq 1$, it suffices to show hardness of $(A, G)$-$\GapCVP_{p,\gamma'}$ with $G \geq 1000A$.
To do this, we give a reduction from $\binGapCVP_{p, \gamma}$ to $(A, G)$-$\GapCVP_{p,\gamma'}$ similar to the one in~\cite[Corollary 4.2]{conf/stoc/AggarwalS18} but with different choice of parameters and analysis. We note that~\cite[Corollary 4.2]{conf/stoc/AggarwalS18} introduced a similar transformation to the one in \cref{lem:reduction-gadget}, which we will again use.

The main idea behind the new choice of parameters is that scaling the input $\binGapCVP_{p, \gamma}$ instance $(B', \vec{t}')$ by $s = (\eps n)^{1/p}/2$ for some constant $\eps > 0$ makes it so that $s \cdot \dist_p(\vec{t}', \lat(B'))$---which is at most $s$ if $(B', \vec{t}')$ is a YES instance and at least $\gamma s$ if $(B', \vec{t}')$ is a NO instance---is within a multiplicative constant factor of $\dist_p(1/2 \cdot \vec{1}, \Z^n) = n^{1/p}/2$. This makes it so that, applying the transformation in \cref{lem:reduction-gadget} to ($B'$, $\vec{t}'$) to obtain ($B$, $\vec{t}$), we have that $\dist_p(\vec{t}, \lat(B))$ changes by a multiplicative constant factor depending on whether the input $\binGapCVP$ instance is a YES instance or a NO instance.
Similarly, we will set $u := (\eps_u n)^{1/p}/2$ for some constant $\eps_u > 0$.
This choice of parameters allows us to reduce from approximate $\binGapCVP$ to approximate $(A, G)$-$\GapCVP$; the corresponding result~\cite[Corollary 4.2]{conf/stoc/AggarwalS18} takes $s = 1$, $u = 1$ and only works as a reduction to exact $(A, G)$-$\GapCVP$.\footnote{We also note that~\cite[Corollary 4.2]{conf/stoc/AggarwalS18} is only stated as a reduction from $\GapCVP$ instances of a more specific form than $\binGapCVP$. Here we observe that all that's necessary there (respectively, here) is that the input $\GapCVP_p$ instances be $\binGapCVP_p$ instances (respectively, that the input $\GapCVP_{p, \gamma}$ instances be $\binGapCVP_{p, \gamma}$ instances with $\gamma > 1$); $\binGapCVP$ frequently appears in the study of the complexity of lattice problems, and so this shows that the reduction is quite natural.}

\begin{lemma} \label{lem:approx-cvp-to-approx-agcvp}
Let $p \in [1, \infty)$, $\eps > 0$, and $\gamma > 1$ be constants, and let $\eps_u := \gamma^p \eps/(3^p - 1)$. Then for all
\[
1 \leq \gamma' < \big(\frac{1 + \eps_u + \gamma^p \eps}{1 + \eps_u + \eps}\big)^{1/p}
\]
and $n', n \in \Z^+$ satisfying $n' \leq n \leq \poly(n')$,
there is a deterministic Karp reduction from $\binGapCVP_{p, \gamma}$ on lattices of rank $n'$ to $(A, G)$-$\GapCVP_{p, \gamma'}$ on lattices of rank $n$, where
\[
A := c \cdot N_p^o(\Z^n, (1 + \eps_u + \gamma^p \eps)^{1/p} \cdot n^{1/p}/2, \vec{0} ), \qquad G := 2^{n - n'}
\]
for some constant $c = c(p, \eps, \gamma, \gamma')$.
\end{lemma}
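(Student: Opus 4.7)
The reduction instantiates \cref{lem:reduction-gadget} with scaling factor $s := (\eps n)^{1/p}/2$ and gadget $(\Bdag, \tdag) := (\mat{I}_{n-n'}, \tfrac{1}{2} \vec{1}_{n-n'})$, outputting the $(A, G)$-$\GapCVP_{p,\gamma'}$ instance $(\mat{B}, \vec{t}, r, u)$ with $r := ((1+\eps)n)^{1/p}/2$ and $u := (\eps_u n)^{1/p}/2$; this is a deterministic polynomial-time Karp reduction. For the YES case, every vector in $\set{0,1}^{n-n'}$ is at $\ell_p$-distance exactly $(n-n')^{1/p}/2$ from $\tdag$, so applying \cref{lem:reduction-gadget}, \cref{en:gadget-close-lb} with gadget radius $(n-n')^{1/p}/2$ and using $s^p + n'/2^p + (n-n')/2^p = (1+\eps) n/2^p = r^p$ yields $N_p(\lat, r, \vec{t}) \geq 2^{n-n'} = G$.

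For the NO case I will bound $A_{p,u}(\lat, r^*, \vec{t}) = \sum_{z=0}^{\lfloor r^*/u \rfloor} N_p(\lat, R_z, z\vec{t}) - 1$, where $r^* := \gamma'(r^p+u^p)^{1/p}$ and $R_z := ((r^*)^p - z^p u^p)^{1/p}$, by splitting on the parity of $z$. The driving algebra is that the constraint on $\gamma'$ is equivalent to both $(r^*)^p < (1+\eps_u+\gamma^p\eps) n/2^p$ (which will handle even $z$) and $R_1^p - \gamma^p s^p < n/2^p$ (which will handle $z=1$), while the tailored choice $\eps_u = \gamma^p \eps/(3^p - 1)$ additionally yields $R_3^p < n/2^p$ (which will handle odd $z \geq 3$).

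For each even $z$, since $(z/2)\vec{1}_n \in \Z^n$, the bijection $(s\mat{B}'\vec{x}, \vec{x}, \vec{y}) \mapsto (\vec{x}, \vec{y})$ from $\lat$ into $\Z^n$ (analogous to the proof of \cref{en:gadget-short-ub} of \cref{lem:reduction-gadget}) combined with integer translation gives $N_p(\lat, R_z, z\vec{t}) \leq N_p(\Z^n, R_z, \vec{0}) \leq N_p^o(\Z^n, (1+\eps_u+\gamma^p\eps)^{1/p} n^{1/p}/2, \vec{0}) = A/c$, using $R_z \leq r^*$ and the constraint on $\gamma'$. For $z=1$, applying \cref{en:gadget-too-close-ub} of \cref{lem:reduction-gadget} with $\dist_p(\vec{t}', \lat') > \gamma$ and gadget radius $r_g := (R_1^p - \gamma^p s^p)^{1/p}$ (when nonnegative; otherwise the count is trivially zero, again from $\dist_p(\vec{t}', \lat') > \gamma$) bounds $N_p(\lat, R_1, \vec{t}) \leq N_p^o(\Z^n, r_g, \tfrac{1}{2} \vec{1}_n)$, and $r_g < n^{1/p}/2$ forces this count to be zero, since every integer point has $\ell_p$-distance at least $n^{1/p}/2$ from $\tfrac{1}{2} \vec{1}_n$ (with equality only on the $2^n$ binary vectors). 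For odd $z \geq 3$, translating by $((z-1)/2) \vec{1}_n \in \Z^n$ gives $N_p(\lat, R_z, z\vec{t}) \leq N_p(\Z^n, R_z, \tfrac{1}{2} \vec{1}_n)$, and $R_z \leq R_3 < n^{1/p}/2$ forces this to vanish as well by the same distance argument.

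Summing, only the even $z \in [0, \lfloor r^*/u \rfloor]$ contribute nonzero terms, each at most $A/c$; since $r^*/u = \gamma' ((1+\eps+\eps_u)/\eps_u)^{1/p}$ depends only on $p, \eps, \gamma, \gamma'$, the number of such terms is a constant, and choosing $c = c(p, \eps, \gamma, \gamma')$ to be (at least) that count yields $A_{p,u}(\lat, r^*, \vec{t}) \leq c \cdot A/c - 1 \leq A$, as required. The main subtlety is the simultaneous calibration of $s$, $u$, and $\eps_u$: the constraint on $\gamma'$ must exactly zero out both the $z = 1$ term (through the part-3 gadget bound, using $\dist_p(\vec{t}', \lat') > \gamma$) and, together with the specific choice $\eps_u = \gamma^p \eps/(3^p - 1)$, all odd $z \geq 3$ terms, while still leaving the even-$z$ counts tightly bounded by the $z = 0$ count.
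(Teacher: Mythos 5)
Your proof is correct and follows essentially the same approach as the paper's: you instantiate the transformation of \cref{lem:reduction-gadget} with $\mat{B}^\dagger = \mat{I}_{n-n'}$, $\vec{t}^\dagger = \tfrac{1}{2}\vec{1}$, $s = (\eps n)^{1/p}/2$, $r = ((1+\eps)n)^{1/p}/2$, $u = (\eps_u n)^{1/p}/2$, bound the YES case via part~\ref{en:gadget-close-lb} of the lemma, and then split the $A_{p,u}$ sum by the parity of $z$, with the even-$z$ terms bounded by the $N_p^o(\Z^n,\cdot,\vec{0})$ count, the $z=1$ term killed via part~\ref{en:gadget-too-close-ub} and the NO promise, and the odd $z \ge 3$ terms killed by the choice $\eps_u = \gamma^p\eps/(3^p-1)$ and the half-integer shift, exactly as the paper does. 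The only cosmetic difference is that you track the actual radii $R_z$ and show $R_3^p < n/2^p$ and $r_g^p < n/2^p$ directly, whereas the paper first uses the strict $\gamma'$ constraint to replace the closed-ball counts $N_p$ with open-ball counts $N_p^o$ at a uniform family of (slightly larger) radii; the two bookkeeping conventions are interchangeable.
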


\begin{proof}
Let $B'$, $\vec{t}'$ be the input instance of $\binGapCVP_{p, \gamma}$.
We will again use the transformation in \cref{lem:reduction-gadget} with $B^{\dagger} = I_{n - n'}$ and $\vec{t}^{\dagger} = 1/2 \cdot \vec{1}$. Specifically, we set
\[
    \mat{B} := \begin{pmatrix} s \mat{B}' & \mat{0} \\ \mat{I}_{n'} & \mat{0} \\ \mat{0} & I_{n - n'} \end{pmatrix}
    \ \text, \qquad
    \vec{t} := \begin{pmatrix} s \vec{t}' \\ \frac{1}{2} \vec{1} \\ \frac{1}{2} \vec{1} \end{pmatrix}
    \ \text,
\]
with $s := (\eps n)^{1/p}/2$, set $r := (s^p + n/2^p)^{1/p} = ((1 + \eps) \cdot n)^{1/p}/2$, and set $u := (\eps_u n)^{1/p}/2$ (where $\eps_u := \gamma^p \eps/(3^p - 1)$). 
The output $(A, G)$-$\GapCVP_{p, \gamma'}$ instance consists of $\mat{B}$, $\vec{t}$, $r$, and $u$. It is clear that the reduction runs in polynomial time, and remains to prove correctness of the reduction.

Let $\lat = \lat(B)$. Suppose that the input $\binGapCVP_{p, \gamma}$ instance is a YES instance. Then there exists $\vec{x} \in \bit^{n'}$ such that $\norm{B' \vec{x} - \vec{t}'}_p \leq 1$, so applying \cref{lem:reduction-gadget}, \cref{en:gadget-close-lb} we get
\begin{align*}
N_p(\lat, r, \vec{t}) &= N_p(\lat, (s^p + n'/2^p + (n - n')/2^p)^{1/p}, \vec{t}) \\
&\geq N_p(\Z^{n - n'}, (n - n')^{1/p}/2, 1/2 \cdot \vec{1}) \\
&= 2^{n - n'} = G
\ \text.
\end{align*}
It follows that the output $(A, G)$-$\GapCVP_{p, \gamma'}$ instance is a YES instance.

Now, suppose that the input $\binGapCVP_{p, \gamma}$ instance is a NO instance. Then
\begin{align*}
A_{p,u}(\lat, \gamma' \cdot (r^p + u^p)^{1/p}, \vec{t}) &= \sum_{z = 0}^{\lfloor \gamma' (r^p/u^p + 1)^{1/p} \rfloor} N_p(\lat, ((\gamma')^p r^p + ((\gamma')^p- z^p) u^p)^{1/p}, z \cdot \vec{t}) - 1 \\
&= \sum_{z = 0}^{\lfloor \gamma' ((1+\eps + \eps_u)/\eps_u)^{1/p} \rfloor} N_p(\lat, ((1 + \eps + \eps_u) \cdot (\gamma')^p - \eps_u \cdot z^p)^{1/p} \cdot n^{1/p}/2, z \cdot \vec{t}) - 1 \\
&\leq \sum_{z = 0}^{\lfloor \gamma' ((1+\eps + \eps_u)/\eps_u)^{1/p} \rfloor} N_p^o\big(\lat, \big(1 + \gamma^p \eps \big(1 - \frac{z^p - 1}{3^p - 1}\big)\big)^{1/p} \cdot n^{1/p}/2, z \cdot \vec{t} \big)
\ \text,
\end{align*}
where the inequality follows by the strict upper bound on the choice of $\gamma'$ and the definition of $\eps_u$.

We will upper bound summands in the above sum according to three cases. First, consider summands in which $z$ is even. By \cref{lem:reduction-gadget}, \cref{en:gadget-short-ub}, and the fact that $z/2 \cdot \vec{1} \in \Z^n$,
\begin{align}
N_p^o\big(\lat, \big(1 + \gamma^p \eps \big(1 - \frac{z^p - 1}{3^p - 1}\big)\big)^{1/p} \cdot n^{1/p}/2, z \cdot \vec{t} \big)
&\leq N_p^o(\lat, (1 + \eps_u + \gamma^p \eps)^{1/p} \cdot n^{1/p}/2, z \cdot \vec{t}) \nonumber \\
&\leq N_p^o(\Z^n, (1 + \eps_u + \gamma^p \eps)^{1/p} \cdot n^{1/p}/2, z/2 \cdot \vec{1}) \nonumber \\
&= N_p^o(\Z^n, (1 + \eps_u + \gamma^p \eps)^{1/p} \cdot n^{1/p}/2, \vec{0}) \ \text.
\label{eq:Aps-even-z}
\end{align}
Second, consider the case where $z = 1$. Then using the fact that $\dist_p(\vec{t}', \lat(B')) > \gamma$ and applying \cref{lem:reduction-gadget}, \cref{en:gadget-too-close-ub} we have that
\begin{align*}
N_p^o\big(\lat, \big(1 + \gamma^p \eps \big(1 - \frac{z^p - 1}{3^p - 1}\big)\big)^{1/p} \cdot n^{1/p}/2, z \cdot \vec{t} \big)
&= N_p^o(\lat, (1 + \gamma^p \eps)^{1/p} \cdot n^{1/p}/2, \vec{t}) \\
&\leq N_p^o(\Z^n, ((1 + \gamma^p \eps) \cdot n/2^p - \gamma^p s^p)^{1/p}, 1/2 \cdot \vec{1}) \\
&= N_p^o(\Z^n, n^{1/p}/2, 1/2 \cdot \vec{1})
= 0 \ \text,
\end{align*}
where the last equality follows by noting that because $\dist_p(1/2 \cdot \vec{1}, \Z^n) = n^{1/p}/2$ there is no integer point in the open ball of radius $n^{1/p}/2$.
Third, consider the case where $z$ is odd and $z \geq 3$. Then, by \cref{lem:reduction-gadget}, \cref{en:gadget-short-ub} and the fact that $\Z^n - z/2 \cdot \vec{1} = \Z^n - 1/2 \cdot \vec{1}$, we have
\[
\begin{aligned}
N_p^o\big(\lat, \big(1 + \gamma^p \eps \big(1 - \frac{z^p - 1}{3^p - 1}\big)\big)^{1/p} \cdot n^{1/p}/2, z \cdot \vec{t} \big)
&\leq N_p^o(\lat, n^{1/p}/2, z \cdot \vec{t}) \\
&\leq N_p^o(\Z^n, n^{1/p}/2, 1/2 \cdot \vec{1})
= 0
\ \text.
\end{aligned}
\]

The fact that $A_{p,u}(\lat, \gamma' \cdot (r^p + u^p)^{1/p}, \vec{t}) \leq A$ then follows by noting that the sum $A_{p,u}(\lat, \gamma' \cdot (r^p + u^p)^{1/p}, \vec{t})$ has a constant number of terms, that summands in which $z$ is even are upper bounded by the right-hand side in \cref{eq:Aps-even-z}, and that summands in which $z$ is odd are equal to $0$. Hence the output $(A, G)$-$\GapCVP_{p, \gamma'}$ instance is a NO instance.
\end{proof}

It remains to set the parameters so that $G \geq 1000 A$ in \cref{lem:approx-cvp-to-approx-agcvp}. 
Following~\cite{conf/stoc/AggarwalS18},\footnote{Actually,~\cite{conf/stoc/AggarwalS18} defines $C_p$ using the quantity $W_p := \min_{\tau > 0} \exp(\tau/2^p) \Theta_p(\tau)$. However, $W_p$ is equivalent to $\betafun{0}{1/2}$ by \cref{clm:beta-properties}, \cref{en:beta-via-min-tau}, and therefore \cref{eq:Cp-def} is equivalent to the definition of $C_p$ in~\cite{conf/stoc/AggarwalS18}.}
let $p_0 \approx 2.1397$ be the unique solution to the equation $\betafun{0}{1/2} = 2$, and for $p > p_0$ define
\begin{equation}
C_p := \frac{1}{1 - \log_2 \betafun{0}{1/2}}
\ \text.
\label{eq:Cp-def}
\end{equation}
This quantity is essentially the multiplicative rank-increase factor that our reduction from $\binGapCVP_{p, \gamma}$ to $\GapSVP_{p, \gamma'}$ incurs, and hence it determines the quality of the best runtime lower bound on $\GapSVP_{p, \gamma'}$ that we are able show (assuming randomized Gap-SETH).
We note that although we do not have a closed-form expression for $C_p$ itself, we can compute it to high accuracy and also have a rather tight closed-form upper bound on it; see \cref{fig:Cp-plot} and \cref{eq:Cp-closed-form-bound}.

\begin{lemma} \label{lem:reduction-svp}
For every $p > p_0$, $C > C_p$, and $\gamma > 1$, there exists $\gamma' = \gamma'(p, C, \gamma) > 1$ such that there is an efficient randomized reduction from $\binGapCVP_{p, \gamma}$ in rank $n'$ to $\GapSVP_{p, \gamma'}$ in rank $C n' + 1$ for all sufficiently large $n'$.
\end{lemma}

\begin{proof}
By \cref{thm:agcvp-to-svp}, it suffices to reduce $\binGapCVP_{p, \gamma}$ in rank $n'$ with $\gamma > 1$ to $(A, G)$-$\GapCVP_{p, \gamma'}$ in rank $n = C n'$ for some $\gamma' = \gamma'(p, C, \gamma) > 1$ so that $G \geq 1000 A$ for all sufficiently large $n'$.
To do this, by \cref{lem:approx-cvp-to-approx-agcvp}, it suffices to show the existence of $\eps = \eps(p, C, \gamma) > 0$ so that $G \geq 1000 A$ for all sufficiently large $n'$, where
\begin{align*}
    A &= c \cdot N_p^o\big(\Z^n, (1 + \eps_u + \gamma^p \eps)^{1/p} \cdot n^{1/p}/2, \vec{0} \big)
    \leq c \cdot \betafun{0}{(1 + \eps_u + \gamma^p \eps)^{1/p} \cdot 1/2}^n
    \ \text, \\
    G &= 2^{n - n'} = 2^{(1 - 1/C)n}
    \ \text,
\end{align*}
and $\eps_u = \gamma^p \eps/(3^p - 1)$.
(Here the upper bound on $A$ is given by \cref{clm:beta-properties}, \cref{en:beta-approx-Np}.)
Indeed, then we can take $\gamma'$ to be any value in the
range $1 < \gamma' < \big(\frac{1 + \eps_u + \gamma^p \eps}{1 + \eps_u + \eps}\big)^{1/p}$.

By taking $n$-th roots of $A$ and $G$, we see that for $G \geq 1000 A$ to hold for all sufficiently large $n'$, it suffices to have
\begin{equation*}
    \betafun{0}{(1 + \eps_u + \gamma^p \eps)^{1/p} \cdot 1/2} < 2^{1 - 1/C}
    \ \text.
\end{equation*}
Then because $\betafun{0}{a}$ is strictly increasing in $a$, it suffices to have
\begin{equation*}
    (1 + \eps_u + \gamma^p \eps)^{1/p} \cdot 1/2 < \betainv{0}{2^{1 - 1/C}}
    \ \text.
\end{equation*}
By recalling the definition of $C_p$ in \cref{eq:Cp-def}, we know that $\betainv{0}{2^{1 - 1/C}} > 1/2$ holds for $C > C_p$,
and hence there exists $\eps > 0$ satisfying the inequality.
More specifically, one can check that the inequality is satisfied by taking any $\eps$ in the
(non-empty) range
\begin{equation*}
    0 < \eps < \frac{(2 \betainv{0}{2^{1 - 1/C}})^p - 1}{\gamma^p \cdot (1 + 1/(3^p - 1))}
    \ \text.
    \qedhere
\end{equation*}
\end{proof}

Combining \cref{thm:gapseth-cvp} and \cref{lem:reduction-svp} we immediately get \cref{thm:gapseth-svp-informal}, which we restate below.

\gapsethsvp*

We note that because the multiplicative rank increase factor $C$ in \cref{lem:reduction-svp} is \emph{strictly} greater than $C_p$, the $1 - \eps$ factor in the runtime lower bound exponent in \cref{thm:gapseth-cvp} gets absorbed by $C$.

We conclude by showing a closed-form upper bound on $C_p$ based on the following bound on the function $\betafun{0}{a}$.

\begin{claim}
\label{clm:beta-0-ub}
For any $p \in [1, \infty)$ and $a > 0$,
$\betafun{0}{a} \le \exp(\operatorname{arcsinh}(a^p) + \operatorname{arcsinh}(a^{-p}) \cdot a^p)$.
\end{claim}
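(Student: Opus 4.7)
The plan is to bound $\Theta_p(\tau, 0)$ from above by a simple closed-form expression, then apply \cref{clm:beta-properties}, \cref{en:beta-via-min-tau} (which says $\betafun{0}{a} = \min_{\tau > 0} \exp(\tau a^p) \cdot \Theta_p(\tau, 0)$) and pick a specific, near-optimal value of $\tau$.

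First, since $p \ge 1$ we have $z^p \ge z$ for every integer $z \ge 1$, so
\begin{equation*}
    \Theta_p(\tau, 0) = 1 + 2 \sum_{z=1}^{\infty} \exp(-\tau z^p)
    \le 1 + 2 \sum_{z=1}^{\infty} \exp(-\tau z)
    = 1 + \frac{2}{\exp(\tau) - 1}
    = \coth(\tau/2)
    \ \text.
\end{equation*}
Combined with \cref{clm:beta-properties}, \cref{en:beta-via-min-tau}, this gives $\betafun{0}{a} \le \min_{\tau > 0} \exp(\tau a^p) \cdot \coth(\tau/2)$, which is at most $\exp(\tau a^p) \cdot \coth(\tau/2)$ for any fixed $\tau > 0$.

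Next, I would choose $\tau^{*} := \operatorname{arcsinh}(a^{-p})$, so that $\sinh(\tau^{*}) = a^{-p}$ and hence $\cosh(\tau^{*}) = \sqrt{1 + a^{-2p}}$. Using the half-angle identity $\coth(\tau/2) = (\cosh(\tau) + 1)/\sinh(\tau)$,
\begin{equation*}
    \coth(\tau^{*}/2)
    = \frac{\sqrt{1 + a^{-2p}} + 1}{a^{-p}}
    = a^p + \sqrt{a^{2p} + 1}
    = \exp(\operatorname{arcsinh}(a^p))
    \ \text,
\end{equation*}
where the last equality is the standard identity $\operatorname{arcsinh}(x) = \log(x + \sqrt{x^2 + 1})$. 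Plugging $\tau^{*}$ into the bound yields
\begin{equation*}
    \betafun{0}{a}
    \le \exp(\tau^{*} a^p) \cdot \coth(\tau^{*}/2)
    = \exp(\operatorname{arcsinh}(a^{-p}) \cdot a^p + \operatorname{arcsinh}(a^p))
    \ \text,
\end{equation*}
which is exactly the claimed inequality.

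There is no real obstacle here; the only step that requires any thought is choosing the right test value $\tau^{*}$. One can motivate it by noting that differentiating $\tau a^p + \log \coth(\tau/2)$ in $\tau$ and setting it to zero gives $a^p = 1/\sinh(\tau)$, i.e.\ $\tau = \operatorname{arcsinh}(a^{-p})$; so this $\tau^{*}$ is actually the minimizer of the upper-bound expression, and the resulting bound is the tightest one obtainable from $\Theta_p(\tau, 0) \le \coth(\tau/2)$.
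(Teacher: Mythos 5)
Your proof is correct and takes essentially the same approach as the paper: both bound $\Theta_p(\tau,0) \le \coth(\tau/2)$ (via $z^p \ge z$ for integer $z \ge 1$, which the paper delegates to a cited lemma from prior work and you derive directly) and then plug in $\tau = \operatorname{arcsinh}(a^{-p})$, which is the exact minimizer of the resulting upper bound. The calculations match; the only cosmetic difference is that the paper writes $2/(1-e^{-\tau}) - 1$ where you write $\coth(\tau/2)$.
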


\begin{proof}
The result is implicit in \cite[Lemma~3.11]{conf/coco/BennettP20}, where it is shown that for any $\sigma > 0$,
\begin{equation*}
    \min_{\tau > 0} \exp(\tau / \sigma) \cdot \Theta_p(\tau, 0) \le \min_\tau \exp(\tau / \sigma) \cdot \Bigl( \frac{2}{1 - \exp(-\tau)} - 1 \Bigr)
    \ \text,
\end{equation*}
and the minimizer for the right-hand side is $\tau = \operatorname{arcsinh}(\sigma)$.
Hence by setting $\sigma = a^{-p}$ we have
\begin{equation*}
\begin{aligned}[b]
    \betafun{0}{a}
    &= \min_{\tau > 0} \exp(\tau a^p) \cdot \Theta_p(\tau, 0) \\
    &\le \exp(\tau a^p) \cdot \Bigl( \frac{2}{1 - \exp(-\tau)} - 1 \Bigr) \Big|_{\tau = \operatorname{arcsinh}(a^{-p})} \\
    &= \exp(\operatorname{arcsinh}(a^p) + \operatorname{arcsinh}(a^{-p}) \cdot a^p)
    \ \text.
\end{aligned}
    \qedhere
\end{equation*}
\end{proof}

By \cref{clm:beta-0-ub} and the definition of $C_p$ in \cref{eq:Cp-def}, we get the following upper bound on $C_p$ for all sufficiently large $p$ (all $p$ greater than approximately $2.2241$ 
so that the denominator is positive):
\begin{equation}
    C_p \le \frac{\ln(2)}{\ln(2) - \operatorname{arcsinh}(1/2^p) - \operatorname{arcsinh}(2^p) / 2^p} 
    \ \text.
    \label{eq:Cp-closed-form-bound}
\end{equation}

\begin{figure}
    \centering
    \includegraphics[scale=0.8]{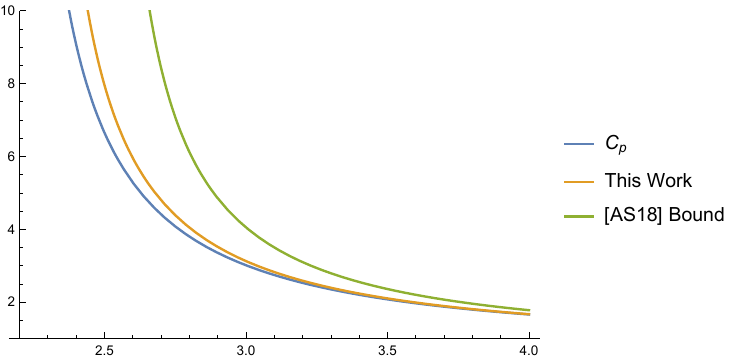}
    \caption{A plot of the multiplicative rank increase factor $C_p$ defined in \cref{eq:Cp-def} and used in \cref{lem:reduction-svp} together with the closed-form upper bounds on $C_p$ given by~\cite[Claim 4.4]{conf/stoc/AggarwalS18} and the bound in this work given in \cref{eq:Cp-closed-form-bound}.}
    \label{fig:Cp-plot}
\end{figure}

We remark that this gives a tighter closed-form bound on $C_p$ than the one in \cite[Claim~4.4]{conf/stoc/AggarwalS18}; see \cref{fig:Cp-plot}.
Furthermore, $\lim_{x \to 0} \operatorname{arcsinh}(x) = 0$ and $\lim_{x \to \infty} \operatorname{arcsinh}(x)/x = 0$, and it is easy to see $C_p \ge 1$, so \cref{eq:Cp-closed-form-bound} shows that $\lim_{p \to \infty} C_p = 1$.

\bibliography{hard-bdd-svp}
\bibliographystyle{alpha}

\appendix
\section{Proofs of Claims in the Preliminaries}
\label{sec:prelims-claims-proofs}

In this appendix, we restate and give proofs for two technical claims from the preliminaries.

\techone*

\begin{proof}%
For \cref{en:beta-approx-Np}, we note that because $a > t$ there exists a unique solution $\tau^*$ to $\mu_p(\tau^*, t) = a^p$ by \cref{clm:rp-equals-mu}. The result then follows from the definition of $\betafun{t}{a}$ and from \cref{thm:point-counts-thetas}.

For \cref{en:beta-via-min-tau}, a calculation shows that $\frac{\partial}{\partial \tau} \ln(\exp(\tau a^p) \cdot \Theta_p(\tau, t)) = a^p - \mu_p(\tau, t)$ and $\frac{\partial^2}{\partial \tau^2} \ln(\exp(\tau a^p) \cdot \Theta_p(\tau, t)) = -\frac{\partial}{\partial \tau} \mu_p(\tau, t) > 0$ (see e.g.\ the proof of \cref{clm:rp-equals-mu} for relevant calculations).
Hence $\tau = \tau^*$, where~$\tau^*$ is the unique solution to $\mu_p(\tau^*, t) = a^p$, is the global minimizer of $\exp(\tau a^p) \cdot \Theta_p(\tau, t)$, and therefore $\betafun{t}{a} = \exp(\tau^* a^p) \cdot \Theta_p(\tau^*, t) = \min_{\tau > 0} \exp(\tau a^p) \cdot \Theta_p(\tau, t)$, as desired.

For \cref{en:beta-continuity}, we start by proving that $\betafun{t}{a}$ is right-continuous at $a = t$, i.e., that $\lim_{a \to t^+} \betafun{t}{a} = 1$ for $t \in [0, 1/2)$ and $\lim_{a \to t^+} \betafun{t}{a} = 2$ for $t = 1/2$.
By the definition of $\betafun{t}{a}$, we have that for every $a > t$ there exists $\tau^* > 0$ such that
\begin{align}
    \betafun{t}{a}
    &= \sum_{z \in \Z} \exp(\tau^* (a^p - \abs{z-t}^p))
    \notag \\
    &= \exp(\tau^* (a^p - t^p)) + \exp(\tau^* (a^p - (1-t)^p)) + \sum_{z \in \Z \setminus \set{0, 1}} \exp(\tau^* (a^p - \abs{z-t}^p))
    \ \text. \label{eq:beta-def-expanded}
\end{align}
By examining the first two terms in \cref{eq:beta-def-expanded}, it is clear that for $a > t$ we have $\betafun{t}{a} > 1$ when $t \in [0, 1/2)$ and $\betafun{t}{a} > 2$ when $t = 1/2$.

We turn to showing that for every $t \in [0, 1/2]$,  $\lim_{a \to t^+} \betafun{t}{a} \le 1$ when $t \in [0, 1/2)$ and $\lim_{a \to t^+} \betafun{t}{a} \le 2$ when $t = 1/2$.
We have that for any $\tau > 0$ and $a \leq (t^p + 1/2)^{1/p}$,
\begin{align}
    \betafun{t}{a}
    &\leq \exp(\tau (a^p - t^p)) + \exp(\tau (a^p - (1-t)^p)) + \sum_{z \in \Z \setminus \set{0, 1}} \exp(\tau (a^p - \abs{z-t}^p)) \notag \\
    &\leq \exp(\tau (a^p - t^p)) + \exp(\tau (a^p - (1-t)^p)) + 2 \sum_{z = 1}^{\infty} \exp(\tau (1/2 - z)) \notag \\
    &= \exp(\tau (a^p - t^p)) + \exp(\tau (a^p - (1-t)^p)) + 2 \cdot \frac{\exp(-\tau / 2)}{1 - \exp(-\tau)} \ \text. \label{eq:beta-ub-small-a}
\end{align}
The first inequality follows from \cref{en:beta-via-min-tau} and \cref{eq:beta-def-expanded}.
The second inequality follows by noting that
\begin{align*}
    a^p - \abs{z - t}^p
    &\leq t^p + 1/2 - \abs{z - t}^p \\
    &\leq t^p + 1/2 - \abs{z}^p - t^p \\
    &\leq 1/2 - \abs{z}^p \\
    &\leq 1/2 - \abs{z}
\end{align*}
for $z \leq -1$, and 
\begin{align*}
    a^p - \abs{z - t}^p
    &\leq t^p + 1/2 - ((z - 1) + (1 - t))^p \\
    &\leq t^p + 1/2 - (z-1)^p - (1 - t)^p \\
    &\leq 1/2 - (z-1)^p \\
    &\leq 1/2 - (z-1)
\end{align*}
for $z \geq 2$, where we used the fact that $\abs{x + y}^p \geq \abs{x}^p + \abs{y}^p$ when $x$ and $y$ are both non-negative or both non-positive.
The equality follows by using the formula for summing geometric series.

The desired bounds for $\lim_{a \to t^+} \betafun{t}{a}$ then follow by taking the limits $\lim_{\tau \to \infty} \lim_{a \to t^+}$ on both sides of \cref{eq:beta-ub-small-a}, where the limits applied to the terms on the right-hand side act as follows:
\begin{align*}
    \lim_{\tau \to \infty} \lim_{a \to t^+} \exp(\tau (a^p - t^p)) &= \lim_{\tau \to \infty} 1 = 1
    \ \text, \\
    \lim_{\tau \to \infty} \lim_{a \to t^+} \exp(\tau (a^p - (1 - t)^p)) &= \begin{cases}
        \lim_{\tau \to \infty} \exp(\tau (t^p - (1 - t)^p)) = 0
        \ \text, &
        t \in [0, 1/2)
        \ \text, \\
        \lim_{\tau \to \infty} 1 = 1
        \ \text, &
        t = 1/2
        \ \text,
    \end{cases} \\
    \lim_{\tau \to \infty} \lim_{a \to t^+} \frac{\exp(-\tau / 2)}{1 - \exp(-\tau)} &= 0
    \ \text.
\end{align*}

We have shown that $\betafun{t}{a}$ is right-continuous at $t = a$ for every $t \in [0, 1/2]$. To finish proving \cref{en:beta-continuity},
it then suffices to show that $\betafun{t}{a}$ is strictly increasing and differentiable for $a > t$.
Let $\tau^* = \tau^*(a)$ be the unique solution to $\mu_p(\tau^*, t) = a^p$.
We calculate the derivative as follows:
\begin{align*}
    \frac{\mathrm{d}}{\mathrm{d} a} \betafun{t}{a}
    &= \frac{\mathrm{d} \tau^*(a)}{\mathrm{d} a} \frac{\partial}{\partial \tau^*} \exp(\tau^* a^p) \cdot \Theta_p(\tau^*, t) + \frac{\partial}{\partial a} \exp(\tau^* a^p) \cdot \Theta_p(\tau^*, t) \\
    &= 0 + \frac{\partial}{\partial a} \exp(\tau^* a^p) \cdot \Theta_p(\tau^*, t) \\
    &= \tau^*(a) \cdot p a^{p-1} \cdot \betafun{t}{a} > 0
    \ \text,
\end{align*}
where the first equality regards $\exp(\tau^* a^p) \cdot \Theta_p(\tau^*, t)$ as a function with two variables $a, \tau^*$ and uses the chain rule to get its total derivative at $(a, \tau^*(a))$,
and the zero term in the second equality follows from \cref{en:beta-via-min-tau}: $\tau^* = \tau^*(a)$ is the minimizer of $\exp(\tau^* a^p) \cdot \Theta_p(\tau^*, t)$ so $\frac{\partial}{\partial \tau^*} \exp(\tau^* a^p) \cdot \Theta_p(\tau^*, t)  = 0$ at $(a, \tau^*(a))$.

\cref{en:beta-inverse-continuity} immediately follows from \cref{en:beta-continuity}.
\end{proof}

\techtwo*

\begin{proof}%
Let $S_k = \sum_{z \in Z} \exp(-\tau \abs{z - t}^p) \cdot \abs{z - t}^{k p}$.
Then $\Theta_p(\tau, t) = S_0$ and $\mu_p(\tau, t) = S_1 / S_0$.
Observe that by definition of integration, we have
\begin{equation*}
    \begin{aligned}
        \lim_{\tau \to 0} S_k \cdot \tau^{k + 1/p}
        &= \lim_{\tau \to 0} \sum_{z \in Z} \exp(-\tau \abs{z - t}^p) \cdot (\tau \abs{z - t}^p)^k \cdot \tau^{1/p} \\
        &= \int_{-\infty}^\infty \exp(-\abs{\zeta}^p) \cdot \abs{\zeta}^{k p} \cdot \mathrm{d} \zeta \\
        &= 2 \int_0^\infty \exp(-x) \cdot x^k \cdot (x^{1/p - 1} / p) \mathrm{d} x \\
        &= 2 \Gamma(k + 1/p) / p
        \ \text.
    \end{aligned}
\end{equation*}
Therefore we immediately get \cref{en:theta-limit} and \cref{en:mu-limit} as
\begin{align*}
    \lim_{\tau \to 0} \Theta_p(\tau, t) \cdot \tau^{1/p}
    &= \lim_{\tau \to 0} S_0 \cdot \tau^{1/p}
    = 2 \Gamma(1/p) / p
    = 2 \Gamma(1 + 1/p)
    \ \text, \\
    \lim_{\tau \to 0} \mu_p(\tau, t) \cdot \tau
    &= \lim_{\tau \to 0} (S_1 \cdot \tau^{1 + 1/p}) / (S_0 \cdot \tau^{1/p})
    = \Gamma(1 + 1/p) /  \Gamma(1/p)
    = 1 / p
    \ \text.
\end{align*}

For \cref{en:beta-limit} , note that $\betafun{t}{a} = \exp(\tau^* a^p) \cdot \Theta_p(\tau^*, t)$ where $\tau^*$ is the unique solution to $\mu_p(\tau^*, t) = a^p$.
By \cref{en:mu-limit}, we know that $\lim_{\tau^* \to 0} \mu_p(\tau^*, t) \cdot \tau^* = 1 / p$ is a constant, and therefore that $\lim_{\tau^* \to 0} \mu_p(\tau^*, t) = \infty$. Thus by the monotonicity of $\mu_p(\tau, t)$ in $\tau$ (see e.g.\ the proof of \cref{clm:rp-equals-mu}), we know that $\lim_{a \to \infty} \tau^* = 0$.
Then $\lim_{a \to \infty} \tau^* a^p = 1 / p$, and, using \cref{en:theta-limit},
\begin{equation*}
    \lim_{a \to \infty} \betafun{t}{a} / a
    = \lim_{a \to \infty} \frac{\exp(\tau^* a^p) \cdot \Theta_p(\tau^*, t) \cdot (\tau^*)^{1/p}}{(\tau^*)^{1/p} \cdot a}
    = \frac{\exp(1/p) \cdot 2 \Gamma(1 + 1/p)}{(1/p)^{1/p}}
    = 2 \Gamma(1 + 1/p) \cdot (e p)^{1/p}
    \ \text.
    \qedhere
\end{equation*}
\end{proof}

We remark that \cref{en:beta-limit} of \cref{clm:beta-limits} can also be derived using the generic fact that for any lattice coset $\lat - \vec{t}$ and centrally symmetric convex body $K$, $\lim_{a \to \infty} \card{(\lat - \vec{t}) \cap aK}/\vol(aK) = 1/\det(\lat)$; here we can set $\lat = \Z^n$ and $K$ to be the unit $\ell_p$ ball in $n$ dimensions.

\end{document}